\documentclass[12pt,american]{article}
\usepackage[T1]{fontenc}
\usepackage{geometry}
\geometry{verbose,tmargin=1in,bmargin=1in,lmargin=1in,rmargin=1in}
\usepackage{color}
\usepackage{babel}
\usepackage{array}
\usepackage{booktabs}
\usepackage{mathrsfs}
\usepackage{enumitem}
\usepackage{bm}
\usepackage{multirow}
\usepackage{amsmath}
\usepackage{amsthm}
\usepackage{amssymb}
\usepackage{graphicx}
\usepackage{setspace}
\usepackage[authoryear]{natbib}
\doublespacing
\usepackage[unicode=true,pdfusetitle,
 bookmarks=true,bookmarksnumbered=false,bookmarksopen=false,
 breaklinks=false,pdfborder={0 0 1},backref=false,colorlinks=false]
 {hyperref}
\hypersetup{
 colorlinks,linkcolor={blue!50!black},citecolor={blue!50!black},urlcolor={blue}}

\makeatletter

\providecommand{\tabularnewline}{\\}

\theoremstyle{plain}
\newtheorem{assumption}{\protect\assumptionname}
\theoremstyle{plain}
\newtheorem{prop}{\protect\propositionname}
\theoremstyle{plain}
\newtheorem{thm}{\protect\theoremname}
\theoremstyle{definition}
\newtheorem{defn}{\protect\definitionname}
\theoremstyle{definition}
 \newtheorem{example}{\protect\examplename}
\theoremstyle{plain}
\newtheorem{lem}{\protect\lemmaname}

\usepackage{babel}
\usepackage{lmodern}
\usepackage{mathrsfs}
\usepackage{breakurl}
\usepackage{bbm}
\usepackage{tikz}
\usepackage{tikz-3dplot}
\usetikzlibrary{calc,fadings,decorations.pathreplacing}

\allowdisplaybreaks
\sloppy


\tikzset{%
  >=latex, 
  inner sep=0pt,%
  outer sep=2pt,%
  mark coordinate/.style={inner sep=0pt,outer sep=0pt,minimum size=3pt,
    fill=black,circle}%
}

\makeatletter
\newcommand{\customlabel}[2]{%
   \protected@write \@auxout {}{\string \newlabel {#1}{{#2}{\thepage}{#2}{#1}{}} }%
   \hypertarget{#1}{}
}
\makeatother

\usepackage[nolists]{endfloat}

\makeatother

\providecommand{\assumptionname}{Assumption}
\providecommand{\definitionname}{Definition}
\providecommand{\examplename}{Example}
\providecommand{\lemmaname}{Lemma}
\providecommand{\propositionname}{Proposition}
\providecommand{\theoremname}{Theorem}

\begin{document}
\global\long\def\a{\alpha}%
 
\global\long\def\b{\beta}%
 
\global\long\def\g{\gamma}%
 
\global\long\def\d{\delta}%
 
\global\long\def\e{\epsilon}%
 
\global\long\def\l{\lambda}%
 
\global\long\def\t{\theta}%
 
\global\long\def\o{\omega}%
 
\global\long\def\s{\sigma}%

\global\long\def\G{\Gamma}%
 
\global\long\def\D{\Delta}%
 
\global\long\def\L{\Lambda}%
 
\global\long\def\T{\Theta}%
 
\global\long\def\O{\Omega}%
 
\global\long\def\R{\mathbb{R}}%
 
\global\long\def\N{\mathbb{N}}%
 
\global\long\def\Q{\mathbb{Q}}%
 
\global\long\def\I{\mathbb{I}}%
 
\global\long\def\P{\mathbb{P}}%
 
\global\long\def\E{\mathbb{E}}%
\global\long\def\B{\mathbb{\mathbb{B}}}%
\global\long\def\S{\mathbb{\mathbb{S}}}%
\global\long\def\V{\mathbb{\mathbb{V}}\text{ar}}%

\global\long\def\X{{\bf X}}%
\global\long\def\cX{\mathscr{X}}%
 
\global\long\def\cY{\mathscr{Y}}%
 
\global\long\def\cA{\mathscr{A}}%
 
\global\long\def\cB{\mathscr{B}}%
 
\global\long\def\cM{\mathscr{M}}%
\global\long\def\cN{\mathcal{N}}%
\global\long\def\cG{\mathcal{G}}%
\global\long\def\cC{\mathcal{C}}%
\global\long\def\sp{\,}%

\global\long\def\es{\emptyset}%
 
\global\long\def\mc#1{\mathscr{#1}}%
 
\global\long\def\ind{\mathbf{\mathbbm1}}%
\global\long\def\indep{\perp}%

\global\long\def\any{\forall}%
 
\global\long\def\ex{\exists}%
 
\global\long\def\p{\partial}%
 
\global\long\def\cd{\cdot}%
 
\global\long\def\Dif{\nabla}%
 
\global\long\def\imp{\Rightarrow}%
 
\global\long\def\iff{\Leftrightarrow}%

\global\long\def\up{\uparrow}%
 
\global\long\def\down{\downarrow}%
 
\global\long\def\arrow{\rightarrow}%
 
\global\long\def\rlarrow{\leftrightarrow}%
 
\global\long\def\lrarrow{\leftrightarrow}%

\global\long\def\abs#1{\left|#1\right|}%
 
\global\long\def\norm#1{\left\Vert #1\right\Vert }%
 
\global\long\def\rest#1{\left.#1\right|}%

\global\long\def\bracket#1#2{\left\langle #1\middle\vert#2\right\rangle }%
 
\global\long\def\sandvich#1#2#3{\left\langle #1\middle\vert#2\middle\vert#3\right\rangle }%
 
\global\long\def\turd#1{\frac{#1}{3}}%
 
\global\long\def\ellipsis{\textellipsis}%
 
\global\long\def\sand#1{\left\lceil #1\right\vert }%
 
\global\long\def\wich#1{\left\vert #1\right\rfloor }%
 
\global\long\def\sandwich#1#2#3{\left\lceil #1\middle\vert#2\middle\vert#3\right\rfloor }%

\global\long\def\abs#1{\left|#1\right|}%
 
\global\long\def\norm#1{\left\Vert #1\right\Vert }%
 
\global\long\def\rest#1{\left.#1\right|}%
 
\global\long\def\inprod#1{\left\langle #1\right\rangle }%
 
\global\long\def\ol#1{\overline{#1}}%
 
\global\long\def\ul#1{\underline{#1}}%
 
\global\long\def\td#1{\tilde{#1}}%
\global\long\def\bs#1{\boldsymbol{#1}}%

\global\long\def\upto{\nearrow}%
 
\global\long\def\downto{\searrow}%
 
\global\long\def\pto{\overset{p}{\longrightarrow}}%
 
\global\long\def\dto{\overset{d}{\longrightarrow}}%
 
\global\long\def\asto{\overset{a.s.}{\longrightarrow}}%
\global\long\def\tJ{\tilde{{\cal J}}}%

\title{Identification of Semiparametric Panel Multinomial\\
Choice Models with Infinite-Dimensional Fixed Effects\thanks{Researcher(s)\textquoteright{} own analyses calculated (or derived)
based in part on data from Nielsen Consumer LLC and marketing databases
provided through the NielsenIQ Datasets at the Kilts Center for Marketing
Data Center at The University of Chicago Booth School of Business.
The conclusions drawn from the NielsenIQ data are those of the researcher(s)
and do not reflect the views of NielsenIQ. NielsenIQ is not responsible
for, had no role in, and was not involved in analyzing and preparing
the results reported herein.} \thanks{We thank Xiaohong Chen, Peter Phillips, and Phil Haile for their invaluable
advice and encouragement. We thank three anonymous referees for their
comments that significantly improved the paper. We thank Don Andrews,
Isaiah Andrews, Tim Armstrong, Xu Cheng, Tim Christensen, Ben Connault,
Francis Diebold, Bo Honor\'{e}, Joel Horowitz, Yuichi Kitamura, Patrick
Kline, Lixiong Li, Yuan Liao, Charles Manski, Aviv Nevo, Matt Seo,
Xiaoxia Shi, Frank Schorfheide, Elie Tamer, Ed Vytlacil, Rui Wang,
Sheng Xu and participants at various seminars and conferences for
helpful comments. We thank Wenli Lyu and Chuyue Tian for excellent
research assistance.}}
\author{Wayne Yuan Gao\thanks{Gao: Department of Economics, University of Pennsylvania, \protect\href{mailto:waynegao@upenn.edu}{waynegao@upenn.edu}.}
~and Ming Li\thanks{Li: Department of Economics and Risk Management Institute, National
University of Singapore, \protect\href{mailto:mli@nus.edu.sg}{mli@nus.edu.sg}.}}
\maketitle
\begin{abstract}
\noindent This paper proposes a robust method for semiparametric identification
and estimation in panel multinomial choice models, where we allow
for infinite-dimensional fixed effects that enter into consumer utilities
in an additively nonseparable way, thus incorporating rich forms of
unobserved heterogeneity. Our identification strategy exploits multivariate
monotonicity in parametric indices, and uses the logical contraposition
of an intertemporal inequality on choice probabilities to obtain identifying
restrictions. We provide a consistent estimation procedure, and demonstrate
the practical advantages of our method with Monte Carlo simulations
and an empirical illustration on popcorn sales with the NielsenIQ
data.
\end{abstract}
\newpage{}

\setlength{\abovedisplayskip}{3pt} 
\setlength{\belowdisplayskip}{3pt} 

\section{\label{sec:Intro}Introduction}

\noindent This paper proposes a method for semiparametric identification
and estimation in panel multinomial choice models, where we allow
for infinite-dimensional fixed effects that enter into consumer utilities
in an additively nonseparable manner. The proposed method also applies
more widely beyond panel multinomial choice models, and can be adapted
to a wide range of models characterized by \emph{multi-index single-crossing
conditions}, which we introduce later in this paper.

To fix ideas, we start with the following panel multinomial choice
model:
\[
y_{ijt}=\ind\left\{ u\left(X_{ijt}^{'}\b_{0},A_{ij},\e_{ijt}\right)\geq\max_{k\in\left\{ 1,\ldots,J\right\} }u\left(X_{ikt}^{'}\b_{0},A_{ik},\e_{ikt}\right)\right\} ,
\]
where agent $\mathit{i}$'s utility from a candidate product $j$
at time $t$, represented by $u(X_{ijt}^{'}\b_{0},A_{ij},\e_{ijt})$,
is taken to be a function of three components. The first is a linear
index $X_{ijt}^{'}\b_{0}$ of observable characteristics $X_{ijt}$,
which contains a finite-dimensional parameter of interest $\b_{0}$
we will identify and estimate. The second term $A_{ij}$ is an infinite-dimensional
fixed effect that can be heterogeneous across each agent-product combination.
We emphasize that $X_{ijt}$ and $A_{ij}$ can be arbitrarily dependent.
The last term $\e_{ijt}$ is an idiosyncratic time-varying error term
of arbitrary dimension. The three components are then aggregated by
an unknown utility function $u$ in an additively nonseparable way,
with the only restriction being that each agent's utility $u(X_{ijt}^{'}\b_{0},A_{ij},\e_{ijt})$
is\emph{ increasing} in its first argument. Each agent then chooses
a certain product in a given time period, represented by $y_{ijt}=1$,
if and only if this product gives her the highest utility among all
available products.

The infinite dimensionality of the terms $u$, $A_{ij}$, and $\e_{ijt}$,
together with the model\textquoteright s additively non-separable
interaction structure, jointly generates a rich class of unobserved
heterogeneity. Across each agent-product combination $ij$, we are
effectively allowing for nonparametric variations in agent utilities.
Such variation proxies for the effects of complicated unobserved factors
that influence choice behavior, such as brand loyalty, subtle flavors,
and unique styles of products. In addition, we work with a nonparametric
time homogeneity assumption on the error terms $\e_{ijt}$ that restricts
$\e_{ijt}$ and $\e_{ijs}$ from two periods $t$ and $s$ to have
the same marginal distribution given the observed covariates from
the two periods. Apart from this, we impose no parametric restrictions
on the distribution of $\e_{ijt}$ and no additional restrictions
on its dependence across time $t$ and products $j$. In particular,
the fully unrestricted dependence of $\e_{ijt}$ across products $j$
allows our framework to remain robust to the well-known ``Blue-Bus/Red-Bus''
problem and related pathologies that arise in many standard multinomial
choice models, as discussed in \citet*{berry2007pure}.\footnote{See the discussion after Assumption \ref{assu:EpsDist} in Section
\ref{subsec:PMC_Setup} for more details.}

The generality of our framework nests many semiparametric (and parametric)
panel multinomial choice models with scalar fixed effects, scalar
error terms, and varying degrees of additive separability, including
the following standard specification:
\[
y_{ijt}=\ind\left\{ X_{ijt}^{'}\b_{0}+A_{ij}+\e_{ijt}>\max_{k\in\left\{ 1,\ldots,J\right\} }\left(X_{ikt}^{'}\b_{0}+A_{ik}+\e_{ikt}\right)\right\} .
\]
Relative to existing work, our framework accommodates both the infinite
dimensionality of unobserved heterogeneity and non--additive separability
in agent utilities, under a standard time-homogeneity assumption on
the idiosyncratic error term that is widely used in the literature.

Our identification strategy leverages multivariate monotonicity in
contrapositive form. The intuition is straightforward: if the choice
probability of a given product (or subset of products) strictly \textit{increases}
from one period to the next, then it \textit{cannot} be that this
product (or all products in the subset) becomes \textit{worse} while
all other products become \textit{better} over the two periods. By
applying this contraposition to a carefully constructed inequality
in conditional choice probabilities, we obtain an identifying restriction
on the index values that is free of all infinite-dimensional nuisance
parameters. We further show that, in a two-period setting, the identified
set obtained by aggregating these restrictions across all product
subsets is sharp.

Based on our identification result, we provide consistent two-step
set (or point) estimators, together with a computational algorithm
adapted to the technical challenges of our framework. The first stage
takes the form of a standard nonparametric regression, where we estimate
a collection of intertemporal differences in conditional choice probabilities.
In the second stage, we numerically minimize our sample criterion
function with the first-stage estimates plugged in. A highlight of
our computational procedure is the adoption of a spherical-coordinate
reparameterization of our criterion functions in terms of \emph{angles},
which enables us to exploit a combination of topological, geometric
and computational advantages. A simulation study is conducted to analyze
the finite-sample performance of our method and the adequacy of our
computational procedure for practical implementation.

We also provide an empirical illustration of our procedure, where
we use the NielsenIQ data on popcorn sales in the United States to
explore the effects of marketing promotion. The results show that
our procedure produces estimates that conform well with economic intuition.
For example, we find that special in-store displays boost sales not
only through a direct promotion effect but also through the attenuation
of consumer price sensitivity. Intuitively, marketing managers are
more likely to promote products for which they know consumers are
more sensitive to price and promotion. Hence, the average effective
price sensitivities of promoted products tend to be larger than those
not promoted due to the selection effect. Given the non-additive nature
of such selection effects, estimators based on additive separability
will be biased. In contrast, our method is robust to such confounding
effects, thus producing more sensible estimates.

The validity of our identification strategy, as well as our estimation
procedure, relies solely on monotonicity in an index structure, and
thus extends naturally beyond panel multinomial choice models. We
also introduce the \emph{multi-index single-crossing (MISC)} condition
framework, a general econometric framework under which our method
can be applied. This framework encompasses the key ingredients of
a large class of models, such as binary choice models with awareness,
binary choice with endogeneity, dyadic network formation, bilateral
matching, and endogenous censoring.

We acknowledge several limitations of our identification approach.
First, our current model setup does not allow for time-varying endogeneity
between observed covariates $X_{ijt}$ and the error term $\e_{ijt}$,
which effectively rules out the inclusion of contemporaneously endogenous
covariates and/or lagged outcomes. See a subsequent paper by \citet{gao2023identification}
for a weaker version of the time homogeneity assumption that can be
exploited for identification in panel multinomial choice models with
 endogenous and dynamic covariates. Second, our current approach does
not allow for random coefficients on time-varying covariates. That
said, rich forms of time-invariant taste heterogeneity have been absorbed
into the nonparametric fixed effects under our current setting. Third,
in the short-panel setting, our current approach effectively ``differences
out'' the unobserved individual fixed effects $A_{i}$. As a result,
it cannot identify counterfactual parameters that depend on the distribution
of $A_{i}$. However, in long panels, our approach can be adapted
to identify counterfactual parameters, which we discuss in more detail
in Appendix \ref{sec:Counterfactual-Analysis}.

\medskip{}

This paper builds upon and contributes to a large literature on semiparametric
(and parametric) discrete choice models, dating back to \citet*{mcfadden1974conditional}
and \citet*{manski1975maximum}, and more specifically to the line
of literature on panel multinomial choice models. Our work is most
closely related to \citet*{pakes2016moment}, who also exploit weak
monotonicity and time homogeneity, but restrict the effect of unobserved
heterogeneity to be a scalar index that is additively separable from
the index of observable characteristics. \citet*{shi2017estimating}
exploit cyclical monotonicity of \emph{vector}-valued functions in
a fully additive panel multinomial choice model. \citet*{khan2021inference}
consider another additive model, but utilize the subsample of observations
with time-invariant covariates along \emph{all products but one} so
as to leverage univariate monotonicity. \citet*{honore2000panel}
also exploit univariate monotonicity when certain covariates across
two periods are equal in a dynamic panel setting. \citet*{chernozhukov2017nonseparable}
consider a model with an additive effect under an ``on-the-diagonal''
restriction (i.e., when covariates at two different time periods coincide).
By allowing non-additiveness in the specification of utility functions
and infinite-dimensional fixed effects, our method is different from
and thus complementary to those proposed in these aforementioned papers. 

This paper is also connected to the literature on the nonparametric
identification and estimation in discrete choice models (e.g., \citet{berry2014identification,compiani2022market}).
That literature assumes monotonicity restrictions of the demand functions
in product-market specific parametric indices to invert the demand
system. This is particularly useful as it leads to a system of equations
with only one unobservable per equation, from which the unobservable
product-market specific demand shifter can be successfully constructed.
Our paper considers a different model, but also leverages monotonicity
restrictions in parametric indices to facilitate identification and
estimation of the structural parameters. In both cases, the index
assumption restricts the amount of unobserved heterogeneity in preferences
for the variables included in the index.

More broadly, our work connects to the semiparametric literature on
the identification and estimation of models characterized by monotonicity
in a single parametric index. A related class of estimators that leverage
univariate monotonicity, known as \emph{maximum score }or\emph{ rank-order
estimator}s, dates back to a series of important contributions by
\citet*{manski1975maximum,manski1985semiparametric,manski1987semiparametric},
and is further investigated in \citet*{han1987non}, \citet*{horowitz1992smoothed},
\citet*{abrevaya2000rank}, \citet*{honore2002semiparametric}, \citet*{fox2007semiparametric},
and \citet*{yan2019semiparametric}.\footnote{We clarify that \citet{manski1975maximum}, \citet*{fox2007semiparametric}
and \citet*{yan2019semiparametric} consider multinomial choice models
with multiple parametric indices, but focus on settings where the
``multi-index'' problem can be reduced to leverage single-variate
monotonicity (or a single-variate rank-order property).} Despite the similar reliance on monotonicity, the\emph{ multi-index}
nature of our model, and more importantly the multivariate monotonicity
condition that we leverage, induce key differences from the \textit{single-index}
(and univariate monotonicity) setting, leading to a substantially
different estimation method relative to rank-order estimators.

\medskip{}

The rest of this paper is organized as follows. Section \ref{sec:PMC}
introduces our main model specifications and assumptions. Section
\ref{subsec:ID} presents our key identification strategy. In Section
\ref{sec:S_EstComp}, we provide consistent estimators along with
a computational procedure to implement it. Section \ref{sec:Ext_MMIM}
discusses the generalization of our method to the framework of multi-index
single-crossing conditions. Sections \ref{sec:Sim} and \ref{sec:Emp}
switch back to our main panel multinomial choice model, for which
we provide a simulation study and an empirical illustration using
the NielsenIQ data. We conclude in Section \ref{sec:Conclusion}.

\section{\label{sec:PMC}Panel Multinomial Choice Model}

\subsection{\label{subsec:PMC_Setup}Model and Assumptions}

In this section, we present a semiparametric panel multinomial choice
model featuring infinite-dimensional unobserved heterogeneity and
flexible forms of non-separability, which serves as the main framework
for illustrating our identification and estimation method.

Specifically, we consider the following model, in which individual
$i$ chooses product $j$ at time $t$ if and only if $i$ prefers
product $j$ to all other alternatives at time $t$:
\begin{align}
y_{ijt} & =\ind\left\{ u\left(X_{ijt}^{'}\b_{0},A_{ij},\e_{ijt}\right)>\max_{k\in\left\{ 1,\ldots,J\right\} \backslash\left\{ j\right\} }u\left(X_{ikt}^{'}\b_{0},A_{ik},\e_{ikt}\right)\right\} \label{eq:Model_PMC}
\end{align}
where:
\begin{itemize}
\item $i\in\{1,\ldots,N\}$ denotes $N$ \textit{individuals}.
\item $j\in{\cal J}:=\{1,\ldots,J\}$ denotes the set of $J$ choice alternatives,
with \textit{products} indexed by $1,\ldots,J$. Throughout this paper,
we treat the number of products $J$ as fixed.
\item $t\in\{1,\ldots,T\}$ denotes the $T\geq2$ time periods. In this
paper, we consider a short-panel setting in which $T$ is fixed.
\item $X_{ijt}$ is an $\R^{D}$-valued vector of observable characteristics
specific to each agent--product--time tuple $ijt$. These may include,
for example, buyer characteristics such as income, product characteristics
such as price and promotion status, as well as interaction and higher-order
terms of these variables.
\item $y_{ijt}$ is an observable binary variable, with $y_{ijt}=1$ indicating
that buyer $i$ chooses product $j$ at time $t$, and $y_{ijt}=0$
indicating otherwise.
\item $\b_{0}\in\R^{D}$ is the finite-dimensional parameter of interest.
\item $A_{ij}$ represents an $ij$-specific time-invariant unobserved heterogeneity
term of arbitrary dimension, which we refer to as the $ij$-specific
\textit{fixed effect}.
\item $\e_{ijt}$ is an $ijt$-specific unobserved error term of arbitrary
dimension, which captures time-idiosyncratic utility shocks to product
$j$ for agent $i$ at time $t$.
\item $u$ is an unknown function, interpreted as a \textit{utility function}
that aggregates the parametric index $X_{ijt}^{'}\b_{0}$, the fixed
effect $A_{ij}$, and the error term $\e_{ijt}$ into a scalar representing
agent $i$'s utility from choosing product $j$ at time $t$.
\end{itemize}
We first present our main modeling assumptions and then discuss these
assumptions in conjunction with our model specification \eqref{eq:Model_PMC}.
To economize on notation, we refer to the collection of variables
concatenated along product and time dimensions: $\X_{it}=(X_{ijt})_{j=1}^{J}$,
$\X_{i}=(\X_{it})_{t=1}^{T}$, ${\bf A}_{i}=(A_{ij})_{j=1}^{J}$,
$\boldsymbol{\e}_{it}=(\e_{ijt})_{j=1}^{J}$, and $\boldsymbol{\e}_{i}=(\boldsymbol{\e}_{it})_{t=1}^{T}$.
We also write $\d_{ijt}=X_{ijt}^{'}\b_{0}$ to denote the parametric
index. 
\begin{assumption}[Cross-Sectional Random Sampling]
\label{assu:RandSamp} $({\bf Y}_{i},\X_{i},{\bf A}_{i},\boldsymbol{\epsilon}_{i})$
is i.i.d. across $i\in\left\{ 1,\ldots,N\right\} $ with $N\to\infty$.
\end{assumption}
\noindent Assumption \ref{assu:RandSamp} is a standard assumption.\footnote{It is worth noting that we have not yet imposed any explicit restrictions
on the structure of the spaces in which the arbitrary-dimensional
random elements ${\bf A}_{i}$ and $\bs{\e}_{i}$ are defined. However,
implicit in our model specification and in Assumption \ref{assu:RandSamp}
is the requirement that $({\bf Y}_{i},\X_{i},{\bf A}_{i},{\bf \e}_{i})$
be well-defined as random elements---that is, measurable functions---on
a sufficiently rich probability space $(\O,\mathscr{F},\P)$.} Recall that the number of time periods $T$ is held fixed, and we
focus on a short panel setting with cross-sectional asymptotics.
\begin{assumption}[Monotonicity in the Index]
\label{assu:PMC_Mono} For every realization of $(A_{ij},\e_{ijt})$,
the mapping $\tilde{\d}\longmapsto u(\tilde{\d},A_{ij},\e_{ijt})$
is weakly increasing in the scalar-valued argument $\tilde{\d}$.
\end{assumption}
\noindent Essentially, Assumption \ref{assu:PMC_Mono} states that
$u(\d_{ijt},A_{ij},\e_{ijt})$, the utility of individual $i$ from
choosing product $j$ at time $t$, is weakly increasing\footnote{It should be clarified that increasingness is without loss of generality
given monotonicity, since the index $\d_{ijt}=X_{ijt}^{'}\b_{0}$
contains an unknown parameter with unrestricted signs.} in the index $\d_{ijt}$. Given the index structure, monotonicity
itself is a relatively mild assumption. In the standard panel multinomial
choice model with scalar-valued $A_{ij}$ and $\e_{ij}$ along with
additive $u$, i.e., $u(\d_{ijt},A_{ij},\e_{ijt})=\d_{ijt}+A_{ij}+\e_{ijt}$,
Assumption \ref{assu:PMC_Mono} is trivially satisfied (with strictness)
by construction.

In a way, Assumption \ref{assu:PMC_Mono} endows the index $\d_{ijt}$
and the parameter $\b_{0}$ with economic interpretations. Under Assumption
\ref{assu:PMC_Mono}, $\d_{ijt}$ may be considered as a quality measure
of the match between agent $i$ and product $j$ based on their observable
characteristics at time $t$, inducing an interpretation of $\b_{0}$
as representing how a certain change in a linear combination of observable
characteristics may increase utilities for \textit{all} agents from
a certain product $j$, \emph{ceteris paribus}. Hence, without Assumption
\ref{assu:PMC_Mono}, it would be hard to interpret $\b_{0}$. 

Moreover, the monotonicity restriction is imposed on $\d_{ijt}$,
but not directly on any specific observable characteristics in $X_{ijt}$:
quadratic or higher-order polynomial terms, and other functions of
observable characteristics can be included in $X_{ijt}$ whenever
appropriate.
\begin{assumption}[Pairwise Time Homogeneity]
\label{assu:EpsDist}The distributions of $\boldsymbol{\e}_{it}$
and $\boldsymbol{\e}_{is}$ conditional on $(\X_{it},\X_{is},{\bf A}_{i})$
across any pair of periods $t\neq s\in\{1,\ldots,T\}$ satisfy
\[
\rest{\boldsymbol{\e}_{it}}\left(\X_{it},\X_{is},{\bf A}_{i}\right)\rest{\sim\boldsymbol{\e}_{is}}\left(\X_{it},\X_{is},{\bf A}_{i}\right).
\]
\end{assumption}
\noindent Assumption \ref{assu:EpsDist}, a multinomial extension
of the group homogeneity assumption in \citet*{manski1987semiparametric},
is also a standard assumption in the literature on panel multinomial
choice models, such as in \citet*{chernozhukov2017nonseparable},
\citet*{shi2017estimating}, and \citet*{pakes2016moment}.\footnote{In particular, \citet*{pakes2016moment} investigate the following
panel multinomial choice model:
\begin{equation}
y_{ijt}=\ind\left\{ g_{j}\left(X_{ijt},\b_{0}\right)+f_{j}\left(A_{ij},\e_{ijt}\right)>\max_{k\neq j}g_{k}\left(X_{ikt},\b_{0}\right)+f_{k}\left(A_{ik},\e_{ikt}\right)\right\} ,\label{eq:Model_PakesPorter}
\end{equation}
where the function $g_{j}$ produces a potentially nonlinear parametric
index and $f_{j}$ aggregates fixed effects and idiosyncratic errors
into a scalar value in a nonseparable way, while additive separability
between the observable covariate index $g_{j}(X_{ijt},\b_{0})$ and
the unobserved heterogeneity index $f_{j}(A_{ij},\e_{ijt})$ is still
maintained. Moreover, although the dimensions of $A_{ij}$ and $\e_{ijt}$
are not restricted in \citet*{pakes2016moment}, their joint effect
is effectively summarized by a single scalar index $f_{j}(A_{ij},\e_{ijt})$.
We reiterate that our model \eqref{eq:Model_PMC} not only incorporates
infinite-dimensionality in unobserved heterogeneity as captured by
$A_{ij}$ and $\e_{ijt}$, but also allows such heterogeneity to enter
into agent utility functions in a fully \textit{nonseparable }way.} As shown in the next subsection, Assumption \ref{assu:EpsDist} is
the key condition underlying our identification strategy. Essentially,
we rely on Assumption \ref{assu:EpsDist} to link (a particular class
of) intertemporal changes in conditional choice probabilities between
two periods $s$ and $t$ to intertemporal changes in the parametric
indices of all products, $(\d_{ijt}-\d_{ijs})_{j\in{\cal J}}$, thereby
yielding identifying restrictions for the parameter $\b_{0}$. Note
that Assumption \ref{assu:EpsDist} concerns only the marginal distributions
of $\boldsymbol{\e}_{it}$ in different periods, and we make no explicit
assumptions about the serial dependence between $\boldsymbol{\e}_{is}$
and $\boldsymbol{\e}_{it}$, nor do we impose any restrictions on
the dependence structure of $A_{ij}$ and $\e_{ijt}$ across products
$j\in{\cal J}$.

It is worth emphasizing that the absence of restrictions on the cross-product
dependence structure of $\e_{ijt}$ in our setup also enables our
choice model to circumvent the well-known ``Blue-Bus/Red-Bus'' problem\footnote{See, for example, \citet*{mcfadden1974conditional} and \citet{train2009discrete}
for descriptions of the Independence of Irrelevant Alternatives property
and the ``Blue-Bus/Red-Bus'' problem.} that arises in discrete choice models with additive errors that are
assumed to be independent across products. Specifically, consider
a standard multinomial logit model: if we arbitrarily create a new
dummy product by replicating an existing one and giving it a different
name, then under various (mixed) logit models a new independent copy
of logit error is drawn for this new dummy product, which results
in a strict increase in the consumers' indirect utilities, even though
the new product is a simple duplicate of an existing product. This
``Blue-Bus/Red-Bus'' phenomenon also implies that consumers' indirect
utilities would diverge to infinity if we keep adding such dummy new
products, and that consumers will choose one of these duplicate products
with increasingly higher probabilities, both of which are unrealistic.
This problem, as well as other conceptual issues with independent
additive errors, has been well discussed, say, in \citet{berry2007pure}.
We emphasize that our current model does \emph{not} lead to the ``Blue-Bus/Red-Bus''
problem, since we allow errors to be arbitrarily correlated across
products. Hence, when duplicate products are created, the errors of
duplicate products are allowed to be perfectly correlated (as they
should be), so that adding duplicate products with different name
labels does not blow up the indirect utility of the consumer, nor
does it make the consumer more likely to choose one of the duplicates
relative to any set of remaining products.

That said, Assumption \ref{assu:EpsDist} does entail certain limitations
for the model. First, the assumption effectively rules out time-varying
endogeneity\footnote{Note that time-invariant endogeneity can be incorporated through the
unrestricted dependence between $\e_{ijt}$ and the fixed effect $A_{ij}$,
which can be arbitrarily correlated with ${\bf X}_{i}$ (in time-invariant
manners).}: for example, if ${\bf X}_{it}\neq{\bf X}_{is}$, the conditional
distribution of $\boldsymbol{\e}_{it}$ is nevertheless assumed to
be the same as that of $\boldsymbol{\e}_{is}$, which is likely to
be violated if the distribution of $\boldsymbol{\e}_{it}$ covaries
with that of ${\bf X}_{it}$. This is admittedly a limitation of the
approach, though it is common to this line of literature that utilizes
Assumption \ref{assu:EpsDist} as cited before. Nonetheless, subsequent
work by \citet*{gao2023identification} has proposed a weakened notion
of the stationarity/homogeneity assumption that is only imposed on
``exogenous covariates'', and has proposed an approach for partial
identification, albeit in a slightly different setting with additive
scalar-valued fixed effects. Relatedly, \citet{li2024identificationestimationtimevaryingendogenous}
proposes a correlated random coefficient linear panel model in which
regressors can be correlated with time-varying, individual-specific
random coefficients, thereby accommodating time-varying endogeneity
in the covariates. Second, a further limitation of Assumption \ref{assu:EpsDist}
is that it rules out random coefficients, a modeling device that was
popularized by \citet*{blp1995io} due to its ability to generate
rich substitution patterns among products with multi-dimensional observable
characteristics. However, the flexibility afforded by our general
fixed effect specification can incorporate arbitrarily complicated
substitution patterns with respect to \emph{time-invariant} components
of observed and unobserved product characteristics. Our infinite-dimensional
fixed-effect approach is thus more suitable to panel-data settings
where researchers are more interested in incorporating an arbitrarily
complicated form of time-invariant heterogeneity across agent-product
pairs. 

Beyond Assumption \ref{assu:EpsDist}, another limitation of the paper
is that we treat the distribution of unobserved heterogeneity (i.e.,
the fixed effects) as a nuisance parameter and focus on the identification
of $\b_{0}$. Many counterfactual parameters require knowledge of
the distribution of the relevant unobserved heterogeneity terms, which
is not identified given that the fixed effects are ``differenced out''
under our current approach in a short-panel setting. To this end,
we discuss in Appendix \ref{sec:Counterfactual-Analysis} how to use
the estimated $\b_{0}$ in a long panel setting ($T\to\infty$) to
perform counterfactual analysis. The idea is when a long panel is
available, we can consistently estimate for each individual certain
parameters of interest by using the observations only from that individual,
which effectively controls for the unobserved ${\bf A}_{i}$.

Furthermore, there are interesting economic questions that can be
addressed based on the knowledge of $\b_{0}$, and we discuss two
of them here.\footnote{We thank an anonymous referee for the suggestions here.}
First, consider research questions that focus on the existence and
direction of an effect, which can often be determined by the relative
magnitudes of effects from covariates as captured by $\b_{0}$. For
instance, one may ask whether advertising by a competitor within the
same product category exerts a positive or negative influence on the
demand for a focal brand. Theoretical considerations offer plausible
arguments for both substitution and category-expansion effects \citep*{simon1980shape,narayanan2004return},
rendering the sign of the net impact an empirical matter of interest.
Second, the proposed approach may also be applied to model specification
testing (or as a robustness check) for models with more restrictive
specifications on the fixed effects and errors. Specifically, by comparing
the estimate of $\b_{0}$ obtained from our model with that from a
more standard model (say, multinomial logit with fixed effects), one
can assess whether the parametric restrictions, particularly those
pertaining to unobserved heterogeneity, are supported by the data.
This aligns with the broader econometric literature on specification
testing (e.g., \citet*{hausman1978specification,vuong1989likelihood}),
and can be particularly useful for evaluating the validity of assumptions
regarding the distributional structure or functional form of heterogeneity.\footnote{\citet*{otaspecification} develop a specification test of parametric
binary choice models via the maximum score estimator. Given that our
approach generalizes the maximum score idea to settings with multivariate
monotonicity and infinite-dimensional fixed effects, extending the
specification testing idea of \citet*{otaspecification} to our multivariate
monotone panel setting can be a promising avenue for future work.}

\subsection{\label{subsec:ID}Key Identification Strategy}

In this section, we present our main semiparametric identification
result for model \eqref{eq:Model_PMC} under Assumptions \ref{assu:PMC_Mono}
and \ref{assu:EpsDist}, and detail our key identification strategy,
which exploits multivariate monotonicity in the presence of additive
non-separability and nonparametric fixed effects.

To start, fix any subset of products $\tJ\subseteq{\cal J}$, a pair
of time periods $t\neq s\in\{1,\ldots,T\}$ and a generic realization
of observable covariates in the two periods $t$ and $s$, i.e., $({\bf X}_{it},{\bf X}_{is})=({\bf x}_{t},{\bf x}_{s})\in\text{Supp}({\bf X}_{it},{\bf X}_{is})$.
For simpler notation, we write ${\bf X}_{i,ts}=({\bf X}_{it},{\bf X}_{is})$,
${\bf x}_{ts}=({\bf x}_{t},{\bf x}_{s})$, $\d_{jt}=x_{jt}^{'}\b_{0}$,
where $x_{jt}$ denotes the $j$-th column of ${\bf x}_{t}$.

For each individual $i$, consider the intertemporal change in the
probability of that individual choosing any product $j\in\tJ$ across
periods $t$ and $s$, conditional on $({\bf X}_{i,ts},{\bf A}_{i})$,
the joint realization of the fixed effect and the observable covariates
in both periods $t$ and $s$. Formally, writing $y_{i\tJ t}=\sum_{j\in\tJ}y_{ijt}$,
we have
\begin{align}
 & \E\left[\rest{y_{i\tJ t}-y_{i\tJ s}}{\bf X}_{i,ts}={\bf x}_{ts},{\bf A}_{i}\right]\nonumber \\
= & \int\ind\left\{ \max_{j\in\tJ}u\left(\d_{jt},A_{ij},\e_{ij{\color{red}t}}\right)>\max_{k\notin\tJ}u\left(\d_{kt},\,A_{ik},\e_{ik{\color{red}t}}\right)\right\} \text{d}{\cal \P}\left(\rest{\bs{\e}_{i{\color{red}t}}}{\bf X}_{i,ts}={\bf x}_{ts},{\bf A}_{i}\right)\nonumber \\
 & -\int\ind\left\{ \max_{j\in\tJ}u\left(\d_{js},A_{ij},\e_{ij{\color{red}s}}\right)>\max_{k\notin\tJ}u\left(\d_{ks},\,A_{ik},\e_{ik{\color{red}s}}\right)\right\} \text{d}{\cal \P}\left(\rest{\bs{\e}_{i{\color{red}s}}}{\bf X}_{i,ts}={\bf x}_{ts},{\bf A}_{i}\right)\nonumber \\
= & \int\left[\begin{array}{c}
\ind\left\{ \max_{j\in\tJ}u\left(\d_{jt},A_{ij},\tilde{\e}_{j}\right)>\max_{k\notin\tJ}u\left(\d_{kt},A_{ik},\tilde{\e}_{k}\right)\right\} \\
-\ind\left\{ \max_{j\in\tJ}u\left(\d_{js},A_{ij},\tilde{\e}_{j}\right)>\max_{k\notin\tJ}u\left(\d_{ks},A_{ik},\tilde{\e}_{k}\right)\right\} 
\end{array}\right]\text{d}{\cal \P}\left(\rest{\tilde{\bs{\e}}}{\bf X}_{i,ts}={\bf x}_{ts},{\bf A}_{i}\right),\label{eq:dyXA}
\end{align}
where the last equality follows from Assumption \ref{assu:EpsDist}
(pairwise time homogeneity):
\[
\rest{\bs{\e}_{i{\color{red}s}}\sim\bs{\e}_{i{\color{red}t}}\sim\tilde{\bs{\e}}}{\bf X}_{i,ts}={\bf x}_{ts},{\bf A}_{i}
\]
in which $\tilde{\bs{\e}}$ is a name-holder random element with the
shared marginal distribution of $\bs{\e}_{i{\color{red}t}}$ and $\bs{\e}_{i{\color{red}s}}$
given $({\bf x}_{ts},{\bf A}_{i})$.

Since $u$ is weakly increasing in its index argument, the choice
indicator
\[
\ind\left\{ \max_{j\in\tJ}u\left(\d_{jt},A_{ij},\tilde{\e}_{j}\right)>\max_{k\notin\tJ}u\left(\d_{kt},A_{ik},\tilde{\e}_{k}\right)\right\} 
\]
is weakly increasing in the vector $(\d_{jt})_{j\in{\cal \tJ}}$ and
decreasing in the remaining vector $(\d_{kt})_{k\notin{\cal \tilde{J}}}$
for every possible realization of $\tilde{\bs{\e}}$ and ${\bf A}_{i}$.
Hence, if 
\begin{equation}
\d_{jt}\leq\d_{js},\ \forall j\in\tJ\quad\text{and}\quad\d_{kt}\geq\d_{ks}\ \forall k\notin\tJ,\label{eq:dXb_jk}
\end{equation}
then, for every possible realization of $\tilde{\bs{\e}}$ and ${\bf A}_{i}$,
we have
\begin{equation}
\left[\begin{array}{c}
\ind\left\{ \max_{j\in\tJ}u\left(\d_{jt},A_{ij},\tilde{\e}_{j}\right)>\max_{k\notin\tJ}u\left(\d_{kt},A_{ik},\tilde{\e}_{k}\right)\right\} \\
-\ind\left\{ \max_{j\in\tJ}u\left(\d_{js},A_{ij},\tilde{\e}_{j}\right)>\max_{k\notin\tJ}u\left(\d_{ks},A_{ik},\tilde{\e}_{k}\right)\right\} 
\end{array}\right]\leq0,\label{eq:dyXAe>0}
\end{equation}
and consequently
\begin{equation}
\E\left[\rest{y_{i\tJ t}-y_{i\tJ s}}{\bf X}_{i,ts}={\bf x}_{ts},{\bf A}_{i}\right]\leq0\quad\text{for every possible realization of }{\bf A}_{i}.\label{eq:dyXA>0}
\end{equation}

Now, consider the observable intertemporal change in conditional choice
probabilities:
\begin{align}
\g_{\tJ,ts}\left({\bf x}_{ts}\right) & :=\E\left[\rest{y_{i\tJ t}-y_{i\tJ s}}{\bf X}_{i,ts}={\bf x}_{ts}\right].\label{eq:dyX}
\end{align}
Then, whenever \eqref{eq:dXb_jk} holds, by \eqref{eq:dyXA>0} we
can deduce that
\begin{align*}
\g_{\tJ,ts}\left({\bf x}_{ts}\right) & =\int\underset{\leq0}{\underbrace{\E\left[\rest{y_{i\tJ t}-y_{i\tJ s}}{\bf X}_{i,ts}={\bf x}_{ts},{\bf A}_{i}\right]}}\text{d}\P\left(\rest{{\bf A}_{i}}{\bf X}_{i,ts}={\bf x}_{ts}\right)\leq0.
\end{align*}
In summary, since \eqref{eq:dXb_jk} implies inequality \eqref{eq:dyXAe>0}
for every possible realization of $\tilde{\bs{\e}}$ and ${\bf A}_{i}$,
this inequality will be preserved after $\tilde{\bs{\e}}$ and ${\bf A}_{i}$
are integrated out \emph{cross-sectionally} with respect to the conditional
distribution $\P\left(\rest{\tilde{\bs{\e}},{\bf A}_{i}}{\bf X}_{i,ts}={\bf x}_{ts}\right)$,
regardless of how complicated this unknown conditional distribution
may be. 

The next proposition formalizes the identification strategy described
above, which produces an identifying restriction on the parameter
$\b_{0}.$
\begin{prop}[Key Identifying Restrictions]
\label{prop:ID_rest} Under model \eqref{eq:Model_PMC} and Assumptions
\ref{assu:RandSamp}--\ref{assu:EpsDist}, 
\begin{equation}
\g_{\tJ,ts}\left({\bf x}_{ts}\right)>0\ \text{\ensuremath{\imp}}\ \textup{NOT}\ \left\{ \left(x_{jt}-x_{js}\right)^{'}\b_{0}\leq0,\ \forall j\in\tJ\ \mathrm{and}\ \left(x_{kt}-x_{ks}\right)^{'}\b_{0}\geq0\ \forall k\notin\tJ\right\} \label{eq:ID_rest}
\end{equation}
for any (ordered) pair of time periods $(t,s)$ with $t\neq s\in\{1,\ldots,T\}$,
any subset of products $\tJ\subseteq{\cal J},$ and any realization
of observables ${\bf x}_{ts}\in\text{Supp}(\X_{i,ts})$.
\end{prop}
\noindent Proposition \ref{prop:ID_rest} establishes an identifying
restriction on $\b_{0}$ that is free of all unknown nonparametric
heterogeneity terms---$u$, ${\bf A}$, and $\bm{\e}$---and holds
in the presence of additive non-separability and nonparametric fixed
effects. Proposition \ref{prop:ID_rest} is also intuitive: if the
total market share of the products in $\tJ$ increases between two
periods, then it cannot be the case that the indices of products in
$\tJ$ have all (weakly) worsened while those of products not in $\tJ$
have all (weakly) improved.
\begin{thm}[Identified Set]
\label{thm:SetID} Let $B_{0}$ be the set of all $\b\in\R^{D}$
such that \eqref{eq:ID_rest} holds with $\b$ in lieu of $\b_{0}$,
for almost all ${\bf x}_{ts}\in\text{Supp}(\X_{i,ts})$, all $t\neq s\in\{1,\ldots,T\}$,
and all $\tJ\subseteq{\cal J}$. Then, under model \eqref{eq:Model_PMC}
and Assumptions \ref{assu:RandSamp}--\ref{assu:EpsDist}, $\b_{0}\in B_{0}.$
\end{thm}
\noindent We refer to $B_{0}$ as the \emph{identified set}. In Appendix
\ref{sec:App_PID}, we provide sufficient conditions for point identification
of $\b_{0}$ up to a scale normalization. The assumptions we impose
are similar to those used for point identification in the maximum-score
literature, such as \citet*{manski1985semiparametric}, and in related
work on panel multinomial choice models, such as \citet*{shi2017estimating}
and \citet*{khan2021inference}.

Relative to the well-known maximum-score criterion function studied
by \citet{manski1985semiparametric,manski1987semiparametric} under
univariate monotonicity, our criterion function is non-standard. This
non-standardness arises from a key distinction between multivariate
and univariate monotonicity. To see this more clearly, consider the
special case of a \emph{single-index} setting $(J=1)$\footnote{This arises naturally in binomial choice models with the characteristics
of the outside option set to be zero. In this case, even though there
are nominally two choice alternatives, choice behavior is completely
determined by a single index based on the characteristics of the non-default
option.}, in which case the following equivalence relationship holds given
the \emph{univariate }monotonicity in the index:
\begin{equation}
\left\{ \g\left({\bf x}_{ts}\right)>0\right\} \ \iff\ \left\{ \left(x_{t}-x_{s}\right)^{'}\b>0\right\} ,\label{eq:Equiv_MaxScore}
\end{equation}
Such an ``if-and-only-if'' relationship is a unique feature of the
single-index setting that \emph{cannot} be generalized to the multi-index
setting with $J\geq2$, as the right-hand side of \eqref{eq:ID_rest},
\[
\mathrm{NOT}\ \left\{ \left(x_{jt}-x_{js}\right)^{'}\b_{0}\leq0,\ \forall j\in\tJ\ \mathrm{and}\ \left(x_{kt}-x_{ks}\right)^{'}\b_{0}\geq0\ \forall k\notin\tJ\right\} ,
\]
\emph{does not} imply $\g_{\tJ,ts}\left({\bf x}_{ts}\right)\geq0$
in the converse direction. This breaks the ``if-and-only-if'' relationship
that the maximum-score criterion function in \citet{manski1985semiparametric,manski1987semiparametric}
is built upon. Thus, the maximum-score estimator does not generalize
to multi-index settings. The lack of ``if-and-only-if'' relationship
in the multi-index setting leads to a key difference in the criterion
functions, and consequently a different estimation approach. Importantly,
while the original maximum score criterion and estimator cannot be
generalized to multi-index settings, our procedure can be applied
under a general econometric framework characterized by \emph{multi-index
single-crossing} conditions, which we introduce in Section \ref{sec:Ext_MMIM}.

\subsection{\label{subsec:Sharp}Two-Period Sharpness}

We now establish the sharpness of the identified set $B_{0}$ in a
two-period setting. This sharpness result can be interpreted as the\emph{
pairwise sharpness} of the identifying restrictions in \eqref{eq:ID_rest}:
for fixed periods $s$ and $t$ such that $s<t$, the inequality restrictions
in \eqref{eq:ID_rest} for $(s,t)$ and $(t,s)$ exhaust all the identifying
information available from the model (its specification and assumptions)
and from the distribution of the observable data in periods $s$ and
$t$.
\begin{thm}[Pairwise Sharpness]
\label{thm:Id_sharp} Under model \eqref{eq:Model_PMC} and Assumptions
\ref{assu:RandSamp}--\ref{assu:EpsDist}, $B_{0}$ is sharp for
$T=2$.
\end{thm}
\noindent The proof of Theorem \ref{thm:Id_sharp} exploits and generalizes
a corresponding result in \citet*{pakes2016moment}. Specifically,
\citet*{pakes2016moment} consider a specification where the utility
index $X_{ijt}^{'}\b_{0}$ and an unobserved heterogeneity index $\l\left(A_{ijt},\e_{ijt}\right)$
are additively separable, and establish the sharpness of their identification
result by showing the existence of a nonnegative solution to a system
of linear equations. Here, we consider a more general setup without
requiring additive separability and propose a correspondingly more
general identification argument. Nevertheless, we show that the sharpness
of our identification result under our more general setup can be reduced
to the nonnegative solvability of the same system of linear equations
in \citet*{pakes2016moment}. Hence, by the result in \citet*{pakes2016moment},
our identification result is sharp.

Admittedly, Theorem \ref{thm:Id_sharp} only establishes sharpness
in a two-period setting; however, it does not directly imply ``all-period''
sharpness for $T\geq3$. While the existence of an observationally
equivalent latent error distribution can be established for any realization
${\bf X}_{i,ts}$ and any pair of periods $\left(t,s\right)$ as in
the proof of Theorem \ref{thm:Id_sharp}, to establish the stronger
``all-period'' sharpness result, we would need to show in addition
that there exists an all-period joint distribution of latent errors
that matches all-period observed joint conditional choice probabilities
and satisfies the pairwise time homogeneity assumption. This appears
to be a technically cumbersome exercise, given that the pairwise time-homogeneity
assumption enters as an implicit aggregate restriction on the two-period
error distributions (with all other periods aggregated out). We thus
do not pursue ``all-period'' sharpness here, and only present the
sharpness result above as in \citet*{pakes2016moment}, which also
focuses on two-period (pairwise) sharpness.

\section{\label{sec:S_EstComp}Estimation and Computation}

\subsection{\label{subsec:Criterion}Formulation of Population Criterion Function}

We now propose a population criterion function that encodes the identifying
information in Proposition \ref{prop:ID_rest}. We represent the right-hand
side of \eqref{eq:ID_rest} in Boolean algebra by
\begin{align}
\l_{\tJ}\left({\bf x}_{ts};\b\right) & :=\prod_{k=1}^{J}\ind\left\{ \left(-1\right)^{\ind\left\{ k\in\tJ\right\} }\left(x_{kt}-x_{ks}\right)^{'}{\color{blue}\b}\geq0\right\} ,\label{eq:lambda}
\end{align}
where $\left(-1\right)^{\ind\left\{ k\in\tJ\right\} }$ takes the
value $-1$ for $k\in\tJ$ and $1$ for $k\notin\tJ$. Therefore,
Proposition \ref{prop:ID_rest} can be written algebraically as: $\g_{\tJ,t,s}\left({\bf x}_{ts}\right)>0$
implies $\l_{\tJ}\left({\bf x}_{ts};\b_{0}\right)=0$ for any ${\bf x}_{ts}\in\text{Supp}\left(\X_{i,ts}\right)$. 

We now define the following criterion function by taking a cross-sectional
expectation over the random realization of $\X_{i,ts}$ and aggregating
over all subsets $\tJ\subseteq{\cal J}$:
\begin{align}
Q_{t,s}\left(\b\right) & :=\sum_{\tJ\subseteq{\cal J}}\E\left[\ind\left\{ \g_{\tJ,t,s}\left(\X_{i,ts}\right)>0\right\} \l_{\tJ}\left(\X_{i,ts};\b\right)\right],\label{eq:Q_ts}
\end{align}
which is nonnegative and minimized to zero at $\b_{0}$. Without normalization
and further assumptions for point identification, there could be multiple
values of $\b$ that minimize $Q_{t,s}$ to zero.

More generally, fix any function $G:\R\to\R$ that is \emph{one-sided
sign preserving}, i.e., $G\left(z\right)>0$ for $z>0$ and $G\left(z\right)=0$
for $z\leq0$. For example, we can choose $G\left(z\right)=\left[z\right]_{+}$
where $\left[z\right]_{+}$ is the positive part function. Then, we
define $Q_{t,s}^{G}$ as
\begin{align}
Q_{t,s}^{G}\left(\b\right) & :=\sum_{\tJ\subseteq{\cal J}}\E\left[G\left(\g_{\tJ,t,s}\left(\X_{i,ts}\right)\right)\l_{\tJ}\left(\X_{i,ts};\b\right)\right],\label{eq:Q_G_jts}
\end{align}
which is also minimized to zero at $\b_{0}$. The sign-preserving
function $G$, if further set to be monotone, continuous, or bounded,
serves as a \emph{smoothing} function that can improve the finite-sample
performance of our estimators. We provide more discussions on function
$G$ in the next section, when we construct estimators based on the
sample analog of the population criterion function defined here. 

$Q_{t,s}^{G}$ above is defined for a fixed pair of periods $\left(t,s\right)$,
but in practice we may utilize the information across all pairs of
periods by defining the aggregated criterion function:
\begin{equation}
Q^{G}\left(\b\right):=\sum_{t\neq s}^{T}Q_{t,s}^{G}\left(\b\right),\quad\text{for any }\b\in\R^{D}.\label{eq:PMC_BigQ}
\end{equation}
For notational simplicity, we suppress $G$ in $Q_{t,s}^{G}$ and
$Q^{G}$ in the rest of this paper.

\subsection{\label{subsec:Est}Two-Step Semiparametric Estimation}

We construct our estimator as a semiparametric two-step M-estimator
based on \eqref{eq:PMC_BigQ}. The first stage of our procedure is
concerned with nonparametrically estimating the intertemporal differences
in conditional choice probabilities of the following form:
\[
\g_{\tJ,t,s}\left({\bf x}_{ts}\right)=\sum_{j\in\tJ}\g_{j,t,s}\left({\bf x}_{ts}\right),
\]
where $\g_{j,t,s}\left({\bf x}_{ts}\right)=\E\left[\rest{y_{ijt}-y_{ijs}}\X_{i,ts}={\bf x}_{ts}\right]$
can be separately estimated for each product $j\in{\cal J}.$\footnote{In practice, we only need to estimate $\g_{j,t,s}$ for $\left(J-1\right)$
products and $\frac{1}{2}T\left(T-1\right)$ \emph{ordered} pairs
of periods. The former is because conditional choice probabilities
must sum to one across all $J$ products. Hence, the estimator for
the last product from the other $\left(J-1\right)$ estimates can
be directly derived by $\g_{J,t,s}=-\sum_{j=1}^{J-1}\g_{j,t,s}$.
The latter is because $\g_{j,t,s}=-\g_{j,s,t}$ by construction, so
we may estimate it for either $\left(t,s\right)$ or $\left(s,t\right)$
pair. Notice, however, that each ordered pair $\left(t,s\right)$
or $\left(s,t\right)$ provides complementary identifying information,
as $\l\left(\X_{i,ts};\b\right)$ and $\l\left(\X_{i,st};\b\right)$
do not admit such kind of deterministic relationships.} We note that the first stage estimation includes the observable characteristics
of all products $J$. For example, when $J=3$ and $D=3$, there are
$3\times3\times2=18$ variables in the conditioned set of $\gamma$.
Given the potentially large number of regressors, one may want to
use neural networks \citep{bach2017breaking,chen1999improved} or
penalized sieves \citep{chen_2013} for the first-step estimation.

Given the first-stage estimators $\hat{\g}_{j,t,s}$ and the smoothing
function $G$, in the second stage we numerically compute minimizers
of the sample criterion function,
\begin{align*}
\hat{Q}\left(\b\right):=\sum_{t\neq s}^{T}\hat{Q}_{\tJ,t,s}\left(\b\right), & \text{ where }\hat{Q}_{t,s}\left(\b\right):=\frac{1}{N}\sum_{i=1}^{N}\sum_{\tJ\subseteq{\cal J}}G\left(\hat{\g}_{\tJ,t,s}\left(\X_{i,ts}\right)\right)\l_{\tJ}\left(\X_{i,ts};\b\right).
\end{align*}

It is worth noting that while $\hat{Q}_{t,s}(\b)$ is defined as a
summation over all $2^{J}$ possible subsets $\tJ\subseteq{\cal J}$,
computationally there is no need to fully evaluate $Q_{\tJ,t,s}\left(\b\right)$
for each possible $\tJ\subseteq{\cal J}$ under a given $\b$ when
the knife-edge cases of $(X_{ijt}-X_{ijs})^{'}\b=0$ are ignorable.\footnote{There are at least two reasons why the ``knife-edge'' events of
the form $(X_{ijt}-X_{ijs})^{'}\b=0$ should be ignored. First, $(X_{ijt}-X_{ijs})^{'}\b=0$
technically occurs with probability zero for all $\b$ provided that
$X_{ijt}\neq X_{ijs}$ almost surely, which is a natural assumption
given that we require $X_{ijt}$ be time-varying. Second, for programming
reasons, it is often practically necessary to ignore knife-edge strict
equalities of continuously valued variables, since such equalities
are extremely sensitive to unavoidable numerical errors induced by
the ``machine epsilon.''} This is because, as long as $\l_{\tJ}\left(\X_{i,ts};\b\right)=0$,
the contribution from the $\tJ$-summand would be zero. However, a
careful inspection of $\l_{\tJ}$ reveals that $\l_{\tJ}\left(\X_{i,ts};\b\right)=1$
only if $\tilde{{\cal J}}=\left\{ j\in{\cal J}:\left(X_{ijt}-X_{ijs}\right)^{'}\b\leq0\right\} $.
Hence, in practical implementation we may simply compute 
\[
\hat{Q}_{t,s}\left(\b\right):=\frac{1}{N}\sum_{i=1}^{N}G\left(\sum_{j\in{\cal J}}\hat{\g}_{j,t,s}\left(\X_{i,ts}\right)\ind\left\{ \left(X_{ijt}-X_{ijs}\right)^{'}\b\leq0\right\} \right).
\]

In addition, the scale of $\b_{0}$ is not identified since $\l_{j}\left(\X_{i,ts};\b\right)$
consists of indicator functions of the form $\ind\left\{ \left(X_{ijt}-X_{ijs}\right)^{'}\b\geq0\right\} $.
Hence, we impose the scale normalization $\b_{0}\in\S^{D-1}:=\left\{ v\in\R^{D}:\norm v=1\right\} $.
Following \citet*{chernozhukov2007estimation}, we define the set
estimator by
\begin{equation}
\hat{B}_{\hat{c}}:=\left\{ \b\in\S^{D-1}:\ \hat{Q}\left(\b\right)\leq\min_{\tilde{\b}\in\S^{D-1}}\hat{Q}\left(\tilde{\b}\right)+\hat{c}\right\} \label{eq:Theta_hat_CHT07}
\end{equation}
with $\hat{c}:=O_{p}\left(c_{N}\log N\right)$. 

We now introduce assumptions for establishing the consistency of $\hat{B}_{\hat{c}}$.
\begin{assumption}[First-Stage Estimation]
\label{assu:FS_Conv} For any $\left(j,t,s\right)$ tuple:
\begin{itemize}
\item[(i)]  $\g_{j,t,s}\in\G$, and $\P\left(\hat{\g}_{j,t,s}\in\G\right)\to1$,
with $\G$ being a $\P$-Donsker class of functions in $L_{2}\left({\bf X}\right)$.
\item[(ii)]  $\norm{\hat{\g}_{j,t,s}-\g_{j,t,s}}_{2}:=\sqrt{\int\left(\hat{\g}_{j,t,s}\left(\X_{i,ts}\right)-\g_{j,t,s}\left(\X_{i,ts}\right)\right)^{2}\mathrm{d}\P\left(\X_{i,ts}\right)}=O_{p}\left(c_{N}\right)$
\textup{with $c_{N}\downto0$.}
\end{itemize}
\end{assumption}
\noindent Through Assumption \ref{assu:FS_Conv} we take as given
the large set of theoretical results on nonparametric regression in
the literature. Many kernel-based and sieve-based methods have been
developed, with their properties demonstrated under various sets of
conditions. See \citet*{wasserman2006all} and \citet*{chen2007sieve}
for more comprehensive surveys.
\begin{assumption}[Nice Smoothing Function]
\label{assu:NiceG} The one-sided sign-preserving function $G:\R\to\R_{+}$
is Lipschitz continuous with a finite Lipschitz constant.
\end{assumption}
\noindent Assumption \ref{assu:NiceG} is stronger than necessary
for consistency per se given that our identification result is valid
with any choice of the one-sided sign-preserving function $G$, nevertheless
we take $G$ to be Lipschitz to simplify the proof.

To state the next assumption, we decompose each row (corresponding
to each product) of ${\bf x}_{t}-{\bf x}_{s}$ as the product of its
norm and its \emph{direction}, i.e., ${\bf x}_{jt}-{\bf x}_{js}\equiv r_{j}\left({\bf x}_{t}-{\bf x}_{s}\right)v_{j}\left({\bf x}_{t}-{\bf x}_{s}\right)$,
where $r_{j}\left({\bf x}_{t}-{\bf x}_{s}\right):=\norm{{\bf x}_{jt}-{\bf x}_{js}}$,
and $v_{j}\left({\bf x}_{jt}-{\bf x}_{js}\right):=\left({\bf x}_{jt}-{\bf x}_{js}\right)/\norm{{\bf x}_{jt}-{\bf x}_{js}}$
if ${\bf x}_{jt}\neq{\bf x}_{js}$ while $v_{j}\left({\bf x}_{jt}-{\bf x}_{js}\right):={\bf 0}$
if ${\bf x}_{jt}={\bf x}_{js}$.
\begin{assumption}[Continuous Distribution of Directions]
\label{assu:NoMass} The marginal distribution of $v_{j}\left({\bf X}_{it}-{\bf X}_{is}\right)$
has no mass point except possibly at ${\bf 0}$ and is not supported
on any proper linear subspace of $\R^{D}$ for each $\left(j,t,s\right)$
tuple.
\end{assumption}
\noindent Assumption \ref{assu:NoMass} ensures the continuity of
the population criterion function. We note that Assumption \ref{assu:NoMass}
is mild: it essentially requires that the \emph{directions} of intertemporal
differences in observable characteristics are continuously distributed
on their own supports. In particular, this allows all but one dimensions
of observable characteristics to be discrete.

With the above assumptions imposed, we now establish the consistency
of our set estimator $\hat{B}_{\hat{c}}$, using the results in \citet*{chernozhukov2007estimation}.
\begin{thm}[Consistency]
\label{thm:Consistency}Under Assumptions \ref{assu:RandSamp}--\ref{assu:NoMass},
the set estimator $\hat{B}_{\hat{c}}$ is consistent in Hausdorff
distance: $d_{H}\left(\hat{B}_{\hat{c}},B_{0}\right)=o_{p}\left(1\right)$,
where $d_{H}\left(\hat{B}_{\hat{c}},B_{0}\right)=\max\left\{ \sup_{\b\in\hat{B}_{\hat{c}}}\inf_{\tilde{\b}\in B_{0}}\norm{\b-\tilde{\b}},\sup_{\b\in B_{0}}\inf_{\tilde{\b}\in\hat{B}_{\hat{c}}}\norm{\b-\tilde{\b}}\right\} .$
Furthermore, if $\b_{0}$ is point-identified on $\S^{D-1}$, $\norm{\hat{\b}-\b_{0}}=o_{p}\left(1\right)$
for any $\hat{\b}\in\hat{B}_{\widehat{c}=0}$.
\end{thm}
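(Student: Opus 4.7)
The plan is to invoke the set-estimator consistency framework of \citet{chernozhukov2007estimation}. Since $\S^{D-1}$ is compact and $B_0 = \{\b \in \S^{D-1} : Q(\b) = 0\}$ by Theorem \ref{thm:SetID}, it suffices to verify (i) continuity of $Q$ on $\S^{D-1}$, and (ii) a uniform rate $\sup_{\b \in \S^{D-1}} |\hat{Q}(\b) - Q(\b)| = O_p(r_N)$ paired with a tuning parameter $\hat{c}$ satisfying $\hat{c} \to 0$ and $\hat{c}/r_N \to \infty$. As will emerge below, $r_N$ is driven by the nonparametric first-stage rate $c_N$, which is what justifies the stated choice $\hat{c} = O_p(c_N \log N)$.

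For continuity of $Q$: the indicator product $\l_j(\X_{i,ts}; \b)$ is discontinuous at $\b^* \in \S^{D-1}$ only when some row satisfies $(\ol X_k - \ul X_k)' \b^* = 0$ with $\ol X_k \neq \ul X_k$, equivalently when the direction $v_k(\X_{it} - \X_{is})$ lies on the hyperplane orthogonal to $\b^*$. Assumption \ref{assu:NoMass} rules out such mass points, so the set of discontinuities has $\P$-probability zero at every $\b^* \in \S^{D-1}$. Since $G \circ \g_{j,t,s}$ is bounded on $\g_{j,t,s} \in [-1,1]$ by Lipschitzness of $G$ (Assumption \ref{assu:NiceG}), dominated convergence yields continuity of each $Q_{j,t,s}$ and hence of $Q$.

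For the uniform rate, decompose
\begin{align*}
\hat{Q}_{j,t,s}(\b) - Q_{j,t,s}(\b) = A_N(\b) + B_N(\b),
\end{align*}
where $A_N(\b) := \frac{1}{N}\sum_i [G(\hat{\g}_{j,t,s}(\X_{i,ts})) - G(\g_{j,t,s}(\X_{i,ts}))]\, \l_j(\X_{i,ts}; \b)$ absorbs the first-stage plug-in error and $B_N(\b)$ is the centered empirical-process term evaluated at the true $\g_{j,t,s}$. For $A_N$, Lipschitz continuity of $G$ combined with $|\l_j| \leq 1$ and Cauchy--Schwarz bound $\sup_\b |A_N(\b)|$ by a constant multiple of the empirical $L_2$ norm of $\hat{\g}_{j,t,s} - \g_{j,t,s}$, which equals its population counterpart up to $o_p(1)$ via the Glivenko--Cantelli property implied by the Donsker class in Assumption \ref{assu:FS_Conv}(i), and is therefore $O_p(c_N)$ by Assumption \ref{assu:FS_Conv}(ii). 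For $B_N$, the class $\{G(\g_{j,t,s}(\cdot))\, \l_j(\cdot; \b) : \b \in \S^{D-1}\}$ is the product of a bounded Donsker class (preserved by Lipschitz composition with $G$) and a VC class of half-space indicators indexed by $\b$, so standard Donsker-preservation results deliver $\sup_\b |B_N(\b)| = O_p(N^{-1/2})$.

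The main obstacle is the plug-in term $A_N$, where the nuisance estimator and the $\b$-dependent indicator must be handled simultaneously; the key is that $\b$ enters only through the bounded $\l_j$, which allows the sup over $\b$ to factor cleanly out of the nonparametric $L_2$ norm, leaving a purely nonparametric-regression object controlled by Assumption \ref{assu:FS_Conv}(ii). Combining the two pieces gives $r_N = c_N$, so $\hat{c} = O_p(c_N \log N)$ satisfies the required sandwich conditions and the CHT07 machinery delivers $d_H(\hat{B}_{\hat{c}}, B_0) = o_p(1)$. For the point-identified case, $B_0 = \{\b_0\}$ collapses the Hausdorff distance to $\sup_{\tilde{\b} \in \hat{B}_{\hat{c}}} \norm{\tilde{\b} - \b_0}$; since $\hat{B} \subseteq \hat{B}_{\hat{c}}$ (as $\hat{c} \geq 0$), the conclusion $\norm{\hat{\b} - \b_0} = o_p(1)$ for any selection $\hat{\b} \in \hat{B}$ follows immediately.
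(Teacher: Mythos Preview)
Your overall strategy---verifying the conditions of \citet{chernozhukov2007estimation} via continuity of $Q$ and a uniform rate for $\hat{Q}-Q$---matches the paper's, and your continuity argument is essentially the paper's Lemma~\ref{lem:Q_Cts}. The difference lies in the decomposition used for the uniform rate. The paper writes
\[
\hat{Q}_{j,t,s}(\b)-Q_{j,t,s}(\b)=\bigl[\hat{Q}_{j,t,s}(\b,\hat{\g})-Q_{j,t,s}(\b,\hat{\g})\bigr]+\bigl[Q_{j,t,s}(\b,\hat{\g})-Q_{j,t,s}(\b,\g)\bigr],
\]
controlling the first bracket by showing that the \emph{full} class $\{G(\tilde{\g}(\cdot))\l_j(\cdot;\b):\tilde{\g}\in\G,\ \b\in\S^{D-1}\}$ is $\P$-Donsker (so the empirical process is $O_p(N^{-1/2})$ uniformly in both $\b$ and $\tilde{\g}$, after which one simply plugs in $\hat{\g}\in\G$), while the second bracket is a \emph{population} integral bounded directly by $M\norm{\hat{\g}-\g}_2=O_p(c_N)$. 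Your decomposition $A_N+B_N$ instead places the plug-in error at the \emph{empirical} level, and this creates a gap in your rate for $A_N$.

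Specifically, after bounding $\sup_\b|A_N(\b)|$ by the empirical $L_2$ norm of $\hat{\g}-\g$, you pass to the population norm ``up to $o_p(1)$ via the Glivenko--Cantelli property''. But $O_p(c_N)+o_p(1)$ is only $o_p(1)$, not $O_p(c_N)$: Glivenko--Cantelli supplies no rate, so the empirical norm does not inherit the $O_p(c_N)$ bound, and you cannot verify that the prescribed $\hat{c}=O_p(c_N\log N)$ dominates the uniform error as CHT require. (A secondary issue: Donsker preservation under squaring needs a uniformly bounded envelope on $\G$, which Assumption~\ref{assu:FS_Conv}(i) does not impose.) The repair is either to adopt the paper's decomposition---which routes the bias term through the population measure and so avoids any empirical-to-population conversion---or to drop your Cauchy--Schwarz step, bound $\sup_\b|A_N(\b)|\leq M\cdot\tfrac{1}{N}\sum_i|\hat{\g}-\g|(\X_{i,ts})$, and invoke the \emph{Donsker} (not merely GC) property of the Lipschitz class $\{|\tilde{\g}-\g|:\tilde{\g}\in\G\}$ to obtain an $O_p(N^{-1/2})$ residual that is harmlessly absorbed into $O_p(c_N)$.
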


\subsection{\label{subsubsec:Comp}Computation}

We now explain how we implement the semiparametric two-step estimation
procedure proposed above. Since the model's criterion function is
possibly non-convex, standard gradient-based optimizers are susceptible
to converging to local minima. To address this, we employ a multi-stage
adaptive-grid search algorithm that explores the parameter space more
robustly and aims to locate the global minimizer of the objective
function. The code for our computation algorithm is publicly available
on GitHub.\footnote{https://github.com/mingliecon/GL\_PMC}

\subsubsection*{Choice of the Smoothing Function $G$}

Besides the requirement of Lipschitz continuity in Assumption \ref{assu:NiceG},
in practice we take $G$ to be bounded from above by setting $G\left(z\right)=2\Phi\left(\left[z\right]_{+}\right)-1$,
where $\Phi$ is the standard normal CDF. We now motivate our choice
of $G$.

Recall that our identification strategy is based on the logical implication
of the event $\g_{\tJ,t,s}\left({\bf x}_{ts}\right)>0$. Thus, for
identification purposes we are only interested in $\ind\left\{ \g_{\tJ,t,s}({\bf x}_{ts})>0\right\} $,
i.e., whether the event $\g_{\tJ,t,s}\left({\bf x}_{ts}\right)>0$
occurs, but not in the exact magnitude of $\g_{\tJ,t,s}\left({\bf x}_{ts}\right)$.
However, when $\g_{\tJ,t,s}\left({\bf x}_{ts}\right)$ is close to
zero, the estimator $\hat{\g}_{\tJ,t,s}\left({\bf x}_{ts}\right)$
is relatively more likely to have the wrong sign, so that the plug-in
estimator $\ind\left\{ \hat{\g}_{\tJ,t,s}\left({\bf x}_{ts}\right)>0\right\} $
may induce a large error of magnitude $1$. Hence, the smoothing by
$G$ helps down-weight the observations when $\hat{\g}_{\tJ,t,s}\left({\bf x}_{ts}\right)$
is close to zero and shrinks the magnitude of possible errors.

On the other hand, when $\g_{\tJ,t,s}({\bf x}_{ts})$ is positive
and large so that $\ind\left\{ \g_{\tJ,t,s}({\bf x}_{ts})>0\right\} $
can be estimated well, the magnitude of $\g_{\tJ,t,s}({\bf x}_{ts})$
itself does not provide additional identifying information. By setting
$G$ to be bounded from above, we dampen the influence of large values
of $\g_{\tJ,t,s}({\bf x}_{ts})$, so that the numerical minimization
of $\hat{Q}$ is less sensitive to potentially large but redundant
variations in $\hat{\g}_{\tJ,t,s}({\bf x}_{ts})$.

\subsubsection*{Angle-Space Reparameterization of $\protect\S^{D-1}$}

To minimize $\hat{Q}(\b)$ over $\b\in\S^{D-1}$, we work with a reparameterization
of $\S^{D-1}$ with $D-1$ angles in spherical coordinates.\footnote{The idea and the motivation for using the angle-space reparameterization
can also be found in \citet*{manski1986operational}, who however
use only one angle parameter.} Specifically, define the angle space $\T$ by
\begin{equation}
\T:=\left[-\pi,\pi\right)\times\left[-\frac{\pi}{2},\frac{\pi}{2}\right]^{D-2},\label{eq:Theta}
\end{equation}
and the transformation $\t\longmapsto\b(\t)$ by standard spherical
coordinate transformation. We now instead solve the optimization of
$\hat{Q}(\b(\t))$ over $\T$, which we further equip with its natural
geodesic metric $\rho_{\T}\left(\t,\tilde{\t}\right):=\arccos\left(\b\left(\t\right){}^{'}\b\left(\tilde{\t}\right)\right)$.
Note that $\rho_{\T}\left(\t,\tilde{\t}\right)$ is strongly equivalent\footnote{Two metrics $d_{1}$ and $d_{2}$ defined on some nonempty set $X$
are\emph{ strongly equivalent} if and only if there exist positive
constants $c_{1}$ and $c_{2}$ such that $c_{1}d_{1}\left(x,y\right)\leq d_{2}\left(x,y\right)\leq c_{2}d_{1}\left(x,y\right)$
for every $x,y\in X$.} to the (imported) Euclidean distance $\norm{\b\left(\t\right)-\b\left(\tilde{\t}\right)}$.

This reparameterization $\left(\T,\rho_{\T}\right)$ enables us to
exploit the compactness and convexity of the parameter space $\T=\left[-\pi,\pi\right)\times\left[-\frac{\pi}{2},\frac{\pi}{2}\right]^{D-2}$,
which takes the form of a hyper-rectangle. First, $\left(\T,\rho_{\T}\right)$
preserves all topological structures of the unit sphere, and particularly
inherits the compactness of $\left(\S^{D-1},\norm{\cd}\right)$, automatically
satisfying the compactness condition usually imposed for extremum
estimation and making it numerically feasible to initiate a grid on
the whole parameter space. Second, while the unit sphere $\S^{D-1}$
is not convex, the new parameter space $\T$ becomes convex algebraically,
making it computationally easy to define bisection points in the parameter
space. Third, $\left(\T,\rho_{\T}\right)$ preserves the geometric
structures of the sphere, including, for instance, the obvious observation
that $-\pi$ and $\pi$ in the first coordinate of $\T$ should be
treated as exactly the same point, or more rigorously, $\rho_{\T}\left(\left(\pi-\e,\t_{2},\ldots,\t_{D-1}\right),\left(-\pi,\t_{2},\ldots,\t_{D-1}\right)\right)\to0$
as $\e\to0$. This seemingly trivial property is nevertheless important
in defining and interpreting whether certain parameter estimates converge
asymptotically or not.

\subsubsection*{An Adaptive-Grid Algorithm}

With the angle reparameterization, we seek to numerically compute
a conservative rectangular enclosure of $\arg\min\hat{Q}\left(\t\right)$,
deploying a bisection-style\emph{ }grid-search algorithm that recursively
shrinks and refines an \emph{adaptive grid} to any pre-chosen precision
(as defined by $\rho_{\T}$). Unlike gradient-based local optimization
algorithms, our adaptive grid algorithm handles the built-in discreteness
in our sample criterion function, whose derivative is zero almost
everywhere, while still maintaining global coverage over the entire
parameter space. While a brute-force global search algorithm is the
safest choice when the dimension of the product characteristics $D$
is relatively small, our adaptive-grid algorithm runs significantly
faster. The essential structure of our algorithm is laid out as follows.

\medskip{}

Step 1: Initialize a global grid $\T^{\left(1\right)}$ of some chosen
size $M_{0}^{D-1}$ on $\T$.

Step 2: Compute $\hat{Q}\left(\t\right)$ for each $\t\in\T^{\left(1\right)}$,
and select all points in $\T^{\left(1\right)}$ with a criterion value
below the $\a$th-quantile in $\hat{Q}\left(\T^{\left(1\right)}\right):=\left\{ \hat{Q}\left(\t\right):\t\in\T^{\left(1\right)}\right\} $
into
\begin{equation}
\ul{\T}^{\left(1\right)}:=\left\{ \t\in\T^{\left(1\right)}:\ \hat{Q}\left(\t\right)\leq\mathrm{quantile}_{\a}\left(\hat{Q}\left(\T^{\left(1\right)}\right)\right)\right\} .\label{eq:Theta_ul1}
\end{equation}

Step 3: Take the enclosing rectangle of $\ul{\T}^{\left(1\right)}$,
by defining $\ul{\t}_{d}^{\left(1\right)}:=\mathrm{min}^{*}\ul{\T}_{d}^{\left(1\right)}$
and $\ol{\t}_{d}^{\left(1\right)}:=\mathrm{max}^{*}\ul{\T}_{d}^{\left(1\right)},$
where $\ul{\T}_{d}^{\left(1\right)}:=\left\{ \t_{d}:\t\in\ul{\T}^{\left(1\right)}\right\} $
for each $d=1,\ldots,D-1$ and the operator $\mathrm{min}^{*}$ and
$\mathrm{max}^{*}$ have standard definitions of $\min$ and $\max$
except for the first dimension $d=1$. For the first dimension, it
is necessary to account for the underlying spherical geometry and
the periodicity of angles, i.e. $\t_{1}+2\pi\equiv\t_{1}$ and in
particular $-\pi\equiv\pi$. This, however, is largely a programming
nuisance: whenever $\ul{\T}_{1}^{\left(1\right)}\subsetneq\T_{1}^{\left(1\right)}$
crosses over at $-\pi$ and $\pi$, we can add $2\pi$ to every $\t_{1}\in\ul{\T}_{1}^{\left(1\right)}$
and obtain lower and upper bounds of $\ul{\T}_{1}^{\left(1\right)}+2\pi$,
as illustrated in Figure \ref{fig:Algo}.

Step 4: We initialize a refined grid $\T^{\left(2\right)}$ on $\ol{\ul{\T}}^{\left(1\right)}:=\times_{d=1}^{D-1}\left[\ul{\t}_{d}^{\left(1\right)},\ol{\t}_{d}^{\left(1\right)}\right]$
of size $M_{0}^{D-1}$.

Step 5: Iterate until refinement stops (falls below a certain numerical
precision).

\medskip{}

Note that the above is simply a sketch of our algorithm: see Appendix
\ref{sec:app_GridSearch} and the documentation on GitHub for more
implementation details.\footnote{Our algorithm relies heavily on the compactness and convexity of the
angle space $\T$. Compactness allows us to start with a global grid
over the whole parameter space for initial evaluations of the sample
criterion function. At each step of recursion, the convexity of $\T$
enables us to conveniently refine the grid by separately cutting each
coordinate of $\ol{\ul{\T}}^{\left(m\right)}$ into smaller pieces
through simple division.} To be conservative, we add in buffers at each step of refinement,
keep track of both outer and inner boundaries of the lower-quantile
set $\ul{\T}^{\left(m\right)}$, and make sure that the minimizers
of the criterion functions at all computed points are indeed enclosed
by the set returned in the end. We find the current algorithm to be
conservative and to perform well in our simulations.

This multi-stage approach is designed to balance computational feasibility
with a robust search. The initial coarse search efficiently discards
large, suboptimal regions of the parameter space, while the subsequent
refinement and boundary identification stages provide a high-precision
estimate in the most promising area. However, the algorithm's performance
is inherently tied to the selection of tuning parameters, particularly
the initial grid size \texttt{M\_Step} and the quantile used for pruning,
which must be chosen carefully to ensure the global minimum is not
discarded prematurely. Furthermore, it can get computationally intensive
when the dimension of $\beta$ is high. We find that running 1,000
simulations for $D=3$ usually takes a few hours on modern computers,
while for $D=4$ it may take up to a day.

\section{\label{sec:Ext_MMIM}General Econometric Framework of Multi-Index
Single-Crossing Conditions}

Our key identification strategy, and consequently the associated estimation
method, apply more widely beyond panel multinomial choice models.
We now introduce a general econometric framework defined by \emph{multi-index
single-crossing} (MISC) conditions, and show how our proposed methods
can be exploited in a wide range of models nested under the MISC condition
framework.

Formally, let $\left(y_{i},X_{i}\right)_{i=1}^{n}$ be a random sample
of data with $X_{i}$ distributed on the support ${\cal X}\subseteq\R^{d_{x}}$
and $y_{i}$ distributed on ${\cal {\cal Y}}\subseteq\R^{d_{y}}$.
Let $h_{0}:{\cal X}\to\R$ denote a functional of the conditional
distribution of $y_{i}$ given $X_{i}$ that is directly identified
from data. For each of $j=1,\ldots,J\in\N$, let $\phi_{j}:{\cal X}\to\R^{d_{\t_{j}}}$
be some known transformation of $X_{i}$, and define $W_{ij}:=\phi_{j}\left(X_{i}\right)$
with $W_{i}:=\left(W_{i1},\ldots,W_{iJ}\right)$. Let $\t_{0j}\in\T_{j}\subseteq\R^{d_{\t_{j}}}$
be an unknown finite-dimensional parameter and write $\t_{0}:=\left(\t_{01}^{'},\ldots,\t_{0J}^{'}\right)^{'}\in\T:=\times_{j=1}^{J}\T_{j}$.
\begin{defn}[\emph{Multi-Index Single-Crossing Condition}]
\label{assu:Assum_Mono} We say that $\left(h_{0},\t_{0}\right)$
satisfy the (\emph{weak})\emph{ multi-index single-crossing condition
}if, for any realization $x\in{\cal X}$ and $w=\phi\left(x\right)$,
\begin{align}
w_{j}^{'}\t_{0j}\geq0,\ \forall j=1,\ldots,J\quad & \imp\quad h_{0}\left(x\right)\geq0,\nonumber \\
w_{j}^{'}\t_{0j}\leq0,\ \forall j=1,\ldots,J\quad & \imp\quad h_{0}\left(x\right)\leq0.\label{eq:MISC}
\end{align}
The condition is said to be strict if the inequalities on the right-hand
side of \eqref{eq:MISC} are strict.
\end{defn}
\noindent In words, the MISC condition states that if all the $J$
parametric indices $w_{1}^{'}\t_{01}$, $w_{2}^{'}\t_{02}$, $\ldots$,
and $w_{J}^{'}\t_{0J}$ are (weakly) positive, then the functional
$h_{0}$ must be (weakly) positive; if the $J$ indices are all zero,
then $h_{0}$ must be zero; if the $J$ indices are all negative,
then $h_{0}$ must be negative. Essentially, the MISC condition provides
a parsimonious way to semiparametrically model how the multiple economic
factors jointly affect a certain statistic of the relevant economic
outcome. The MISC condition basically requires that, if all the relevant
factors reach certain thresholds, then the outcome statistics must
also reach certain thresholds. Such requirements are often easy to
obtain in an economic or econometric model: while multiple factors
in an economic model may interact with each other in potentially complicated
manners and there might be many configurations of the factors that
lead to ambiguous theoretical predictions, there are also often simple
configurations that we understand reasonably well. Hence, the MISC
condition imposes only mild requirements on the underlying economic
or econometric model for the problem and thus provides a general framework
for semiparametric econometric analysis, in which most of the modeling
ingredients can be left nonparametric except for the parametric indices
that capture different economic factors in the problem.

Clearly, the panel multinomial choice model considered in previous
sections falls under the MISC condition framework. Specifically, focusing
on a pair of time periods $\left(t,s\right)$ and a particular product
$j_{0}$ for illustration, define $\t_{0j}:=\b_{0}$, $h_{0}\left({\bf X}_{i}\right):=\g_{j_{0},ts}\left({\bf X}_{i}\right)$,
$W_{ij_{0}}:=X_{ij_{0}t}-X_{ij_{0}s}$ and $W_{ij}:=-\left(X_{ijt}-X_{ijs}\right)$
for $j\neq j_{0}$. Then, the MISC condition \eqref{eq:MISC} is satisfied
under model \eqref{eq:Model_PMC} and Assumptions \ref{assu:RandSamp}--\ref{assu:EpsDist}.

We now provide a few more examples of models nested in the MISC condition
framework.
\begin{example}[Binary Choice with Awareness]
\label{exa:Bin_Aware} Consider the following binary choice model
\[
y_{i}=\ind\left\{ X_{i1}^{'}\t_{01}\geq u_{i}\right\} \cd\ind\left\{ X_{i2}^{'}\t_{02}\geq v_{i}\right\} 
\]
where $y_{i}$ denotes whether consumer $i$ purchases a certain product
or not, $X_{i1}$ denotes a vector of covariates that influences the
consumer's utility from a product, and $X_{i2}$ denotes a vector
of covariates that affects the consumer's awareness of the product
(e.g., advertising). Here, we have $J=2$, $X_{i}:=\left(X_{i1},X_{i2}\right)$,
$W_{i1}:=X_{i1}$, and $W_{i2}:=X_{i2}$. Define $h_{0}\left(x\right):=\E\left[\rest{y_{i}}X_{i}=x\right]-\frac{1}{4}.$
Then, under the conditional median restrictions $\text{med}\left(\rest{u_{i}}X_{i}\right)=\text{med}\left(\rest{v_{i}}X_{i}\right)=0$
and the conditional independence restriction $\rest{u_{i}\indep v_{i}}X_{i}$,
it is true that
\begin{align*}
X_{i1}^{'}\t_{01}>0,\ X_{i2}^{'}\t_{02}>0 & \quad\imp\quad h_{0}\left(X_{i}\right)>0,\\
X_{i1}^{'}\t_{01}<0,\ X_{i2}^{'}\t_{02}<0 & \quad\imp\quad h_{0}\left(X_{i}\right)<0,
\end{align*}
satisfying the MISC condition.
\end{example}
\begin{example}[Binary Choice with Endogeneity]
\label{exa:Bin_Choice_Endo} Consider the binary choice model
\[
Y_{i}=\ind\left\{ W_{i}^{'}\b_{0}\geq\e_{i}\right\} ,
\]
and let one component of $W_{i}$, say, $W_{i1}$ be endogenous. Suppose
that there exists a vector of instrumental variables $Z_{i}$ and
define $\xi_{i}:=W_{i1}-Z_{i}^{'}\g_{0}$ as the residual from the
reduced-form linear projection of $W_{i1}$ on $Z_{i}$. Assume that
the endogeneity between $\e_{i}$ and $W_{i1}$ is captured by the
following control function
\[
\text{med}\left(\rest{\e_{i}}Z_{i},\xi_{i}\right)=\l\left(\a_{0}\xi_{i}\right),
\]
where $\l$ is an unknown increasing function with location normalization
$\l\left(0\right)=0$, and the sign parameter $\a_{0}\in\left\{ -1,1\right\} $
controls the direction of the monotonicity. The above can be viewed
as an adaptation of the binary choice model that combines the conditional
median restriction in \citet*{manski1975maximum} with the control
function approach in \citet*{blundell2004endogeneity}: here we only
impose the control function restriction on the conditional median
instead of the whole distribution as in \citet*{blundell2004endogeneity}.
Then, writing $\ol Z_{i}:=\left(W_{i1},Z_{i}\right)$ and $\ol{\g}_{0}:=\left(-\a_{0},\a_{0}\g_{0}\right)^{'}$,
we have 
\begin{align*}
W_{i}^{'}\b_{0}>0,\ \ol Z_{i}^{'}\ol{\g}_{0}>0\quad\imp\quad\E\left[\rest{Y_{i}}W_{i},Z_{i}\right] & >\frac{1}{2}
\end{align*}
and its ``$<$'' counterpart, which can be viewed as a MISC condition
with $K=2,$ $h_{0}\left(W_{i},Z_{i}\right):=\E\left[\rest{Y_{i}-\frac{1}{2}}W_{i},Z_{i}\right]$,
$\phi_{1}\left(W_{i},Z_{i}\right):=W_{i}$, $\phi_{2}\left(W_{i},Z_{i}\right):=\left(W_{i1},Z_{i}\right)$,
$\t_{0,1}:=\b_{0}$, and $\t_{0,2}:=\ol{\g}_{0}$.
\end{example}
\begin{example}[Dyadic Network Formation]
\label{exa:NetForm} Consider the dyadic network formation model
of \citet*{gao2023logical}, which extends \citet{graham2017econometric}
to a semiparametric setting:
\begin{align*}
\E\left[\rest{y_{ij}}X_{i},X_{j},A_{i},A_{j}\right]\ =\  & \psi\left(w\left(X_{i},X_{j}\right)^{'}\t_{0},A_{i},A_{j}\right).
\end{align*}
Here $y_{ij}$ is a binary outcome indicating whether individuals
$i$ and $j$ are linked in an undirected network, $X_{i}$ and $X_{j}$
are the individuals' observable covariates, $w(X_{i},X_{j})$ is a
known pairwise transformation of individual covariates (with the leading
example being $w_{h}\left(X_{i},X_{j}\right):=\left|X_{i,h}-X_{j,h}\right|$
for each coordinate $h=1,\ldots,d_{x}$), $A_{i}$ and $A_{j}$ are
unobserved individual degree heterogeneity terms, and $\psi:\R^{3}\to\R$
is an unknown function assumed to be increasing in all its three arguments.
Specifically, fixing a particular pair of individuals $(\ol i,\ol j)$
and two realizations $\ol x,\ul x$ of $X_{i}$, it can be shown that,
with
\[
\ol w:=w\left(x_{\ol j},\ol x\right)-w\left(x_{\ol i},\ol x\right),\quad\ul w:=w\left(x_{\ol i},\ul x\right)-w\left(x_{\ol j},\ul x\right),
\]
and
\begin{align*}
h_{0}\left(\ol x,\ul x\right):= & \max\left(0,\E\left[\rest{y_{\ol ik}-y_{\ol jk}}X_{k}=\ol x\right]\right)\E\left[\rest{y_{\ol ik}-y_{\ol jk}}X_{k}=\ul x\right]\\
 & -\max\left(0,\E\left[\rest{y_{\ol jk}-y_{\ol ik}}X_{k}=\ol x\right]\right)\E\left[\rest{y_{\ol jk}-y_{\ol ik}}X_{k}=\ul x\right]
\end{align*}
the weak MISC condition is satisfied under mild conditions:
\begin{align*}
\ol w^{'}\t_{0}>0,\ \ul w^{'}\t_{0}>0\quad & \imp\quad h_{0}\left(\ol x,\ul x\right)\geq0,\\
\ol w^{'}\t_{0}<0,\ \ul w^{'}\t_{0}<0\quad & \imp\quad h_{0}\left(\ol x,\ul x\right)\leq0.
\end{align*}
\end{example}
\begin{example}[Censored Monotone Transformation Model with Endogeneity]
\label{exa:CensorEndo} The approach proposed in Example \ref{exa:Bin_Choice_Endo}
above can also be adapted to the following censored monotone transformation
model with endogeneity:
\[
Y_{i}=\max\left\{ \phi\left(W_{i}^{'}\b_{0},\e_{i}\right),0\right\} ,
\]
where one component of the observed covariates, $W_{i1}$, is endogenous,
and $\phi$ is an unknown bivariate increasing function. This model
generalizes the usual censored regression model $Y_{i}=\max\left\{ W_{i}^{'}\b_{0}+\e_{i},0\right\} $,
say, in \citet*{blundell2007censored}, by incorporating a flexible
unknown monotone transformation $\phi$ with non-additive error term.
Since $\b_{0}$, $\e_{i}$ and $\phi$ are all unknown, one may normalize
$\phi\left(0,0\right)=0$. By the equivariance of (conditional) quantiles
under monotone transformations, we have 
\[
\text{med}\left(\rest{Y_{i}}W_{i},Z_{i}\right)=\max\left\{ \phi\left(W_{i}^{'}\b_{0},\text{med}\left(\rest{\e_{i}}W_{i},Z_{i}\right)\right),0\right\} .
\]
Similar to Example \ref{exa:Bin_Choice_Endo}, define $\xi_{i}:=W_{i1}-Z_{i}^{'}\g_{0}$
as the residual from the reduced-form linear projection of $W_{i1}$
on the instrumental variables $Z_{i}$, and assume that the endogeneity
between $\e_{i}$ and $W_{i1}$ is captured by the control function
$\text{med}\left(\rest{\e_{i}}W_{i},Z_{i}\right)=\text{med}\left(\rest{\e_{i}}Z_{i},\xi_{i}\right)=\l\left(\a_{0}\xi_{i}\right)$,
where $\l$ is an increasing function with normalization $\l\left(0\right)=0$.
Writing $\ol Z_{i}:=\left(W_{i1},Z_{i}\right)$ and $\ol{\g}_{0}:=\left(\a_{0},-\a_{0}\g_{0}\right)^{'}$,
we have 
\begin{align*}
W_{i}^{'}\b_{0}>0,\ \ol Z_{i}^{'}\ol{\g}_{0}>0\quad\imp\quad\text{med}\left(\rest{Y_{i}}W_{i},Z_{i}\right) & >0\text{ and}\\
W_{i}^{'}\b_{0}\leq0,\ \ol Z_{i}^{'}\ol{\g}_{0}\leq0\quad\imp\quad\text{med}\left(\rest{Y_{i}}W_{i},Z_{i}\right) & =0,
\end{align*}
which can be viewed as a MISC condition with a weak ``$\leq$''
side and $h_{0}\left(X_{i}\right):=\text{med}\left(\rest{Y_{i}}W_{i},Z_{i}\right)$
given by the conditional median function. 
\end{example}
Based on the MISC conditions \eqref{eq:MISC}, we can again obtain
identifying restrictions by taking their logical contrapositions,
which can be encoded algebraically in a similar way as in \eqref{prop:ID_rest}
and \eqref{eq:Q_G_jts}. Specifically, let $G$ be a one-sided sign-preserving
function as in \eqref{eq:Q_G_jts} and define
\[
\l\left(W_{i};\t\right):=\prod_{j=1}^{J}\ind\left\{ W_{ij}^{'}\t_{j}\leq0\right\} .
\]

\begin{prop}
\label{prop:ID_Gen} Under condition \eqref{eq:MISC}, we have
\begin{align*}
h_{0}\left(X_{i}\right)>0 & \ \imp\ \text{NOT}\ \left\{ W_{ij}^{'}\t_{j}\leq0\ \forall j\right\} ,\\
h_{0}\left(X_{i}\right)<0 & \ \imp\ \text{NOT}\ \left\{ W_{ij}^{'}\t_{j}\geq0\ \forall j\right\} .
\end{align*}
Furthermore, with $Q\left(\t\right):=Q_{+}\left(\t\right)+Q_{-}\left(\t\right)$
where
\begin{align*}
Q_{+}\left(\t\right) & :=\E\left[G\left(h_{0}\left(X_{i}\right)\right)\l\left(W_{i};\t\right)\right]\ \text{ and }\ Q_{-}\left(\t\right):=\E\left[G\left(-h_{0}\left(X_{i}\right)\right)\l\left(-W_{i};\t\right)\right],
\end{align*}
we have $Q\left(\t\right)\geq Q\left(\t_{0}\right)=0$.
\end{prop}
\noindent Proposition \ref{prop:ID_Gen} generalizes Theorem \ref{thm:SetID}.
Notice that Proposition \ref{prop:ID_Gen} applies to all functionals
$h_{0}$ of the conditional distribution of $y_{i}$ given $\X_{i}$
that satisfy the MISC conditions.

One could also proceed with the two-step estimation procedure described
in Section \ref{sec:S_EstComp}. Given a first-stage nonparametric
estimator $\hat{h}$ of $h_{0}$, we can estimate $\t_{0}$ (or the
identified set) by minimizing the sample criterion $\hat{Q}\left(\t\right):=\hat{Q}_{+}\left(\t\right)+\hat{Q}_{-}\left(\t\right)$
with
\[
\hat{Q}_{+}\left(\t\right):=\frac{1}{n}\sum_{i=1}^{n}G\left(\hat{h}\left(X_{i}\right)\right)\l\left(W_{i};\t\right)\ \text{ and }\ \widehat{Q}_{-}\left(\t\right):=\frac{1}{n}\sum_{i=1}^{n}G\left(-\hat{h}\left(X_{i}\right)\right)\l\left(-W_{i};\t\right).
\]

\section{\label{sec:Sim}Simulation}

We now switch back to the panel multinomial choice model introduced
in Section \ref{sec:PMC} and examine the finite sample performance
of our proposed estimator. For each DGP, we run $M=1,000$ simulations
of model \eqref{eq:Model_PMC} with the following utility specification:
\[
u\left(X_{ijt}^{'}\b_{0},\,A_{ij},\,\e_{ijt}\right)=A_{i0}\left(X_{ijt}^{'}\b_{0}+A_{ij}\right)+\e_{ijt},
\]
in which $A_{i0}$ is an unobserved scale fixed effect that captures
agent-level heteroskedasticity in utilities, and $A_{ij}$ is an unobserved
location shifter specific to each agent-product pair. The ability
to deal with nonlinear dependence caused by unobserved fixed effects
in a relatively robust way is a distinctive feature of our method
compared with existing approaches. To allow for such dependence, we
generate correlation between the observable characteristics ${\bf X}_{i}$
and the fixed effects ${\bf A}_{i}$ via a latent variable $Z$. We
draw $Z_{i}\sim\mathcal{N}$$\left(0,1\right)$ and let $A_{i2}=\left[Z_{i}\right]_{+}$.
We construct $X_{ijt,2}=W_{ijt}+Z_{i}$ with $W_{ijt}\sim\cN\left(0,2J\right)$.
Thus, $X$ and $A$ are correlated via $Z$. The DGPs for the rest
of ${\bf A}$ and ${\bf X}$ are: $A_{i0}\sim\mathcal{U}\left[2,2.5\right]$,
$A_{i1}\equiv0$, $A_{ij}\sim\mathcal{U}\left[-0.25,0.25\right]$
for $j\geq3$, $X_{ijt,1}\sim\mathcal{U}\left[-1,1\right]$, $X_{ijt,d}\sim\cN\left(0,1\right)$
for $d\geq3$. Furthermore, we set $\ol{\b}_{0}=\left(2,1,\ldots,1\right)^{'}\in\R^{D}$
and $\b_{0}=\ol{\b}_{0}/\norm{\ol{\b}_{0}}$, and draw $\epsilon_{ijt}\sim TIEV\left(0,1\right)$.
To summarize, for each of the $M=1,000$ simulations we first generate
$\left(\b_{0},\X_{it},{\bf A}_{i},\boldsymbol{\epsilon}_{it}\right)$
for all $(i,t)$ pairs. Then, we calculate the individual choice ${\bf Y}$
matrix according to model \eqref{eq:Model_PMC}. Next, we compute
$\hat{\b}$ from the simulated observable data of $\left({\bf X},{\bf Y}\right)$.
To obtain $\hat{\b}$, we first use nonparametric regression with
second-order polynomial basis functions with $\ell_{1}$-regularization
and 10-fold cross validation to estimate $\gamma$. Then, we apply
the adaptive-grid algorithm detailed in Section \ref{subsubsec:Comp}.
Finally, we assess how well $\hat{\b}$ performs compared with the
true value $\b_{0}$.\footnote{In Appendix \ref{sec:App_ADDSIM}, we provide additional simulation
results. Specifically, we first present a graphical illustration of
the identified set $B_{0}$ based on the population criterion \eqref{eq:PMC_BigQ}.
Second, we inspect how our estimator performs without point identification.
Third, we vary $\left(D,J,T\right)$ to examine how robust our method
is against various simulation specifications. Lastly, we include a
simulation illustration of the robustness of our approach to the ``Blue-Bus/Red-Bus''
problem.}

\subsubsection*{Baseline Results}

For the baseline configuration, we set $N=10,000,\ D=3,\ J=3,\text{ and }T=2$.
Since in this case the conditions for the point identification are
satisfied, any point from the argmin set $\hat{B}_{b}:=\arg\min_{\b\in\S^{D-1}}\hat{Q}_{b}\left(\b\right)$
is a consistent estimator of $\b_{0}$ for each round of simulation
$b=1,\ldots,M$. Specifically, we define
\[
\hat{\b}_{b,d}^{u}:=\max\hat{B}_{b,d},\quad\hat{\b}_{b,d}^{l}:=\min\hat{B}_{b,d},\quad\text{and}\quad\hat{\b}_{b,d}^{m}:=\frac{1}{2}\left(\hat{\b}_{b,d}^{u}+\hat{\b}_{b,d}^{l}\right),
\]
where $\hat{\b}_{b,d}^{u}$, $\hat{\b}_{b,d}^{l}$, and $\hat{\b}_{b,d}^{m}$
represent the maximum, minimum, and middle point along dimension $d$
for each round of simulation $b$ of the argmin set $\hat{B}$, respectively. 

\begin{table}
\caption{Baseline Performance\label{tab:Baseline-Estimation-Performance}}

\bigskip{}

\noindent \centering{}%
\begin{tabular}{ccccc}
\toprule 
$\phantom{\frac{\frac{1}{1}}{\frac{1}{1}}}$ & $\b_{0}=\left(0.82,0.41,0.41\right)^{'}$ & $\hat{\b}_{1}$ & $\hat{\b}_{2}$ & $\hat{\b}_{3}$\tabularnewline
\midrule 
$\phantom{\frac{\frac{1}{1}}{\frac{1}{1}}}$mid bias & $\frac{1}{M}\sum_{b=1}^{M}\left(\hat{\b}_{b,d}^{m}-\b_{0,d}\right)$ & -0.0005 & -0.0003 & -0.0034\tabularnewline
$\phantom{\frac{\frac{1}{1}}{\frac{1}{1}}}$upper bias & $\frac{1}{M}\sum_{b=1}^{M}\left(\hat{\b}_{b,d}^{u}-\b_{0,d}\right)$ & 0.0075 & 0.0080 & 0.0083\tabularnewline
$\phantom{\frac{\frac{1}{1}}{\frac{1}{1}}}$lower bias & $\frac{1}{M}\sum_{b=1}^{M}\left(\hat{\b}_{b,d}^{l}-\b_{0,d}\right)$ & -0.0085 & -0.0086 & -0.0150\tabularnewline
$\phantom{\frac{\frac{1}{1}}{\frac{1}{1}}}$mean(u$-$l) & $\frac{1}{M}\sum_{b=1}^{M}\left(\hat{\b}_{b,d}^{u}-\hat{\b}_{b,d}^{l}\right)$ & 0.0160 & 0.0166 & 0.0233\tabularnewline
standard deviation & $\sqrt{\frac{1}{M}\sum_{b=1}^{M}\left(\hat{\b}_{b,d}^{m}-\ol{\hat{\b}_{d}^{m}}\right)^{2}}$ & 0.0299 & 0.0308 & 0.0431\tabularnewline
root MSE (by coordinate) & \textcolor{black}{$\left(\frac{1}{M}\sum_{b=1}^{M}\left(\hat{\b}_{b,d}^{m}-\b_{0,d}\right)^{2}\right)^{1/2}$} & 0.0288 & 0.0296 & 0.0417\tabularnewline
\midrule 
\textcolor{black}{$\phantom{\frac{\frac{1}{1}}{\frac{1}{1}}}$root
MSE (whole vector)} & \textcolor{black}{$\left(\frac{1}{M}\sum_{b=1}^{M}\norm{\hat{\b}_{b}^{m}-\b_{0}}^{2}\right)^{1/2}$} & \multicolumn{3}{c}{\textcolor{black}{0.0587}}\tabularnewline
\textcolor{black}{$\phantom{\frac{\frac{1}{1}}{\frac{1}{1}}}$}$\begin{array}{c}
\text{mean norm}\\
\text{deviations (MND) }
\end{array}$ & \multirow{1}{*}{$\frac{1}{M}\sum_{b=1}^{M}\norm{\hat{\b}_{b}^{m}-\b_{0}}$} & \multicolumn{3}{c}{0.0511}\tabularnewline
\bottomrule
\end{tabular}
\end{table}

Table \ref{tab:Baseline-Estimation-Performance} summarizes our baseline
results. In the first row we use the middle point $\hat{\b}^{m}$
along each dimension of $\hat{B}$ to calculate the bias. The biases
are very small across all three dimensions with a magnitude between
-0.0034 and -0.0005. The next two rows show the biases in estimating
$\b_{0,d}$ using $\hat{\b}_{d}^{u}$ and $\hat{\b}_{d}^{l}$ respectively,
which are again close to zero. The fourth row reports the average
widths of the set $\hat{B}$ along each dimension. These widths are
small relative to the magnitude of $\b_{0}$. The fifth and sixth
rows summarize the standard deviation and rMSE for each coordinate
of $\widehat{\b}^{m}$. In the second part of Table \ref{tab:Baseline-Estimation-Performance},
we report the vector rMSE and MND based on $\hat{\b}^{m}$, and the
results suggest that our method performs well.

\subsubsection*{Results Varying $N$}

Next, we vary $N$ while maintaining $D=3,\ J=3,\ \text{and }T=2$
to assess how our method performs under different sample sizes. In
addition to the baseline setup with $N=10,000$, we calculate mean
absolute deviation (MAD), average size of the estimated set, rMSE,
and MND for $N=4,000$ and $N=1,000$. Results are summarized in Table
\ref{tab:PerformanceVaryingN}.

\begin{table}
\caption{Performance under Varying $N$\label{tab:PerformanceVaryingN}}

\bigskip{}

\noindent \centering{}%
\begin{tabular}{ccccc}
\toprule 
\multirow{2}{*}{} & \multirow{2}{*}{$\sum_{d}\left|\text{bias}_{d}\right|$} & \multirow{2}{*}{$\sum_{d}\text{mean(u-l)}_{d}$} & \multirow{2}{*}{\textcolor{black}{$\text{rMSE}$}} & \multirow{2}{*}{MND}\tabularnewline
 &  &  &  & \tabularnewline
\midrule 
$\phantom{\frac{\frac{1}{1}}{\frac{1}{1}}}$$N=10,000$ & 0.0042 & 0.0560 & 0.0587 & 0.0511\tabularnewline
$\phantom{\frac{\frac{1}{1}}{\frac{1}{1}}}$$N=\ 4,000$ & 0.0136 & 0.0865 & \textcolor{black}{0.0742} & 0.0650\tabularnewline
$\phantom{\frac{\frac{1}{1}}{\frac{1}{1}}}$$N=\ 1,000$ & 0.0606 & 0.1664 & \textcolor{black}{0.1369} & 0.1159\tabularnewline
\midrule
$\phantom{\frac{\frac{1}{1}}{\frac{1}{1}}}$ & $\left(\dfrac{N}{1,000}\right)^{1/2}$ & $\left(\dfrac{N}{1,000}\right)^{1/3}$ & $\dfrac{\text{rMSE}_{1000}}{\text{rMSE}_{N}}$ & $\dfrac{\text{MND}_{1000}}{\text{MND}_{N}}$\tabularnewline
\midrule
$\phantom{\frac{\frac{1}{1}}{\frac{1}{1}}}$\textcolor{blue}{${\color{black}N=10,000}$} & 3.16 & 2.15 & 2.33 & 2.27\tabularnewline
$\phantom{\frac{\frac{1}{1}}{\frac{1}{1}}}$$N=\ 4,000$ & 2.00 & 1.59 & 1.84 & 1.78\tabularnewline
\bottomrule
\end{tabular}
\end{table}

Table \ref{tab:PerformanceVaryingN} provides numerical evidence that
a larger $N$ helps with overall performance. The sum of absolute
bias decreases from 0.0606 to 0.0042 when $N$ increases from $1,000$
to $10,000$. The average size of the estimated sets, rMSE, and MND
follow a similar pattern. Notably, even with a relatively small $N=1,000$,
the results remain informative and reasonably accurate, with the rMSE
and MND equal to 0.1369 and 0.1159, respectively. We note that $T$
is set to 2 here, which is the minimum required for our method to
work. Since our method can extract information from each of the $T\left(T-1\right)$
ordered pairs of time periods, a larger $T$ would generally improve
the performance of our estimators. Appendix \ref{sec:App_ADDSIM}
presents additional simulation results for a larger $T$.

Finally, we numerically investigate the speed of convergence when
we increase $N$ from $1,000$ to $4,000$ and $10,000$ in the second
part of Table \ref{tab:PerformanceVaryingN}. Compared with the case
of $N_{0}=1,000$, the relative ratios of rMSE are 1.84 for $N=4,000$
and 2.33 for $N=10,000$, both of which lie between $\left(N/N_{0}\right)^{1/3}$
and $\left(N/N_{0}\right)^{1/2}$. A similar pattern is also observed
for calculations based on MND. These results suggest that our estimator
converges at a rate slower than $N^{-1/2}$ but faster than $N^{-1/3}$.

\section{\label{sec:Emp}Empirical Application}

\subsection{\label{subsec:Data}Data and Methodology}

We now present an empirical application of the panel multinomial choice
model and our proposed estimation method, using NielsenIQ Retail Scanner
Data on popcorn sales to examine the effects of display promotions.
The data contain weekly store-level information on prices, sales,
and display promotion status, collected from approximately 35,000
participating retail stores with point-of-sale systems across the
United States.

We focus on popcorn among the wide array of products for two reasons.
First, popcorn purchases are more likely to be impulsive, with limited
intertemporal planning. Second, popcorn exhibits substantial variation
in in-store display promotions, allowing us to estimate how special
displays influence consumers\textquoteright{} purchase decisions.

We aggregate store-level observations to the designated market area
(DMA) level ($N=205$) for 2015. We focus on the top three brands
by market share, pool the remaining brands into a fourth product (``all
other products''), and include an outside option of ``no purchase.''
We compute market shares as the dependent variable for each of the
$J=5$ alternatives---the three leading brands, the ``all other products''
category, and the outside option. The observed product characteristics
include price, display-promotion status, and their interaction.\footnote{We define $\text{Price}_{cjt}$ as the weighted-average unit price
of all UPCs of the brand $j$ in DMA $c$ during week $t$. The dataset
includes two promotion indicators: display and feature. Given their
similarity, we construct $\text{Promo}_{cjt}$ as $($feature$\lor$display$)_{cjt}$.
The interaction term $\text{Price}_{cjt}\times\text{Promo}_{cjt}$
is included in $X$ to allow price sensitivity (elasticity) to vary
under promotion.} Notationally, $c$ denotes each of the $N=205$ DMAs, $j$ represents
each of the $J=5$ brands, and $t$ indexes the $T=52$ weeks in 2015.
The summary statistics of these variables are provided in Table \ref{tab:Empirical-Application:-Summary}.

\begin{table}
\caption{Empirical Application: Summary Statistics\label{tab:Empirical-Application:-Summary}}

\bigskip{}

\noindent \centering{}%
\begin{tabular}{ccccc}
\toprule 
\textcolor{black}{$\phantom{\frac{\frac{1}{1}}{\frac{1}{1}}}$} & mean & s.d. & min & max\tabularnewline
\midrule
\textcolor{black}{$\phantom{\frac{\frac{1}{1}}{\frac{1}{1}}}$}$\text{DMA-level\ Market\ Share}$
$s_{cjt}$ & $25.06\%$ & $21.59\%$ & $0.08\%$ & $96.69\%$\tabularnewline
\textcolor{black}{$\phantom{\frac{\frac{1}{1}}{\frac{1}{1}}}$}$\text{Price}$$_{cjt}$ & 0.4924 & 0.1803 & 0.1094 & 1.3587\tabularnewline
\textcolor{black}{$\phantom{\frac{\frac{1}{1}}{\frac{1}{1}}}$}$\text{Promo}_{cjt}$ & 0.0282 & 0.0377 & 0.0000 & 0.5000\tabularnewline
\textcolor{black}{$\phantom{\frac{\frac{1}{1}}{\frac{1}{1}}}$}$\text{Price}{}_{cjt}$
$\times$ $\text{Promo}_{cjt}$ & 0.0136 & 0.0203 & 0.0000 & 0.4505\tabularnewline
\bottomrule
\end{tabular}
\end{table}

Since the data are at the DMA level, while our approach was originally
developed for individual-level data, we now describe how to adapt
the method to the DMA-level setting. We treat the observed DMA-level
market shares $s_{cjt}$ as noisy measurements\footnote{Alternatively, with market-level data, one could treat the observed
$s_{cjt}$ as a sufficiently good approximation of $\E\left[\rest{y_{cjt}}{\bf X}_{ct},{\bf A}_{c}\right]$
as in Section 6.1 of \citet*{shi2017estimating}, in which case our
first-stage nonparametric regression is \emph{no longer} required.
We did not pursue this approach for two reasons: First, we do not
wish to impose the assumption that the observed market shares are
measured with negligible errors. Second, we intend this empirical
exercise as an illustration of our two-stage procedure, and thus focus
on a setting where the first-stage nonparametric regression is required.} of $\E\left[\rest{y_{cjt}}{\bf X}_{ct},{\bf A}_{c}\right]$, i.e.,
\[
s_{cjt}=\E\left[\rest{y_{cjt}}{\bf X}_{ct},{\bf A}_{c}\right]+u_{cjt},\quad\text{with }\E\left[\rest{u_{c}}{\bf X}_{c},{\bf A}_{c}\right]=0.
\]
Then, we use $s_{cjt}$ to nonparametrically estimate the following
intertemporal difference:
\[
\E\left[\rest{s_{cjt}-s_{cjs}}\X_{c,ts}\right]=\int\left(\E\left[\rest{y_{cjt}}{\bf X}_{ct},{\bf A}_{c}\right]-\E\left[\rest{y_{cjs}}{\bf X}_{cs},{\bf A}_{c}\right]\right)d\P\left(\rest{{\bf A}_{c}}\X_{c,ts}\right).
\]
Specifically, we nonparametrically regress $\left(s_{cjt}-s_{cjs}\right)$
on the second-order polynomial basis functions of $\mathbf{X}_{c,ts}$
with $\ell_{1}$-regularization and 10-fold cross-validation to obtain
an estimator $\hat{\g}_{j}$ of $\g_{j}\left(\overline{\X},\underline{\X}\right):=\E\left[\rest{s_{cjt}-s_{cjs}}\X_{c,ts}=\left(\overline{\X},\underline{\X}\right)\right]$.
Finally, we plug $\hat{\g}$ into our second-stage algorithm and compute
the (approximate) argmin set $\hat{B}_{\hat{c}}$.

\subsection{Results and Discussion\label{subsec:emp-Results-and-Discussion}}

We report our estimation results in Table \ref{tab:Comp}. $\hat{\b}_{\hat{c}}^{m}:=\frac{1}{2}\left(\hat{\b}_{\hat{c}}^{l}+\hat{\b}_{\hat{c}}^{u}\right)$
corresponds to the middle point of the (approximate) argmin set $\hat{B}_{\hat{c}}$
using our method. We show both the exact argmin set ($\hat{c}=0$)
and the approximate argmin set with $\hat{c}=0.1\times N^{-\frac{1}{4}}\log\left(N\right)\approx0.14$
for $N=205$. The estimated coefficients for Price (negative) and
Promo (positive) are economically intuitive.

The most interesting result is the positive estimated coefficient
on the interaction term $\text{Price}{}_{cjt}$ $\times$ $\text{Promo}_{cjt}$.
An intuitive explanation for the positive sign is that by displaying
certain products in front rows, consumers no longer see their price
tags adjacent to those of their competitors, and thus become less
price-sensitive for these specially promoted products.

\begin{table}
\caption{Empirical Illustration: Comparison of Results\label{tab:Comp}}

\bigskip{}

\noindent \centering{}%
\begin{tabular}{cccccccc}
\toprule 
\multirow{2}{*}{\textcolor{black}{$\phantom{\frac{\frac{1}{1}}{\frac{1}{1}}}$}} & \multirow{2}{*}{$\hat{\b}_{\hat{c}=0}^{m}$} & \multirow{2}{*}{$\hat{\b}_{\hat{c}=0.14}^{m}$} & \multirow{2}{*}{$\hat{\b}^{CyclicMono}$} & \multirow{2}{*}{$\hat{\b}^{OLS}$} & \multirow{2}{*}{$\hat{\b}^{OLS-FE}$} & \multirow{2}{*}{$\hat{\b}^{MLogit-FE}$} & \multirow{2}{*}{$\hat{\b}^{RCLM}$}\tabularnewline
 &  &  &  &  &  &  & \tabularnewline
\midrule
\textcolor{black}{$\phantom{\frac{\frac{1}{1}}{\frac{1}{1}}}$}$\text{Price}$$_{cjt}$ & -0.9351 & -0.9283 & $\text{-}0.3781$ & $0.0240$ & $\text{-}0.3807$ & $\text{-}0.6249$ & -0.9705\tabularnewline
\textcolor{black}{$\phantom{\frac{\frac{1}{1}}{\frac{1}{1}}}$}$\text{Promo}_{cjt}$ & 0.1793 & 0.1912 & $\text{-}0.0567$ & $0.5760$ & $0.5976$ & $0.5881$ & 0.2157\tabularnewline
\textcolor{black}{$\phantom{\frac{\frac{1}{1}}{\frac{1}{1}}}$}$\text{Price}{}_{cjt}$
$\times$ $\text{Promo}_{cjt}$ & 0.2687 & 0.2505 & $0.9240$ & $\text{-}0.8171$ & $\text{-}0.7057$ & $\text{-}0.5135$ & 0.1078\tabularnewline
\bottomrule
\end{tabular}
\end{table}

Furthermore, we compare our $\hat{\b}^{m}$ with the estimates obtained
through four other methods, i.e., Cyclic Monotonicity (CM) based on
\citet*{shi2017estimating}\footnote{We use 2-week cycles for all available weeks in the data for the CM
method.}, OLS, OLS with scalar-valued fixed effects (OLS-FE), the multinomial
logit with fixed effects (MLogit-FE), and the random coefficients
logit model (RCLM)\footnote{See Appendix \ref{subsec:Blue-Bus/Red-Bus} for the details of the
RCLM estimator.}. Results (normalized to $\S^{D-1}$) are summarized in Table \ref{tab:Comp}.

The OLS estimator for $\text{Price}$ is a positive 0.0240, which
is counterintuitive. Moreover, displaying the product in the front
rows of the store likely makes consumers less price-sensitive, suggesting
a positive coefficient on Price$\times$Promo. However, the estimated
coefficients for the interaction term using OLS, OLS-FE, and MLogit-FE
are all negative. Next, the CM-based estimator for the coefficient
of Promo is negative at -0.0567, whereas the estimated coefficient
on $\text{Price}\times\text{Promo}$ is a large positive 0.9240. While
the aggregate effect of Promo is likely to be positive for most prices
observed in the data, it makes the coefficient of Price positive for
those promoted products (i.e., $\widehat{\b}_{Price}+\text{Promo}\times\widehat{\b}_{Price\times Promo}>0$
when Promo = 1). Finally, the estimates from RCLM have the same sign
as our method. Nonetheless, it reports a smaller estimated coefficient
for the interaction term $\text{Price}{}_{cjt}\times\text{Promo}_{cjt}$,
making the effect from Promo on alleviating price sensitivity less
significant.

We view the contrast between our findings and those from alternative
methods as empirical evidence that, by accommodating more flexible
forms of unobserved heterogeneity---via high-dimensional fixed effects
that enter consumers\textquoteright{} utility functions in an additively
nonseparable manner---our approach yields more economically plausible
results.

\subsection{A Possible Explanation via Monte-Carlo Simulations}

In this subsection, we provide a possible explanation for the empirical
findings reported in Table \ref{tab:Comp} through simulation analysis.
Recall that ``Promo'' indicates whether a product receives increased
in-store exposure by being highlighted by the store. We argue that
the negative estimates on $\text{Price}{}_{ijt}$ $\times$ $\text{Promo}_{ijt}$
reported for traditional methods in Table \ref{tab:Comp} likely arise
from a positive correlation between display promotions and an unobserved
index of price sensitivity.

Specifically, suppose the utility function is
\begin{equation}
u_{ijt}=A_{ij}\times\left(X_{ijt}^{'}\b_{0}\right)+\epsilon_{ijt},\label{eq:mcemp_uijt}
\end{equation}
where $X_{ijt}$ contains Price, Promo, and Price$\times$Promo, $A_{ij}$
is the $ij$-specific fixed effect which may capture index sensitivity
(which can be thought as inversely related to unobserved brand loyalty),
and $\epsilon_{ijt}$ is the exogenous random shock. Suppose $A_{ij}$
and Promo$_{ijt}$ are positively correlated, which is reasonable
because marketing managers with their expertise are more likely to
promote products to which consumers are more price- and promotion-sensitive.
Thus, traditional estimation methods based on linearity would be unable
to detect such a pattern and wrongly attribute the effect on price
elasticities from $A_{ij}$ to Promo.

To provide some numerical evidence of the claim, we run the following
Monte Carlo simulation. We set $\b_{0}=\left(-4,2,2\right)^{'}$,
$Z_{ij}\sim\mathcal{U}\left[0,1\right]$, $A_{ij}=Z_{ij}+1$, and
$\epsilon_{ijt}\sim TIEV\left(0,1\right)$. For the $X_{ijt}$ vector,
we draw $X_{ijt,1}\sim\mathcal{U}\left[0,4\right]\text{ and }W_{ijt}\sim\mathcal{U}\left[0,1\right],$
and let $X_{ijt,2}=\left(1-\alpha\right)\times W_{ijt}+\alpha\times Z_{ij}$
and $X_{ijt,3}=X_{ijt,1}\times X_{ijt,2}$. We emphasize that $X_{ijt,2}$
(Promo) is positively correlated with $A_{ij}$ through $Z_{ij}$,
with $\alpha$ measuring the strength of the correlation. We consider
three values of $\a$: $0.15,\ 0.3,\text{ and }0.5$.

We run 1,000 simulations for each of the five methods in Table \ref{tab:Comp}
to estimate $\b_{0}$. To replicate the data structure of the empirical
exercise, we set $N=205$, $D=3$, $J=4$, and $T=10$. We report
in Table \ref{tab:MCEmp} the percentage of simulations that the corresponding
method produces correct signs for all coordinates of $X_{ijt}$.

\begin{table}
\caption{Percentage of Correct Signs of Estimated Coefficients\label{tab:MCEmp}}

\bigskip{}

\noindent \centering{}%
\begin{tabular}{>{\centering}p{1.8cm}>{\centering}p{1.8cm}>{\centering}p{1.8cm}>{\centering}p{1.8cm}>{\centering}p{1.8cm}>{\centering}p{1.8cm}>{\centering}p{1.8cm}}
\toprule 
\textcolor{black}{$\phantom{\frac{\frac{1}{1}}{\frac{1}{1}}}$}$\alpha$ & $\hat{\b}^{m}$ & $\hat{\b}^{CyclicMono}$ & $\hat{\b}^{OLS}$ & $\hat{\b}^{OLS-FE}$ & $\hat{\b}^{MLogit-FE}$ & $\hat{\beta}^{RCLM}$\tabularnewline
\midrule
\textcolor{black}{$\phantom{\frac{\frac{1}{1}}{\frac{1}{1}}}$}0.15 & 91.50\% & 0.00\% & 0.00\% & 0.00\% & 28.00\% & 0.00\%\tabularnewline
\textcolor{black}{$\phantom{\frac{\frac{1}{1}}{\frac{1}{1}}}$}0.30 & 85.90\% & 0.00\% & 0.00\% & 0.00\% & 0.20\% & 0.00\%\tabularnewline
\textcolor{black}{$\phantom{\frac{\frac{1}{1}}{\frac{1}{1}}}$}0.50 & 74.20\% & 0.00\% & 0.00\% & 0.00\% & 0.00\% & 0.00\%\tabularnewline
\bottomrule
\end{tabular}
\end{table}

The percentages of simulations where our proposed method produces
correct signs for all coordinates of $X_{ijt}$ for $\alpha=0.15,\ 0.3,$
and $0.5$ are 91.50\%, 85.90\%, and 74.20\%, respectively. The accuracy
of the estimator is negatively affected by the correlation between
$X_{ijt,2}$ (Promo) and $A_{ij}$ (multiplicative fixed effect).
In contrast, none of the other methods in Table \ref{tab:MCEmp} generates
correct signs as ours does. The alternative models, owing to their
additively separable structure,\footnote{We note that the CM method requires $A_{ij}$ entering the utility
function linearly, which is violated in \eqref{eq:mcemp_uijt}.} may overlook the positive dependence between Promo and the multiplicative
fixed effect $A_{ij}$, which can bias the resulting estimates.\footnote{Notably, RCLM produces \textquotedblleft wrong signs\textquotedblright{}
in this simulation exercise, even though it yields the expected signs
in the empirical application in Section \ref{subsec:emp-Results-and-Discussion}.
A plausible interpretation is that, while RCLM is more flexible than,
for example, MLogit-FE, the unknown selection effect in this dataset
may be insufficiently strong to cause RCLM to fail, yet strong enough
for MLogit-FE and related methods to do so. This observation highlights
the potential value of our method as a robustness-check tool.}

Intuitively, since products with larger $A_{ij}$ are more likely
to be promoted $\left(X_{ijt,2}=1\right)$ by the selection of marketing
managers, the average effective price sensitivity of promoted products
tends to be greater in magnitude than that of non-promoted products.
This drives those estimators that ignore such selection effects to
produce a negative coefficient on the interaction term. In contrast,
our method handles such \emph{non-additive} dependence between observable
characteristics and unobserved fixed effects well, illustrating its
robustness in these models.

\section{\label{sec:Conclusion}Conclusion}

\noindent This paper develops a method for semiparametric identification
and estimation in panel multinomial choice models that feature infinite-dimensional
fixed effects and nonadditive utility, thereby accommodating rich
forms of unobserved heterogeneity. We also introduce a general identification
strategy based on multivariate monotonicity of parametric indices,
applicable to a broad class of econometric models defined by the MISC
conditions. In addition, we present a computational algorithm that
leverages angle-space reparameterization and adaptive-grid search,
which prove effective given the nonstandard criterion function implied
by our identifying restrictions.

Future research could investigate how our approach might be applied
and adapted to other specific microeconometric models within the MISC
framework. Along this line, \citet*{gao2023logical}---a companion
paper to the present study---illustrates how the approach proposed
here can be adapted to the context of dyadic network formation models.
\citet{gao2023identification} propose a method for addressing endogeneity
in discrete choice models under an adapted time-homogeneity condition.
In ongoing work, we are investigating how to combine techniques developed
in this line of work to analyze strategic network formation models
with endogenous covariates.

\bibliographystyle{ecta}
\phantomsection\addcontentsline{toc}{section}{\refname}\bibliography{GL_PMC1}

\newpage{}

\appendix
\begin{singlespace}
\noindent \begin{center}
\textbf{\Large{}Online Supplemental Material for:}{\Large\par}
\par\end{center}
\end{singlespace}

\begin{center}
\textbf{\tiny{}~}{\tiny\par}
\par\end{center}

\begin{center}
\textbf{\Large{}Identification of Semiparametric Panel Multinomial}\\
\textbf{\Large{}Choice Models with Infinite-Dimensional Fixed Effects}{\Large\par}
\par\end{center}

\begin{center}
\textbf{\tiny{}~}{\tiny\par}
\par\end{center}

\setcounter{page}{1}

\section{\label{sec:Pf_sharp}Proof of Theorem \ref{thm:Id_sharp}}
\begin{proof}
Let $\b^{*}\in B_{0}\backslash\left\{ \b_{0}\right\} $. In the following,
we condition on ${\bf x}_{st}$ and suppress ${\bf x}_{st}$ for notational
simplicity. Write $p_{j\left(t\right)}:=\P\left(\rest{y_{ijt}=1}{\bf x}_{st}\right)$.
Let ${\cal J}_{+}^{*}:=\left\{ j\in{\cal J}:\left(x_{s}-x_{t}\right)^{'}\b^{*}\geq0\right\} $
and ${\cal J}_{-}^{*}:={\cal J}\backslash{\cal J}_{+}^{*}$. Then,
we have 
\begin{equation}
\sum_{j\in{\cal J}_{+}}p_{j\left(s\right)}\geq\sum_{j\in{\cal J}_{+}}p_{j\left(t\right)}.\label{eq:Delta_p_J+}
\end{equation}
Since $\sum_{j}p_{j\left(s\right)}=\sum_{j}p_{j\left(t\right)}=1$,
we also have 
\begin{equation}
\sum_{j\notin{\cal J}_{+}}p_{j\left(s\right)}\leq\sum_{j\notin{\cal J}_{+}}p_{j\left(t\right)}.\label{eq:Delta_p_J-}
\end{equation}

Without loss of generality, relabel products so that 
\[
{\cal J}_{+}^{*}=\left\{ J,\ldots,j^{*}\right\} ,\quad{\cal J}_{-}^{*}=\left\{ j^{*}-1,\ldots,1\right\} ,
\]
and
\begin{equation}
p_{J\left(s\right)}-p_{J\left(t\right)}\geq\ldots\geq p_{j^{*}\left(s\right)}-p_{j^{*}\left(t\right)},\label{eq:J+_label}
\end{equation}
and 
\begin{equation}
p_{j^{*}-1,\left(s\right)}-p_{j^{*}-1,\left(t\right)}\geq\ldots\geq p_{1,\left(s\right)}-p_{1,\left(t\right)}.\label{eq:J-label}
\end{equation}
In words, products are relabeled according to the following lexicographic
ascending order:
\begin{itemize}
\item Products in ${\cal J}_{+}$ receives larger labels than products in
${\cal J}_{-}$;
\item Within each of ${\cal J}_{+}$ and ${\cal J}_{-}$, products are further
sorted so that $p_{js}-p_{jt}$ is ascending in (new) product label
$j$.
\end{itemize}
Given \eqref{eq:Delta_p_J+} and \eqref{eq:J+_label}, we must have
\begin{equation}
\sum_{j=h}^{J}\left(p_{j\left(s\right)}-p_{j\left(t\right)}\right)\geq0\quad\forall h\geq j^{*}.\label{eq:pst_>0_k1}
\end{equation}
In the meanwhile, given \eqref{eq:Delta_p_J-} and \eqref{eq:J-label},
we also have
\[
\sum_{j=1}^{h}\left(p_{j\left(s\right)}-p_{j\left(t\right)}\right)\leq0\quad\forall h\leq j^{*}-1.
\]
Again, since $\sum_{j}p_{j\left(s\right)}=\sum_{j}p_{j\left(t\right)}=1$,
the above analysis implies that
\[
\left(1-\sum_{j=1}^{h}p_{j\left(s\right)}\right)-\left(1-\sum_{j=1}^{h}p_{j\left(t\right)}\right)\geq0\quad\forall h\leq j^{*}-1,
\]
which is equivalent to
\[
\sum_{j=h+1}^{J}\left(p_{j\left(s\right)}-p_{j\left(t\right)}\right)\geq0,\quad\forall h\leq j^{*}-1,
\]
which is further equivalent to
\begin{equation}
\sum_{j=h}^{J}\left(p_{j\left(s\right)}-p_{j\left(t\right)}\right)\geq0,\quad\forall h=1,\ldots,j^{*}.\label{eq:pst_>0_k2-1}
\end{equation}
Combining \eqref{eq:pst_>0_k1} and \eqref{eq:pst_>0_k2-1} gives
\begin{equation}
\sum_{j=h}^{J}\left(p_{j\left(s\right)}-p_{j\left(t\right)}\right)\geq0\quad\forall h=1,\ldots,J.\label{eq:pst_>0}
\end{equation}

Following \citet{pakes2016moment}, define the choice mapping $y\left({\bf x}_{s},a,\e_{s},\b\right)$
to the product label that is chosen according to model \eqref{eq:Model_PMC}
at period $s$ with $\left({\bf x}_{s},a,\e_{s},\b\right)$, i.e.,
\[
y\left({\bf x}_{s},a,\e_{s},\b\right)=j\quad\iff\quad u\left(x_{js}^{'}\b,a_{j},\e_{js}\right)>\max_{k\neq j}u\left(x_{ks}^{'}\b,a_{k},\e_{ks}\right).
\]
Take $a$ to be any in-support value in ${\cal A}$. Define
\[
R_{j;s}^{*}:=\left\{ \tilde{\e}_{s}:y\left({\bf x}_{s},a,\tilde{\e}_{s},\b^{*}\right)=j\right\} 
\]
to be the values of $\e_{s}^{*}$ that lead to product $j$ being
chosen under $\left({\bf x}_{s},a,\b^{*}\right)$ in period $s$.

Recall that $\left(x_{js}-x_{jt}\right)^{'}\b^{*}\geq0$ for $j\in{\cal J}_{+}^{*}$
and $\left(x_{js}-x_{jt}\right)^{'}\b^{*}<0$ for $j\in{\cal J}\backslash{\cal J}_{+}^{*}$.
Hence, by the monotonicity of $u$ in its first argument,
\begin{equation}
\bigcup_{j\in{\cal J}_{+}^{*}}R_{j;t}^{*}\subseteq\bigcup_{j\in{\cal J}_{+}^{*}}R_{j;s}^{*},\label{eq:SetInclusion-1}
\end{equation}
and hence
\[
R_{j;s}^{*}\cap R_{k;t}^{*}=\es\quad\forall j\in{\cal J}_{-}^{*},\ k\in{\cal J}_{+}^{*}.
\]

Let $R_{j,k}^{*}:=R_{j;s}^{*}\cap R_{k;t}^{*}$ and 
\[
R_{j,k,h,l}^{*}:=R_{j,k}^{*}\times R_{h,l}^{*}:=\left\{ \left(\tilde{\e}_{s},\tilde{\e}_{t}\right):\tilde{\e}_{s}\in R_{j,k}^{*},\tilde{\e}_{t}\in R_{h,l}^{*}\right\} .
\]
Whenever $R_{j,k}^{*}\neq\es$, pick any single point $r_{j,k}^{*}\in R_{j,k}^{*}$.

We specify the joint distribution of $\left(\e_{s}^{*},\e_{t}^{*}\right)$
as a discrete distribution over points $\left(r_{j,k}^{*},r_{h,l}^{*}\right)$.
We write 
\[
q_{j,k,h,l}^{*}:=\P\left(\rest{\left(\e_{s}^{*},\e_{t}^{*}\right)=\left(r_{j,k}^{*},r_{h,l}^{*}\right)}{\bf x}_{s},{\bf x}_{t}\right)\equiv\P\left(\rest{\left(\e_{s}^{*},\e_{t}^{*}\right)\in R_{j,k,h,l}^{*}}{\bf x}_{s},{\bf x}_{t}\right).
\]
Hence, a vector $\ol q^{*}=\left(q_{j,k,h,l}^{*}\right)$ in the unit
simplex defines a joint distribution of $\left(\e_{s}^{*},\e_{t}^{*}\right)$
.

By \eqref{eq:SetInclusion-1}, we have
\begin{equation}
q_{j,k,h,l}^{*}=0,\ \forall\left(j,k\right)\text{ or }\left(h,l\right)\in{\cal J}_{-}^{*}\times{\cal J}_{+}^{*}.\label{eq:q_-+_0}
\end{equation}
We set 
\begin{equation}
q_{j,k,h,l}^{*}=0,\ \forall j<k\text{ or }h<l.\label{eq:q_jkhl_0}
\end{equation}
Note that $\left(j,k\right)\in{\cal J}_{-}^{*}\times{\cal J}_{+}^{*}$
implies $j<k$ by the relabeling. Hence, \eqref{eq:q_jkhl_0} sets
a larger class of $q_{jkhl}$ to be zero than those in \eqref{eq:q_-+_0}.

Given \eqref{eq:q_jkhl_0}, to specify a joint distribution of $\left(\e_{s}^{*},\e_{t}^{*}\right)$,
we only need to specify 
\begin{equation}
q^{*}:=\left(q_{j,k,h,l}^{*}\right)_{j\geq k,h\geq l},\label{eq:q*}
\end{equation}
so that $\ol q^{*}:=\left(q^{*},\ul q^{*}\right)$ with 
\[
\ul q^{*}:=\left(q_{j,k,h,l}^{*}\right)_{j<k\text{ or }h<l}={\bf 0}.
\]

First, $q^{*}$ needs to satisfy a homogeneity condition. Note that,
given \eqref{eq:q_jkhl_0},
\begin{align*}
\P\left(\rest{\e_{s}^{*}=r_{jk}}{\bf x}_{s},{\bf x}_{t}\right) & =\sum_{h,l}q_{j,k,h,l}^{*}=\sum_{h\geq l}q_{j,k,h,l}^{*}\\
\P\left(\rest{\e_{t}^{*}=r_{jk}}{\bf x}_{s},{\bf x}_{t}\right) & =\sum_{h,l}q_{h,l,j,k}^{*}=\sum_{h\geq l}q_{h,l,j,k}^{*}
\end{align*}
For $j<k$, conditional homogeneity is trivially satisfied, since
\[
\P\left(\rest{\e_{s}^{*}=r_{jk}}{\bf x}_{s},{\bf x}_{t}\right)=\P\left(\rest{\e_{t}^{*}=r_{jk}}{\bf x}_{s},{\bf x}_{t}\right)=0\text{ }\forall j<k.
\]
Hence, to satisfy homogeneity condition, we just need to ensure 
\begin{equation}
\sum_{h\geq l}q_{j,k,h,l}^{*}=\sum_{h\geq l}q_{h,l,j,k}^{*},\quad\forall j\geq k.\label{eq:eq_CH}
\end{equation}

Second, $q^{*}$ needs to produce choice probabilities that match
the true probabilities. Let 
\begin{align*}
p_{jk}^{*} & :=\P\left(\rest{y\left({\bf x}_{s},a,\e_{s}^{*},\b\right)=j,y\left({\bf x}_{t},a,\e_{t}^{*},\b\right)=k}{\bf x}_{s},{\bf x}_{t}\right)\\
 & =\P\left(\rest{\left(\e_{s}^{*},\e_{t}^{*}\right)\in R_{j;s}^{*}\times R_{k;t}^{*}}{\bf x}_{s},{\bf x}_{t}\right).
\end{align*}
Then, the joint choice probabilities
\[
p_{jk}^{*}=\sum_{h=1}^{J}\sum_{l=1}^{J}q_{j,h,l,k}^{*}=\sum_{h\leq j}\sum_{l\geq k}q_{j,h,l,k}^{*}
\]
To match with true probabilities, we need 
\begin{equation}
\sum_{h\leq j}\sum_{l\geq k}q_{j,h,l,k}^{*}=p_{jk},\forall j,k=1,\ldots,J\label{eq:eq_TP}
\end{equation}

Given $p_{j\left(s\right)}=\sum_{k}p_{jk}$ and $p_{j\left(t\right)}=\sum_{k}p_{kj}$,
we can now translate condition \eqref{eq:pst_>0} about $p_{j\left(s\right)}-p_{j\left(t\right)}$
into a restriction about $\left(p_{jk}\right)$:
\begin{equation}
\sum_{j=h}^{J}\sum_{k}p_{jk}\geq\sum_{j=h}^{J}\sum_{k}p_{kj},\quad\forall h=1,\ldots,J.\label{eq:cond_pjk}
\end{equation}

To summarize, given \eqref{eq:pst_>0}, a condition on $p_{jk}$ that
we know to hold, we need to show that there exists a $q^{*}:=\left(q_{j,k,h,l}^{*}\right)_{j\geq k,h\geq l}$
such that \eqref{eq:eq_CH} and \eqref{eq:eq_TP} both hold. 

The above, however, is exactly the same as the one solved in \citet*{pakes2016moment},
as summarized by the following lemma. 
\end{proof}
\begin{lem}[Nonnegative Solvability of Linear System in \citet{pakes2016moment}]
\label{lem:PPsharp}Let $q^{*}\equiv\left(q_{jkhl}^{*}\right)$ be
a $\left(\frac{1}{2}J\left(J+1\right)\right)^{2}$-dimensional vector
indexed by $j,k,h,l\in\left\{ 1,\ldots,J\right\} $ such that
\[
j\geq k\quad\text{and}\quad h\geq l.
\]
Let $p\equiv\left(p_{jk}\right)$ be a $J^{2}$-dimensional vector
indexed by $j,k\in\left\{ 1,\ldots,J\right\} $ s.t. $p\in\D^{J^{2}-1}$,
i.e., $p$ is a (discrete) probability distribution over $J^{2}$
points defined by $jk$.

Suppose that
\begin{equation}
\sum_{j=h}^{J}\sum_{k}p_{jk}\geq\sum_{j=h}^{J}\sum_{k}p_{kj},\quad\forall h=1,\ldots,J.\label{eq:st_ineq}
\end{equation}
Then, there exists a nonnegative $q^{*}\geq0$ that solves the following
joint system of equations:
\begin{align}
\sum_{h\leq j}\sum_{l\geq k}q_{j,h,l,k}^{*}=p_{jk}, & \quad\forall j,k\in\left\{ 1,\ldots,J\right\} .\label{eq:Eq_Prob}\\
\sum_{h}\sum_{l\leq h}\left(q_{h,l,j,k}^{*}-q_{j,k,h,l}^{*}\right)=0, & \quad\forall j,k\in\left\{ 1,\ldots,J\right\} \text{ s.t. }j\geq k.\label{eq:Eq_Stat}
\end{align}
\end{lem}
Lemma \ref{lem:PPsharp} above summarizes the part of \citet{pakes2016moment}'s
proof of their Theorem 2 that is useful to our setup. Specifically,
the inequality constraints in \eqref{eq:st_ineq} are the same as
those in equation (S5) of \citet{pakes2016moment}; equations in \eqref{eq:Eq_Prob}
are the same as those in (S3) of \citet{pakes2016moment}; equations
in \eqref{eq:Eq_Stat} are the same as those in (S4) of \citet{pakes2016moment}.
The proof in \citet{pakes2016moment} after their equation (S5) then
establishes the conclusion of Lemma \ref{lem:PPsharp}.

\section{\label{sec:App_pf_conv}Proof of Theorem \ref{thm:Consistency}}

We first prove two lemmas before formally proving Theorem \ref{thm:Consistency}.
\begin{lem}
\label{lem:Q_Cts} $Q:\S^{d-1}\to\R_{+}$ is continuous.
\end{lem}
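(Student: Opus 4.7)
The plan is to establish continuity by applying the dominated convergence theorem separately to each summand $Q_{j,t,s}(\b)=\E[G(\g_{j,t,s}(\X_{i,ts}))\,\l_j(\X_{i,ts};\b)]$ and then appealing to the finiteness of the sum $Q(\b)=\sum_{j,t,s}Q_{j,t,s}(\b)$. The two ingredients required for dominated convergence are (i) an integrable dominating function that is independent of $\b$, and (ii) pointwise continuity in $\b$ of the integrand for $\P$-almost every realization of $\X_{i,ts}$.

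For (i), I would first note that $y_{ijt}\in\{0,1\}$ forces $\g_{j,t,s}(\X_{i,ts})\in[-1,1]$, so that the Lipschitz continuity of $G$ (Assumption \ref{assu:NiceG}) together with $G(0)=0$ (which follows from one-sided sign preservation) yields $|G(\g_{j,t,s}(\X_{i,ts}))|\leq L$ for the Lipschitz constant $L$. Combined with $\l_j\in\{0,1\}$, the integrand is uniformly bounded by $L$, which is trivially integrable.

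For (ii), fix $\b\in\S^{D-1}$. Since only $\l_j$ depends on $\b$, I would analyze its discontinuity set at $\b$. Each factor $\ind\{(-1)^{\ind\{k\neq j\}}(X_{ikt}-X_{iks})'\b'\leq 0\}$, viewed as a function of $\b'$, is continuous at $\b$ unless $(X_{ikt}-X_{iks})'\b=0$ with $X_{ikt}\neq X_{iks}$; note that if $X_{ikt}=X_{iks}$ the factor is identically $1$ in $\b'$. Writing the decomposition from Assumption \ref{assu:NoMass} as $X_{ikt}-X_{iks}=r_k\cdot v_k$, the bad event reduces to $\{v_k\neq 0,\ v_k'\b=0\}$, i.e.\ $v_k$ lying on the great (sub)sphere $\S^{D-1}\cap\b^{\perp}$. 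Under Assumption \ref{assu:NoMass}, which rules out mass points in the distribution of $v_k$ away from $\bs 0$, this lower-dimensional set carries zero probability. A union bound over the $J$ factors then yields that $\l_j(\X_{i,ts};\cdot)$ is continuous at $\b$ for $\P$-a.e.\ $\X_{i,ts}$.

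Combining (i) and (ii), dominated convergence gives $Q_{j,t,s}(\b_n)\to Q_{j,t,s}(\b)$ for any sequence $\b_n\to\b$ in $\S^{D-1}$, so each $Q_{j,t,s}$ is continuous and hence so is the finite sum $Q$. The main obstacle is the almost-sure-continuity step: it hinges on a careful interpretation of Assumption \ref{assu:NoMass} ensuring the great (sub)sphere $\S^{D-1}\cap\b^{\perp}$ is $v_k$-null, rather than merely ruling out isolated atoms; I would state this observation explicitly in the proof. The remaining steps are routine applications of dominated convergence and finite additivity.
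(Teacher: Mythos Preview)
Your proof is correct and takes essentially the same approach as the paper: both establish almost-sure continuity of the integrand in $\b$ via Assumption \ref{assu:NoMass} and then pass to continuity of the expectation, with the paper invoking Lemma 2.4 of \citet{newey1994asymp} where you apply dominated convergence directly. Your explicit caveat on the interpretation of Assumption \ref{assu:NoMass}---that ``no mass point'' must render the great sub-sphere $\S^{D-1}\cap\b^{\perp}$ null rather than merely rule out atoms---is a valid point that the paper's proof glosses over.
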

\begin{proof}
Recalling that $v_{k}\left(\ol{\X}-\ul{\X}\right)=\ol X_{k}-\ul X_{k}/\norm{\ol X_{k}-\ul X_{k}}$
whenever $\ol X_{k}\neq\ul X_{k}$ while $v_{k}\left(\ol{\X}-\ul{\X}\right)=0$
when $\ol X_{k}=\ul X_{k}$, we have
\begin{align*}
G\left(\g_{\tJ,t,s}\left(\X_{i,ts}\right)\right)\l_{j}\left(\X_{i,ts};\b\right)= & \ G\left(\g_{\tJ,t,s}\left(\X_{i,ts}\right)\right)\prod_{k=1}^{J}\ind\left\{ \left(-1\right)^{\ind\left\{ k\in\tJ\right\} }\left(X_{ikt}-X_{iks}\right)^{'}\b\geq0\right\} \\
= & \ G\left(\g_{\tJ,t,s}\left(\X_{i,ts}\right)\right)\prod_{k=1}^{J}\ind\left\{ \left(-1\right)^{\ind\left\{ k\in\tJ\right\} }v_{k}\left({\bf X}_{it}-\X_{is}\right)^{'}\b\geq0\right\} ,
\end{align*}
which is continuous in $\b$ with probability one, since $v_{k}\left({\bf X}_{it}-\X_{is}\right)$
has no mass point except possibly at ${\bf 0}$, in which case the
indicator degenerates to a constant over $\b\in\S^{d-1}$. Since $\X_{i,ts}$
is i.i.d. across $i$, $\S^{d-1}$ is compact, and the indicator function
is bounded, all conditions for Lemma 2.4 in \citet{newey1994asymp}
are satisfied, by which we conclude that $Q=\sum_{\tJ,t,s}Q_{\tJ,t,s}$
is continuous on $\S^{d-1}$.
\end{proof}
\begin{lem}
\label{lem:Qn_conv}Under Assumptions \ref{assu:RandSamp}, \ref{assu:NiceG},
and \ref{assu:NoMass}, $\sup_{\b\in\S^{d-1}}\left|\hat{Q}\left(\b\right)-Q\left(\b\right)\right|=O_{p}\left(c_{N}\right).$
\end{lem}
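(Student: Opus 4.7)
The plan is to decompose the estimation error into a ``first-stage plug-in'' component and a ``standard empirical process'' component, and bound each at rate $O_p(c_N)$. Define the infeasible oracle criterion $\tilde{Q}(\b) := \sum_{j,t,s} \tilde{Q}_{j,t,s}(\b)$ with $\tilde{Q}_{j,t,s}(\b) := \frac{1}{N}\sum_i G(\g_{j,t,s}(\X_{i,ts}))\l_j(\X_{i,ts};\b)$, and write
\[
\hat{Q}(\b) - Q(\b) = \bigl[\hat{Q}(\b) - \tilde{Q}(\b)\bigr] + \bigl[\tilde{Q}(\b) - Q(\b)\bigr].
\]
I will control the two brackets separately, uniformly in $\b \in \S^{d-1}$, and then combine.

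For the second bracket (pure empirical process term), the key observation is that $\l_j(\X_{i,ts};\b) = \prod_{k=1}^J \ind\{(-1)^{\ind\{k\neq j\}}(X_{ikt}-X_{iks})'\b \leq 0\}$ is an intersection of $J$ half-space indicators, so the class $\{\l_j(\cdot;\b):\b\in\S^{d-1}\}$ is VC (hence $\P$-Donsker) with a constant envelope. The factor $G(\g_{j,t,s}(\X_{i,ts}))$ does not depend on $\b$ and is bounded (since $\g_{j,t,s}\in[-1,1]$ and $G$ is Lipschitz from $G(0)=0$), so the product class is $\P$-Donsker with a bounded envelope. A standard uniform LLN then gives $\sup_\b |\tilde{Q}(\b) - Q(\b)| = O_p(N^{-1/2})$.

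For the first bracket (plug-in term), Lipschitz continuity of $G$ (Assumption \ref{assu:NiceG}) with constant $L$ and $|\l_j|\le 1$ yield the pointwise-in-$\b$ bound
\[
|\hat{Q}_{j,t,s}(\b) - \tilde{Q}_{j,t,s}(\b)| \le \frac{L}{N}\sum_{i=1}^N \bigl|\hat{\g}_{j,t,s}(\X_{i,ts}) - \g_{j,t,s}(\X_{i,ts})\bigr|,
\]
which is uniform in $\b$. I will then split the right-hand side into $(\P_N - \P)|\hat{\g}_{j,t,s}-\g_{j,t,s}|$ plus $\E|\hat{\g}_{j,t,s}-\g_{j,t,s}|$. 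Since $\G$ is $\P$-Donsker by Assumption \ref{assu:FS_Conv}(i), so is $\G-\G$ and its composition with the Lipschitz map $|\cdot|$; the empirical-process part is therefore $O_p(N^{-1/2})$ uniformly over $\hat{\g}_{j,t,s}\in\G$. The bias part is bounded by Jensen's inequality and Assumption \ref{assu:FS_Conv}(ii): $\E|\hat{\g}_{j,t,s}-\g_{j,t,s}| \le \|\hat{\g}_{j,t,s}-\g_{j,t,s}\|_2 = O_p(c_N)$. Summing over the finite index set $(j,t,s)$ and combining the two brackets gives $\sup_\b |\hat{Q}(\b)-Q(\b)| = O_p(N^{-1/2}) + O_p(c_N) = O_p(c_N)$, using that the nonparametric rate $c_N$ is slower than $N^{-1/2}$ under the neural-network rate cited in Section \ref{subsubsec:Comp}.

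The main obstacle I anticipate is the joint handling of the $\b$-uniformity and the first-stage nuisance $\hat{\g}$: one must ensure that the triangle decomposition can be made pointwise in $\b$ before invoking stochastic equicontinuity, which is why I reduce to $\sup_\b$ via the absolute-value bound before deploying the Donsker argument. A minor secondary issue is verifying that the product class $\{G(f)\cdot\l_j(\cdot;\b):f\in\G,\b\in\S^{d-1}\}$ is Donsker with a uniformly integrable envelope; this follows from the boundedness of $G$ on $[-1,1]$ and the product-of-bounded-Donsker-classes preservation result (e.g., van der Vaart and Wellner, Theorem 2.10.6), so no substantial new technique is required beyond what Assumptions \ref{assu:FS_Conv}--\ref{assu:NoMass} already provide.
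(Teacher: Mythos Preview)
Your proof is correct and relies on the same ingredients as the paper's---Lipschitz continuity of $G$, the VC/Donsker property of half-space indicators, and the Donsker assumption on $\G$---but splits the triangle inequality at a different intermediate point. The paper inserts $Q(\b,\hat{\g})$, the \emph{population} criterion with the estimated nuisance plugged in, and therefore needs the \emph{joint} product class $\{G(f)\l_j(\cdot;\b):f\in\G,\b\in\S^{d-1}\}$ to be Donsker in order to control $\hat{Q}(\b,\hat{\g})-Q(\b,\hat{\g})$ uniformly; the bias term $Q(\b,\hat{\g})-Q(\b,\g)$ is then handled directly by Cauchy--Schwarz at the population level. You instead insert $\tilde{Q}(\b)=\hat{Q}(\b,\g)$, the \emph{sample} criterion with the true nuisance, which lets you decouple the two sources of uniformity: the $\b$-dependence disappears immediately via $|\l_j|\le 1$ in the first bracket, and the second bracket needs only Donsker over $\b$ with a fixed bounded weight. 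The trade-off is that your first bracket then requires an additional empirical-process step (Donsker over $\G$ via the Lipschitz map $|\cdot|$) to pass from the sample average of $|\hat{\g}-\g|$ to its expectation, whereas the paper absorbs this into a single joint Donsker claim. Both routes are valid and yield the same $O_p(c_N)$ rate; yours is arguably more modular, while the paper's is slightly more compact.
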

\begin{proof}
We first prove the convergence of $\hat{Q}_{\tJ,t,s}\left(\b\right)$
to $Q_{\tJ,t,s}\left(\b\right)$ for each $\left(\tJ,t,s\right)$.
For each generic deterministic function $\tilde{\g}_{\tJ,t,s}$, define
\begin{align*}
Q_{\tJ,t,s}\left(\b,\tilde{\g}\right) & :=\E\left[G\left(\tilde{\g}_{\tJ,t,s}\left(\X_{i,ts}\right)\right)\l_{\tJ}\left(\X_{i,ts};\b\right)\right],\\
\hat{Q}_{\tJ,t,s}\left(\b,\tilde{\g}\right) & :=\frac{1}{n}\sum_{i=1}^{n}G\left(\tilde{\g}_{\tJ,t,s}\left(\X_{i,ts}\right)\right)\l_{\tJ}\left(\X_{i,ts};\b\right),
\end{align*}
so that $\hat{Q}_{\tJ,t,s}\left(\b\right)=\hat{Q}_{\tJ,t,s}\left(\b,\tilde{\g}_{\tJ,t,s}\right)$
and $Q_{\tJ,t,s}\left(\b\right)=Q_{\tJ,t,s}\left(\b,\g\right)$. For
notational simplicity, we suppress the subscript $\left(\tJ,t,s\right)$
for the moment.

Defining ${\cal Q}:=\left\{ G\left(\tilde{\g}\left({\bf x}_{ts}\right)\right)\l\left({\bf x}_{ts};\b\right):\tilde{\g}\in\G,\b\in\S^{d-1}\right\} $,
we first argue that ${\cal Q}$ is a $\P$-Donsker class based on
\citet*{van1996weak}. First, it is easy to show by Assumption \ref{assu:NiceG}
that $G\left(0\right)=0$, which together with the Lipschitz continuity
of $G$, implies that $\E\left[G^{2}\left(\tilde{\g}\left({\bf X}_{i}\right)\right)\right]\leq M\E\left[\tilde{\g}^{2}\left({\bf X}_{i}\right)\right]<\infty$
and $\E\left|G\left(\tilde{\g}\left({\bf X}_{i}\right)\right)\right|\leq\E\left|\tilde{\g}\left({\bf X}_{i}\right)\right|\leq\sup_{\tilde{\g}\in\G}\E\left|\tilde{\g}\left({\bf X}_{i}\right)\right|<\infty$.
Then, as $\G$ is $\P$-Donsker, $G\circ\tilde{\g}$ must also be
$\P$-Donsker. Second, recall that $\l\left(\X_{i,ts};\b\right)$
is the product of indicators of half planes, while the collection
of $\ind\left\{ \left({\bf x}_{kt}-{\bf x}_{ks}\right)^{'}\b\geq0\right\} $
over $\b\in\S^{d-1}$ is a well-known VC-class of functions (sets)
and is thus $\P$-Donsker. Finally, since the indicator function is
uniformly bounded and $\sup_{\tilde{\g}\in\G}\E\left|G\left(\tilde{\g}\left({\bf X}_{i}\right)\right)\right|<\infty$,
we conclude that ${\cal Q}$ is also $\P$-Donsker:
\begin{equation}
\sup_{\b\in\S^{d-1}}\sup_{\tilde{\g}\in\G}\left|\hat{Q}\left(\b,\tilde{\g}\right)-Q\left(\b,\tilde{\g}\right)\right|=O_{p}\left(N^{-\frac{1}{2}}\right).\label{eq:GC}
\end{equation}
Next, by Assumption \ref{assu:FS_Conv}, we have
\begin{align}
\sup_{\b\in\S^{d-1}}\left|Q\left(\b,\hat{\g}\right)-Q\left(\b,\g\right)\right|\leq & \sup_{\b\in\S^{d-1}}\int\left|G\left(\hat{\g}\left({\bf x}_{ts}\right)\right)-G\left(\g\left({\bf x}_{ts}\right)\right)\right|\l_{j}\left({\bf x}_{ts};\b\right)d\P\left({\bf x}_{ts}\right)\nonumber \\
\leq & \ M\sqrt{\int\left(\hat{\g}\left({\bf x}_{ts}\right)-\g\left({\bf x}_{ts}\right)\right)^{2}d\P\left({\bf x}_{ts}\right)}=O_{p}\left(c_{N}\right)\label{eq:Bias}
\end{align}
by Lipschitz continuity of $G$, $\left|\l_{j}\right|\leq1$, and
the Cauchy-Schwarz inequality. Combining \eqref{eq:GC} and \eqref{eq:Bias},
we have
\begin{align*}
\sup_{\b\in\S^{d-1}}\left|\hat{Q}\left(\b,\hat{\g}\right)-Q\left(\b,\g\right)\right|\leq & \sup_{\b\in\S^{d-1}}\left|\hat{Q}\left(\b,\hat{\g}\right)-Q\left(\b,\hat{\g}\right)\right|+\sup_{\b\in\S^{d-1}}\left|Q\left(\b,\hat{\g}\right)-Q\left(\b,\hat{\g}\right)\right|\\
\leq & \sup_{\b\in\S^{d-1}}\sup_{\tilde{\g}\in\G}\left|\hat{Q}\left(\b,\tilde{\g}\right)-Q\left(\b,\tilde{\g}\right)\right|+\sup_{\b\in\S^{d-1}}\left|Q\left(\b,\hat{\g}\right)-Q\left(\b,\hat{\g}\right)\right|\\
= & \ O_{p}\left(N^{-\frac{1}{2}}\right)+O_{p}\left(c_{N}\right)=O_{p}\left(c_{N}\right)
\end{align*}
since $N^{-\frac{1}{2}}=O_{p}\left(c_{N}\right)$ for nonparametric
estimators. Summing over all $\left(j,t,s\right)$, we have $\sup_{\b\in\S^{d-1}}\left|\hat{Q}\left(\b\right)-Q\left(\b\right)\right|=O_{p}\left(c_{N}\right)$.
\end{proof}

\subsubsection*{Main Proof of Theorem \ref{thm:Consistency}}
\begin{proof}
We verify Condition C.1 in \citet*[CHT thereafter]{chernozhukov2007estimation}
so as to apply their Theorem 3.1. Condition C.1(a) on the non-emptiness
and compactness of parameter space is satisfied given Theorem \ref{thm:SetID}.
Condition C.1(b) on the continuity of the population criterion function
$Q$ is proved by Lemma \ref{lem:Q_Cts}. Condition C.1(c) on measurability
of the sample criterion function is satisfied by construction. Conditions
C.1(d)(e) regarding the uniform convergence of $Q_{n}$ are satisfied
by Lemma \ref{lem:Qn_conv}. Hence, Theorem 3.1.(1) in CHT implies
the Hausdorff consistency of $\hat{B}$. The consistency of the point
estimator under the additional assumption of point identification
(i.e., $B_{0}$ is a singleton) follows from Theorem 3.2 of CHT.
\end{proof}

\section{\label{sec:App_PID}Sufficient Conditions for Point Identification}

In this section, we prove sufficient conditions for the point identification
of $\b_{0}$. For simplicity of notation, we fix $T=2$, and focus
on the conditions that would establish point identification using
identifying restrictions from singleton products $j=1,\ldots,J$.
Consequently, these sufficient conditions should be regarded as being
strictly more than necessary.

Define $\bs{\d}_{t}:=\left(\d_{1t},\ldots,\d_{Jt}\right)^{'}$, where
$\d_{jt}:=x_{jt}^{'}\b_{0}$, and recall that $\d_{ijt}=X_{ijt}^{'}\b_{0}$.
Denote
\[
\psi_{j}\left(\bs{\d}_{t};{\bf x}_{ts},{\bf A}_{i}\right):=\int\ind\left\{ u\left(\d_{jt},A_{ij},\e_{ijt}\right)>\max_{k\neq j}u\left(\d_{kt},A_{ik},\e_{ikt}\right)\right\} \text{d}{\cal \P}\left(\rest{\bs{\e}_{it}}{\bf X}_{i,ts}={\bf x}_{ts},{\bf A}_{i}\right)
\]
We first impose a strict multivariate monotonicity condition on $\psi_{j}\left(\cd\right)$.
\begin{assumption}[Strict Monotonicity of $\psi_{j}$]
\label{assu:psiA_strictmono} For any realized ${\bf A}_{i}$ and
${\bf x}_{ts}$, the function $\psi_{j}\left(\bs{\d}_{t};{\bf x}_{ts},{\bf A}_{i}\right):\R^{J}\to\R$
is strictly increasing in $\d_{jt}$ and decreasing in $\d_{kt}$
for $k\neq j$, i.e., if for any two periods $t$ and $s$ it is true
that $\d_{jt}>\d_{js}$ and $\d_{kt}<\d_{ks}$ for all $k\neq j$,
then $\psi_{j}\left(\bs{\d}_{t};{\bf x}_{ts},{\bf A}_{i}\right)>\psi_{j}\left(\bs{\d}_{s};{\bf x}_{ts},{\bf A}_{i}\right).$
\end{assumption}
\noindent We note that Assumption \ref{assu:psiA_strictmono} is implied
by a stronger version of Assumption \ref{assu:PMC_Mono} together
with an additional condition on the support of $u$ given $\left({\bf X}_{i},{\bf A}_{i}\right)$. 

\medskip{}

\noindent \textbf{Assumption 2$'$} \customlabel{as1prime}{2$'$}(Strict
Monotonicity of $u$)\textbf{.} $u\left(\d_{ijt},A_{ij},\e_{ijt}\right)$
\textit{is strictly increasing in the index $\d_{ijt}$, for every
realization of} $\left(A_{ij},\e_{ijt}\right)$.

\medskip{}

\noindent \textbf{Assumption 2$''$} \customlabel{as1pp}{2$''$}(Overlapping
Supports)\textbf{.} \textit{Conditional on any realization of ${\bf X}_{i}$
and ${\bf A}_{i}$, we have} $\bigcap_{j=1}^{J}int\left(Supp\left(u\left(\d_{ijt},A_{ij},\e_{ijt}\right)\right)\right)\neq\es$.

\medskip{}

\noindent In particular, Assumption \ref{as1pp} is directly implied
by the assumption of $\text{Supp}\left(u\left(\d_{ijt},A_{ij},\e_{ijt}\right)\right)$
being equal to the real line conditional on any realization of ${\bf X}_{i}$
and ${\bf A}_{i}$, which is again satisfied in additive panel multinomial
choice models with scalar fixed effects a la 
\[
u\left(\d_{ijt},A_{ij},\e_{ijt}\right)=\d_{ijt}+A_{ij}+\e_{ijt}
\]
under the assumption of $\text{Supp}\left(\rest{\e_{ijt}}{\bf X}_{i},{\bf A}_{i}\right)=\R$
as commonly imposed in the literature.
\begin{lem}
Assumptions \ref{as1prime} and \ref{as1pp} imply Assumption \ref{assu:psiA_strictmono}.
\end{lem}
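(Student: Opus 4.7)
The plan is to express the difference $\psi_j(\ol{\bs{\delta}},\mathbf{A}_i) - \psi_j(\ul{\bs{\delta}},\mathbf{A}_i)$ as an integral of a difference of indicators over $\bs{\epsilon}_{it}$ conditional on $\mathbf{A}_i$, and then to obtain strict positivity by combining pointwise nonnegativity (from Assumption \ref{as1prime}) with positive measure of a ``flip'' event (from Assumption \ref{as1pp}). Writing $\ol u_{ik} := u(\ol{\delta}_k, A_{ik}, \epsilon_{ik})$ and $\ul u_{ik} := u(\ul{\delta}_k, A_{ik}, \epsilon_{ik})$, the definition of $\psi_j$ in \eqref{eq:psi_XA} gives
\begin{align*}
\psi_j(\ol{\bs{\delta}},\mathbf{A}_i) - \psi_j(\ul{\bs{\delta}},\mathbf{A}_i) = \int \Bigl[\ind\{\ol u_{ij} \geq \max_{k \neq j}\ol u_{ik}\} - \ind\{\ul u_{ij} \geq \max_{k \neq j}\ul u_{ik}\}\Bigr]\, d\P(\bs{\epsilon}_{it} \mid \mathbf{A}_i).
\end{align*}
Because the first argument of $\psi_j$ is $\delta_{ij}$ while the remaining arguments are $-\delta_{ik}$ for $k \neq j$, the hypothesis that all $J$ arguments of $\psi_j$ strictly increase translates into $\ol{\delta}_j > \ul{\delta}_j$ and $\ol{\delta}_k < \ul{\delta}_k$ for $k \neq j$.

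By Assumption \ref{as1prime}, at every realized $\bs{\epsilon}_{it}$ we then have $\ol u_{ij} > \ul u_{ij}$ and $\ol u_{ik} < \ul u_{ik}$ for $k \neq j$. Consequently, whenever the second indicator equals one, the chain $\ol u_{ij} > \ul u_{ij} \geq \max_{k \neq j}\ul u_{ik} > \max_{k \neq j}\ol u_{ik}$ forces the first indicator to equal one as well, so the integrand is pointwise $\geq 0$. To upgrade to a strictly positive integral, I would invoke Assumption \ref{as1pp}: pick $v^* \in \bigcap_k \text{int}(\text{Supp}(\ul u_{ik}))$, and by applying the same assumption at the $\ol{\bs{\delta}}$-index realization arrange that $v^* \in \bigcap_k \text{int}(\text{Supp}(\ol u_{ik}))$ as well. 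On the ``flip'' event
\[
E := \{\ul u_{ij} < v^* < \ol u_{ij}\} \cap \bigcap_{k \neq j}\{\ol u_{ik} < v^* < \ul u_{ik}\},
\]
product $j$ beats every competitor under $\ol{\bs{\delta}}$ (so the first indicator equals one via $\ol u_{ij} > v^* > \ol u_{ik}$) while losing to every competitor under $\ul{\bs{\delta}}$ (so the second equals zero via $\ul u_{ik} > v^* > \ul u_{ij}$), giving integrand equal to one on $E$.

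The remaining and main obstacle is verifying $\P(E \mid \mathbf{A}_i) > 0$ under the possibly dependent joint distribution of $\bs{\epsilon}_{it}$. Each coordinate-wise straddling event $\{\ul u_{ik} < v^* < \ol u_{ik}\}$ has positive marginal probability since $\ul u_{ik} < \ol u_{ik}$ pointwise by Assumption \ref{as1prime} and the interior-support property guarantees positive probability that $\ul u_{ik}$ falls just below $v^*$ with $\ol u_{ik}$ just above. Combining these into a positive joint probability is the delicate step; I would handle it either by reading Assumption \ref{as1pp} as a joint-support condition on the random vector $(\ol u_{ij},(\ul u_{ik})_{k \neq j})$, or by choosing an interior point of the joint conditional support of $\bs{\epsilon}_{it}$ and exploiting continuity of $u$ in $\epsilon$ at that point to produce an open product neighborhood on which all the straddling inequalities hold simultaneously.
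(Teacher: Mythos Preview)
The paper states this lemma without proof, so there is no paper argument to compare against; your integral decomposition and the pointwise nonnegativity step (via Assumption~\ref{as1prime}) constitute the natural opening and are correct.

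Your identification of the joint-probability step as the ``main obstacle'' is exactly right, and in fact it is a genuine gap that neither of your two proposed fixes closes under the assumptions as literally stated. Assumption~\ref{as1pp} constrains only the \emph{marginal} supports of $u(\d_k,A_{ik},\e_{ikt})$ across $k$, not the joint support, and no continuity of $u$ in $\e$ is assumed anywhere. A concrete counterexample shows the lemma can fail as written: take $J=2$ with an outside option of utility $\e_{i0t}$, let $u(\d,A,\e)=\d+\e$, set $\e_{i1t}=\e_{i2t}=:\e$ almost surely with full support on $\R$, and let $\e_{i0t}$ be independent with full support. Assumptions~\ref{as1prime} and~\ref{as1pp} both hold, yet $\psi_1(\d_1,-\d_2,{\bf A}_i)=\ind\{\d_1\ge\d_2\}\cdot\P(\d_1+\e\ge\e_{i0t})$, which vanishes identically on the region $\{\d_1<\d_2\}$. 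Taking $(\ul\d_1,\ul\d_2)=(0,2)$ and $(\ol\d_1,\ol\d_2)=(1,1.5)$ yields a strict increase in both arguments of $\psi_1$ with no change in its value, so Assumption~\ref{assu:psiA_strictmono} fails. The lemma therefore needs either a joint-support strengthening of Assumption~\ref{as1pp} (your first fix) or some restriction on the dependence of $\bs{\e}_{it}$ across products; your hesitation here is well placed, and the paper appears to have left this step implicit.
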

\noindent Finally, we impose the following assumption on $\D{\bf X}_{i}$,
with $\D X_{ij}:=X_{ij1}-X_{ij2}$ for all individual $i$ and product
$j$ between period 1 and period 2.
\begin{assumption}[Full-Directional Support of $\D{\bf X}_{i}$]
\label{assu:DeltaXSupp} Suppose either (a) or (b) is true:
\begin{itemize}
\item[(a)]  ${\bf 0}\in int\left(Supp\left(\D{\bf X}_{i}\right)\right)$.
\item[(b)]  There exists some $k\in\left\{ 1,\ldots,D\right\} $ such that $\b_{0}^{k}\neq0$
and $Supp\left(\rest{\D X_{ij}^{k}}\D X_{il},\sp l\neq j\right)=\R$
for all $j\in\left\{ 1,\ldots,J\right\} $. Furthermore, for all $j\in\left\{ 1,\ldots,J\right\} $,
$Supp\left(\rest{\D X_{ij}}\D X_{il},\sp l\neq j\right)$ is not contained
in a proper linear subspace of $\R^{D}$.
\end{itemize}
\end{assumption}
\noindent Assumption \ref{assu:DeltaXSupp}(a) is satisfied when $X_{ij}$
is a continuous random vector. On the other hand, Assumption \ref{assu:DeltaXSupp}(b)
can accommodate discrete regressors generally, but requires one continuous
covariate with a large support. Assumption \ref{assu:DeltaXSupp}
ensures that $\D X_{ij}^{'}\b_{0}>0$ and $\D X_{ik}^{'}\b_{0}<0$
for all $k\neq j$ hold simultaneously with strictly positive probability.
\begin{thm}[Point Identification]
\label{thm:consistencyPMC} Under Assumptions \ref{assu:RandSamp},
\ref{assu:EpsDist}, \ref{assu:psiA_strictmono}, and \ref{assu:DeltaXSupp},
$\b_{0}$ is point identified on $\S^{D-1}$.
\end{thm}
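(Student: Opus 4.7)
The strategy is by contradiction: suppose $\b \in \S^{D-1}$ with $\b \neq \b_{0}$ satisfies $Q(\b) = 0$. Since each $Q_{j,t,s}$ is non-negative and the smoothing function $G$ is strictly positive on $(0,\infty)$, the equation $Q(\b) = 0$ is equivalent to
$$\P\bigl(\g_{j,t,s}(\X_{i,ts}) > 0,\ \l_{j}(\X_{i,ts};\b) = 1\bigr) = 0 \quad \text{for every } (j,t,s). \qquad (\star)$$
The plan is to construct an event of strictly positive probability contained in this null set for $(j,t,s) = (1,1,2)$, producing a contradiction. Writing $\D X_{ik} := X_{ik1} - X_{ik2}$, define
\begin{align*}
W^{+} &:= \{v \in \R^{D} : v'\b_{0} > 0,\ v'\b \leq 0\},\\
W^{-} &:= \{v \in \R^{D} : v'\b_{0} < 0,\ v'\b \geq 0\},\\
E &:= \{\D X_{i1} \in W^{+}\} \cap \bigcap_{k=2}^{J}\{\D X_{ik} \in W^{-}\}.
\end{align*}
On $E$, the index differences satisfy $X_{i11}'\b_{0} > X_{i12}'\b_{0}$ and $X_{ik1}'\b_{0} < X_{ik2}'\b_{0}$ for every $k \neq 1$, so strict monotonicity of $\psi_{1}(\cdot, {\bf A}_{i})$ (Assumption \ref{assu:psiA_strictmono}) yields a strictly positive integrand in the definition of $\g_{1,1,2}$ for every realization of ${\bf A}_{i}$; integrating over ${\bf A}_{i}$ preserves the strict inequality, hence $\g_{1,1,2}(\X_{i,12}) > 0$. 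By construction $\l_{1}(\X_{i,12};\b) = 1$ on $E$ as well, so $E \subseteq \{\g_{1,1,2} > 0,\ \l_{1} = 1\}$ and $(\star)$ forces $\P(E) = 0$.

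It therefore suffices to show $\P(E) > 0$. Since $\b,\b_{0} \in \S^{D-1}$ with $\b \neq \b_{0}$, the wedges $W^{+}$ and $W^{-}$ are both non-empty open subcones of $\R^{D}$: when $\b = -\b_{0}$ they degenerate to open half-spaces, and otherwise $\b$ and $\b_{0}$ are linearly independent, so the surjective linear map $v \mapsto (v'\b_{0}, v'\b)$ from $\R^{D}$ onto $\R^{2}$ exhibits each wedge as the preimage of a non-empty open quadrant. Under Assumption \ref{assu:DeltaXSupp}(a), there exists a ball $B_{\d}(\mathbf{0}) \subseteq \text{Supp}(\D\X_{i})$. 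Picking any $(v_{1},\ldots,v_{J}) \in W^{+} \times (W^{-})^{J-1}$ and scaling by a sufficiently small $\lambda > 0$ places $(\lambda v_{1},\ldots,\lambda v_{J})$ in the open product cone defining $E$ while keeping it inside $B_{\d}(\mathbf{0})$, so $E$ intersects the interior of the support in a non-empty open set and $\P(E) > 0$, giving the desired contradiction.

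The main obstacle is Assumption \ref{assu:DeltaXSupp}(b), where $\mathbf{0}$ need not lie in the interior of the joint support and the open-ball argument fails. The approach I would take is to argue conditionally, one product at a time: for each $j'$, condition on $(\D X_{il})_{l \neq j'}$ and exploit the large-support property of $\D X_{ij'}^{k^{*}}$ together with the non-degeneracy of the conditional distribution of $\D X_{ij'}$ on any proper linear subspace of $\R^{D}$ to show that the relevant wedge ($W^{+}$ for $j' = 1$, $W^{-}$ otherwise) receives strictly positive conditional mass. Intuitively, driving $\D X_{ij'}^{k^{*}}$ toward $\pm\infty$ controls the sign of $(\D X_{ij'})'\b_{0}$ through $\b_{0}^{k^{*}} \neq 0$, while the non-degeneracy of the conditional distribution allows the remaining coordinates to be chosen inside the support so as to cross the hyperplane $\{v'\b = 0\}$ in the correct direction without destroying the sign set by $\b_{0}$. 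Iterating across $j' = 1,\ldots,J$ via the tower property then yields $\P(E) > 0$, completing the contradiction and establishing that $B_{0} \cap \S^{D-1} = \{\b_{0}\}$.
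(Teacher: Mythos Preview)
Your argument is correct and follows essentially the same route as the paper's: both construct a set (your $E$, the paper's $H_{j}(\b)$) on which strict monotonicity of $\psi_{j}$ forces $\g_{j,t,s}>0$ while simultaneously $\l_{j}(\cdot;\b)=1$, and then invoke Assumption~\ref{assu:DeltaXSupp} to show this set has positive probability, giving $Q_{j}(\b)>0$ for any $\b\neq\b_{0}$. Your explicit cone-scaling argument for case~(a) is in fact more detailed than the paper's, which simply asserts $\P\big(\D\X_{i}\in H_{j}(\b)\big)>0$ without elaboration; for case~(b) both your sketch and the paper's proof leave the verification largely to the reader.
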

\begin{proof}
Fix any ${\bf x}_{ts}$ in the support of ${\bf X}_{i}$. By definition
of $\g_{j,t,s}$, we have
\begin{align*}
\g_{j,t,s}\left({\bf x}_{ts}\right) & =\int\left[\psi_{j}\left(\bs{\d}_{t};{\bf x}_{ts},{\bf A}_{i}\right)-\psi_{j}\left(\bs{\d}_{s};{\bf x}_{ts},{\bf A}_{i}\right)\right]\text{d}\P\left(\rest{{\bf A}_{i}}{\bf X}_{i,ts}={\bf x}_{ts}\right).
\end{align*}
Hence, under Assumption \ref{assu:psiA_strictmono}, we have
\begin{equation}
\d_{jt}>\d_{js}\text{ and }\d_{kt}<\d_{ks}\text{ for all }k\neq j\quad\imp\quad\g_{j,t,s}\left({\bf x}_{ts}\right)>0,\label{eq:gamma_stmono}
\end{equation}
since $\psi_{j}\left(\bs{\d}_{t};{\bf x}_{ts},{\bf A}_{i}\right)>\psi_{j}\left(\bs{\d}_{s};{\bf x}_{ts},{\bf A}_{i}\right)$
for every realization of ${\bf A}_{i}$ given ${\bf X}_{i,ts}={\bf x}_{ts}$.
Together with Assumption \ref{assu:DeltaXSupp}, we deduce that
\begin{align*}
\P\left\{ \g_{j,t,s}\left({\bf X}_{i}\right)>0\right\}  & \geq\P\left\{ \D X_{ij}^{'}\b_{0}>0\ \wedge\ \D X_{ik}^{'}\b_{0}<0,\sp\any k\neq j\right\} >0.
\end{align*}
Now for any $\b\in\S^{D-1}\backslash\left\{ \b_{0}\right\} $, define
for any product $j$:
\[
H_{j}\left(\b\right):=\left\{ {\bf v}\in Supp\left(\D\X_{i}\right):\sp v_{j}^{'}\b<0<v_{j}^{'}\b_{0},\ \wedge\ v_{k}^{'}\b_{0}<0<v_{k}^{'}\b,\,\any k\neq j\right\} .
\]
As $\b\neq\b_{0}$, by Assumption \ref{assu:DeltaXSupp} we have 
\begin{equation}
\P\left(\D\X_{i}\in H_{j}\left(\b\right)\right)>0.\label{eq:pos_prob}
\end{equation}
Moreover, for any realization of ${\bf X}_{i}$ such that $\D\X_{i}\in H_{j}\left(\b\right)$,
we must have: (i) $\g_{j,t,s}\left({\bf X}_{i}\right)>0$ by \eqref{eq:gamma_stmono},
and (ii)
\[
\l_{j}\left(\D\X_{i},\b\right)=\prod_{k=1}^{J}\ind\left\{ \left(-1\right)^{\ind\left\{ k=j\right\} }\D X_{ik}^{'}\b\geq0\right\} =1
\]
so that 
\[
G\left(\g_{j}\left({\bf X}_{i}\right)\right)\l_{j}\left(\D\X_{i},\b\right)=G\left(\g_{j}\left({\bf X}_{i}\right)\right)>0
\]
for all such ${\bf X}_{i}$. Hence, we have
\begin{equation}
\E\left[\rest{G\left(\g_{j}\left({\bf X}_{i}\right)\right)}\D\X_{i}\in H_{j}\left(\b\right)\right]>0.\label{eq:pos_condE}
\end{equation}
Combining \eqref{eq:pos_prob} and \eqref{eq:pos_condE}, we have
\begin{align*}
Q_{j}\left(\b\right) & =\E\left[G\left(\g_{j}\left({\bf X}_{i}\right)\right)\l_{j}\left(\D\X_{i},\b\right)\right]\\
 & \geq\E\left[G\left(\g_{j}\left({\bf X}_{i}\right)\right)\l_{j}\left(\D\X_{i},\b\right)\ind\left\{ \D\X_{i}\in H_{j}\left(\b\right)\right\} \right]\\
 & =\E\left[G\left(\g_{j}\left({\bf X}_{i}\right)\right)\ind\left\{ \D\X_{i}\in H_{j}\left(\b\right)\right\} \right]\\
 & =\E\left[\rest{G\left(\g_{j}\left({\bf X}_{i}\right)\right)}\D\X_{i}\in H_{j}\left(\b\right)\right]\P\left(\D\X_{i}\in H_{j}\left(\b\right)\right)\\
 & >0=Q_{j}\left(\b_{0}\right).
\end{align*}
\end{proof}

\section{More on the Adaptive-Grid Algorithm\label{sec:app_GridSearch}}

We elaborate on the adaptive-grid algorithm described in Section \ref{subsubsec:Comp}
and provide additional details regarding its practical implementation.
A graphical illustration of the algorithm is presented in Figure \ref{fig:Algo}.
\begin{enumerate}
\item \textbf{Initialization and Coarse Grid Search.} The search begins
by defining a coarse, uniform grid of $M_{0}^{D-1}$ points over the
initial parameter space $\T$. The code sets $M_{0}$ via the \texttt{M\_Step}
parameter. The objective function $Q(\beta)$ is evaluated at each
point on this grid. This stage then iteratively refines the search
space. After each evaluation, the algorithm prunes the parameter space
by retaining only the region containing the points that yielded the
lowest 20\% of objective function values. A new, equally-sized grid
is then constructed within this smaller, more promising region, and
the process repeats. This iterative pruning continues for a fixed
number of loops or until the minimum objective function value stabilizes,
effectively narrowing the search to a region likely to contain the
global minimum. The procedure includes a check to recenter the search
space if the minimum appears to lie on a boundary of the angular coordinates,
which prevents erroneously discarding a wrapped-around solution.
\item \textbf{Fine Grid Refinement.} Upon completion of the coarse search,
the algorithm enters a refinement stage. It constructs a new, finer
grid within a small neighborhood surrounding the set of candidate
points that produced the minimum objective function value in the previous
stage. The resolution of this grid is determined by a smaller step
size, \texttt{Tol\_Step}. By re-evaluating the objective function
on this denser set of points, this stage seeks to improve upon the
initial estimate. If a new, lower minimum value is found, the process
repeats around this new set of points. This stage concludes when no
further improvement in the minimum of the objective function can be
achieved, indicating that the algorithm has converged to a local minimum
within the refined region.
\item \textbf{Boundary Identification.} The final stage aims to precisely
delineate the identified set of minimizing parameters. A very fine
grid is constructed in the immediate buffer zone \emph{around} the
set of optimal points found in the prior stage, while excluding the
interior of this set. This allows the algorithm to meticulously probe
the boundary of the solution set. Any points on this boundary that
also yield the minimum objective function value are added to the solution
set. This step repeats, further reducing the grid step size until
a pre-specified tolerance (\texttt{Tol}) is met and a clear separation
exists between the parameters that minimize the function and those
that do not. This ensures the final identified set is not only optimal
but also well-defined. 
\end{enumerate}
\begin{figure}
\caption{\label{fig:Algo}An Adaptive-Grid Algorithm}

\medskip{}

\noindent \centering{}\begin{tikzpicture}[scale=4,every node/.style={circle,inner sep=0pt,minimum size=0.3cm}]

\fill[fill=red!30] (1,-0.3) rectangle (-1,0.3);

\draw (-1, -0.3) node[below right,font=\fontsize{12pt}{0}]
{$ -\pi$};

\draw (1, -0.3) node[below=7pt, left,font=\fontsize{12pt}{0}] {$ \pi$};

\draw (-1, -0.3) node[above=8pt, left,font=\fontsize{10pt}{0}] {$ -\pi/2 $};

\draw (-1, 0.3) node[below=8pt, left,font=\fontsize{10pt}{0}] {$ \pi/2 $};


\draw (0, -0.3) node[below=2pt,font=\fontsize{10pt}{0}] {$ 0 $};

\draw [dashed, red!30] (1, -0.3) -- (2,-0.3);
\draw [dashed, red!30] (1, 0.3) -- (2,0.3);
\draw (2, -0.3) node[below,font=\fontsize{10pt}{0}] {$ 2\pi $};
\fill[fill=red!10] (1,0.3) rectangle (2,-0.3);

\fill[fill=blue!30] (-1,0.2) rectangle (-0.8,-0.2);

\fill[fill=blue!30] (0.2,0.2) rectangle (1,-0.2);

\fill[fill=blue!10] (1,0.2) rectangle (1.2,-0.2);

\fill[fill=blue!50] (0.5,0.1) rectangle (0.8,-0.1);

\draw (0.65, 0) node[font=\fontsize{8pt}{0}] {$ \Theta_0 $};
\end{tikzpicture}
\end{figure}
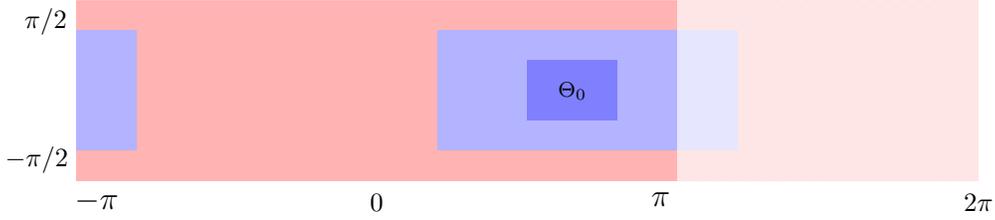

Figure \ref{fig:Algo} provides a graphical illustration of our adaptive-grid
algorithm to find the minimizer of the criterion function over the
parameter space $\T$. The figure corresponds to a re-parameterization
in the 2-dimensional angle space of a 3-dimensional unit ball, as
specified in equation (\ref{eq:Theta_hat_CHT07}) of the paper for
$D=3$. The pink area represents the initial angle space. The light
blue rectangle corresponds to $\underline{\Theta}^{(1)}$ defined
in (\ref{eq:Theta_ul1}). The dark blue rectangle represents the identified
set $\Theta_{0}$, which is the best we can hope for. The disconnectedness
of the light blue rectangle corresponds to Step 3 of the algorithm
presented in Section \ref{subsubsec:Comp}. This is largely a programming
nuisance: it means that if $\underline{\Theta}^{(1)}\not\subset\Theta^{(1)}$
due to the special structure of the angle space, we add $2\pi$ to
$\underline{\Theta}^{(1)}$ such that it is connected and lies within
$\Theta^{(1)}$. Note that adding $2\pi$ to the first coordinate
of $\t$ does not change the value of $\theta$ nor $\b$. 

\section{Counterfactual Analysis in Long Panels \label{sec:Counterfactual-Analysis}}

So far, we have focused on the identification and estimation of the
index parameter $\b_{0}$. While $\b_{0}$ itself contains rich information
about the unobserved preference of consumers, we are often also interested
in counterfactual parameters defined as some functional of not only
$\b_{0}$, but also other unknown components of the model. In this
section, we discuss how the estimate $\hat{\b}$ of $\b_{0}$, and
the computed indices based on $\hat{\b}$, can be used to estimate
more sophisticated counterfactual parameters.

\textcolor{black}{An important class of counterfactual parameters
concerns the prediction of counterfactual market shares (aggregate
choice probability) in the form of
\begin{equation}
\mu_{j}\left(\ol{{\bf X}}\right):=\int\E\left[\rest{y_{ijt}}{\bf X}_{it}=\ol{{\bf X}},{\bf A}_{i}\right]\text{d}\P\left({\bf A}_{i}\right),\label{eq:cf_mu_j}
\end{equation}
Demand elasticities may be further computed as $\Dif\mu_{j}\left(\ol{{\bf X}}\right)$,
which gives the marginal effect of an exogenous change in certain
observable characteristics on consumer choices.}\footnote{\textcolor{black}{It is important to note that in the expression of
$\mu$ we use the marginal distribution $\P\left({\bf A}_{i}\right)$
rather than the conditional distribution $\P\left(\rest{{\bf A}_{i}}{\bf X}_{it}=\ol{\X}\right)$.
This separation between} the exogenously imposed counterfactual $\ol{\X}$
and \textcolor{black}{the distribution of the unobserved ${\bf A}_{i}$
}is key to the interpretation of $\mu_{j}\left(\ol{\X}\right)$ as
the direct effect of \textcolor{black}{the exogenous change} in observable
characteristics $\X$ on choice probabilities, with the unobserved
heterogeneity ${\bf A}$ \textcolor{black}{unaffected by this exogenous
change held} \emph{fixed}.}

\textcolor{black}{To achieve this separation, we seek to identify
and estimate the integrand $\E\left[\rest{y_{ijt}}{\bf X}_{it}=\ol{{\bf X}},{\bf A}_{i}\right]$,
which is a function of $\b_{0}$. Conditional on $\bm{\d}_{it}=\ol{\X}^{'}\b_{0}=:\ol{\bm{\d}}$
and ${\bf A}_{i}$, by our model specification (\ref{eq:Model_PMC})
we have
\[
\E\left[\rest{y_{ijt}}\bm{\d}_{it}=\ol{\bm{\d}},{\bf A}_{i}\right]=\psi_{j}\left(\ol{\bm{\d}},\ {\bf A}_{i}\right)=:\psi_{ij}\left(\ol{\bm{\d}}\right).
\]
Here, an important observation is that although the individual heterogeneity
${\bf A}_{i}$ is not directly observable, the identity of $i$ is
observable. We can hold individual $i$ fixed in the regression to
control for ${\bf A}_{i}$ and only use variations in the data across
$t$ in a long panel setting to estimate the conditional choice probability.
Specifically, suppose we have long panels, i.e., $T\rightarrow\infty$.
Suppose the conditions in Appendix \ref{sec:App_PID} are also satisfied
so that $\d_{0}$ is point identified. We can identify $\psi_{ij}\text{\ensuremath{\left(\ol{\bm{\d}}\right)}}$
for each fixed $i$ and product $j$ and estimate it via nonparametric
regression of $y_{ijt}$ on $\text{\ensuremath{\left(\d_{i1t},\ldots,\d_{iJt}\right)}}$
across $t=1,\ldots,T$, under suitable stationarity and weak dependence
conditions on the error terms. See, for example, \citet*{su2012sieve}
for a sieve estimator of functional fixed effects in a long panel
setting.}

We are now in the position to estimate the counterfactual \textcolor{black}{market
shares of product $j=1,\ldots,J$ at any counterfactual $\ol{{\bf X}}$,
where $\ol{{\bf X}}$ is a $D\times J$ matrix. First, we use the
estimated $\hat{\b}$ to compute the counterfactual index $\hat{\bm{\d}}\left(\cd\right)$
evaluated at $\ol{{\bf X}}$, i.e., $\hat{{\bf \bm{\d}}}\left({\bf \ol X}\right)=\ol{\X}^{'}\hat{\b}.$
Then, we obtain $\hat{\psi}_{ij}\left(\hat{{\bf \bm{\d}}}\left({\bf \ol X}\right)\right)$
by plugging $\hat{{\bf \bm{\d}}}\left({\bf \ol X}\right)$ into the
nonparametric estimate $\hat{\psi}_{ij}\left(\cd\right)$ for each
fixed $i$ and product $j$. Finally, we estimate $\mu_{j}\left(\ol{{\bf X}}\right)$
for product $j$ by averaging over individuals in the sample, i.e.,
$\mu_{j}\left(\ol{{\bf X}}\right)=\frac{1}{N}\sum_{i=1}^{N}\hat{\psi}_{ij}\left(\hat{{\bf \bm{\d}}}\left({\bf \ol X}\right)\right)$.
The last step corresponds to the integration with respect to the probability
measure of $A$ in (}\ref{eq:cf_mu_j}\textcolor{black}{).}

\section{\label{sec:App_ADDSIM}Additional Simulation Results}

\subsection{Adaptive-Grid Computation Algorithm}

In this section, we illustrate a typical output of our second-step
computation algorithm based on the adaptive-grid search over the angle
space, and show that the algorithm works well. For this purpose we
consider a simplified DGP without fixed effect $A_{ij}$. We draw
each of $X_{ijt,d}$ independently across each dimension $d$ from
the standard normal distribution, and set the distribution of the
idiosyncratic shock to be $\epsilon_{ijt}\sim TIEV\left(0,1\right)$,
so that we can skip the first-step estimation and directly compute
the true conditional choice probability. Note that the conditions
for point identification of $\b_{0}$ are satisfied. Because we are
only seeking to illustrate the validity of the algorithm itself, we
set $N$ to be large with $N=10^{7}$ and fix $D=3,\ J=3,\ \text{and }T=2$.
Then, we apply our adaptive-grid algorithm to search for $\beta_{0}$.
\begin{figure}
\begin{centering}
\includegraphics[scale=0.7]{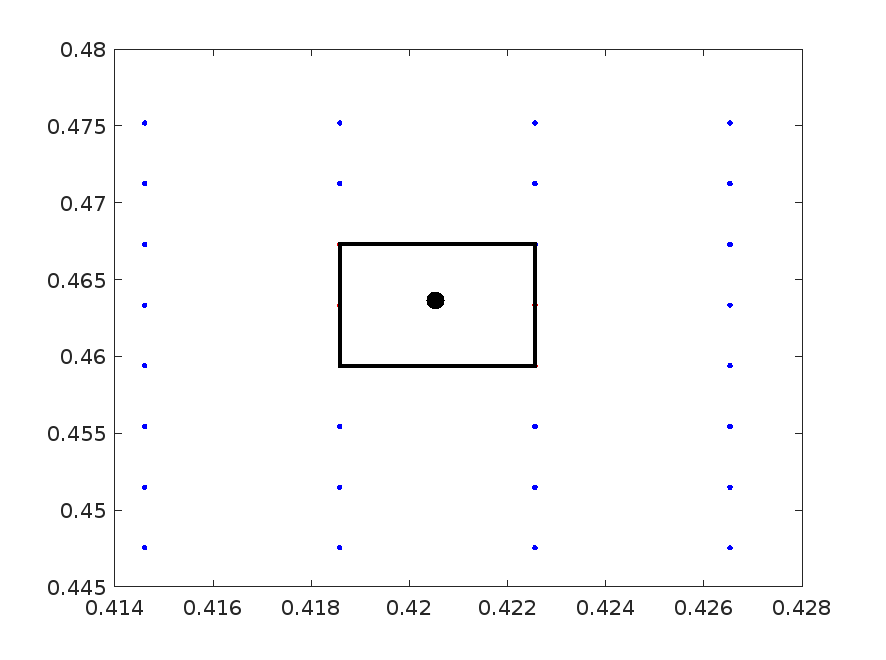}
\par\end{centering}
\caption{The Argmin Set in $\protect\T$\label{fig:OracleTheta}}
\end{figure}

Figure \ref{fig:OracleTheta} shows how our computational algorithm
works in finding the true unknown $\theta_{0}$, the angle representation
of the true $\b_{0}$ in the $\Theta$ space. The horizontal and vertical
axes correspond to the two polar coordinates that are associated with
$\S^{2}$. The blue dots represent the points that our algorithm searches
over but find \textit{not} to be minimizers of the sample criterion
$\hat{Q}$. The black box indicates the area that the minimizers for
the sample criterion $\hat{Q}$ lie within, or more precisely, a rectangular
enclosure of the numerical argmin set. The big black dot stands for
the true parameter value $\theta_{0}=\left(0.4205,0.4636\right)^{'}$.

It is evident from Figure \ref{fig:OracleTheta} that our adaptive-grid
algorithm is able to correctly locate an area that covers the true
$\theta_{0}$, which lies within the small black box representing
the estimated set of $\hat{\theta}$, demonstrating the efficacy of
the algorithm. Besides, it is worth mentioning that our algorithm
computes reasonably fast, as it first performs a rough search on the
whole unit sphere $\S^{2}$, then focuses on the area where the minimizers
are most likely to lie. In the last few rounds of search, the algorithm
evaluates the criterion function $\hat{Q}$ on a relatively small
area of points shown by those blue and red dots in Figure \ref{fig:OracleTheta}
until the desired level of accuracy is achieved.

For a more transparent representation, we translate the angles $\theta$
in the polar coordinates into unit vectors $\b$ on the unit sphere
$\S^{2}$ and show it in Figure \ref{fig:OracleBeta}, which is now
plotted on $\S^{2}\subseteq\R^{3}$. Again the blue dots represent
the points that do not achieve the minimum of $\hat{Q}$; the black
box shows an enclosing set of the minimizers of $\hat{Q}$. The big
black dot represents the true parameter value $\b_{0}$, which resides
inside the black box of the minimizers of $\hat{Q}$. Figure \ref{fig:OracleBeta}
illustrates that our computation algorithm is able to locate a tight
area around $\b_{0}$.
\begin{figure}
\begin{centering}
\includegraphics[scale=0.7]{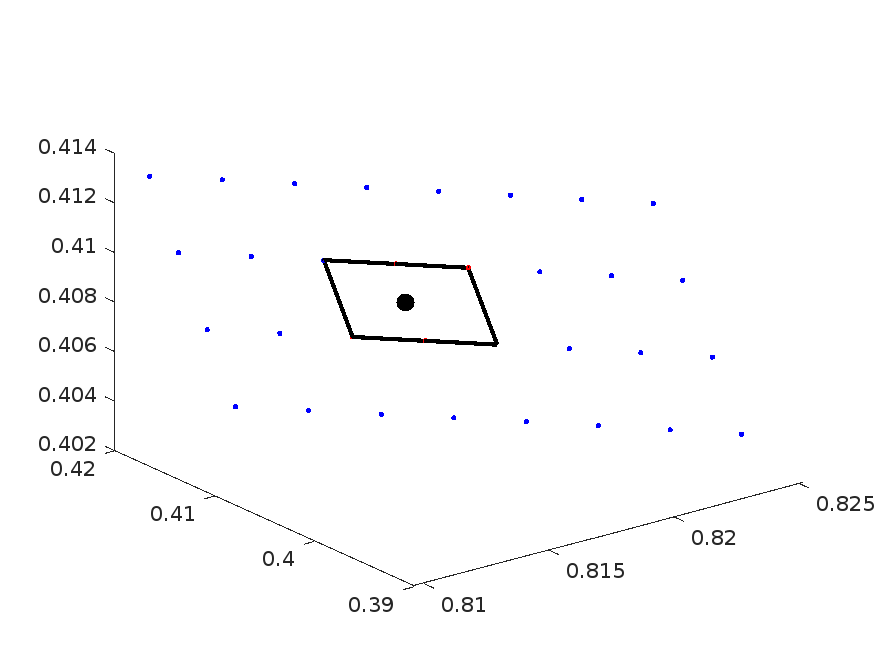}
\par\end{centering}
\caption{The Argmin Set in $\protect\S^{2}$\label{fig:OracleBeta}}
\end{figure}

\subsection{Estimation without Point Identification}

We now investigate the performance of our estimator when point identification
fails. To make things comparable, we fix $\left(N,D,J,T\right)$ as
in the baseline case, but modify the configuration in two different
ways. We maintain the point identification in one setting but lose
the point identification in the other. Specifically, we set $Z_{i}\sim\mathcal{U}\left[-\sqrt{3},\sqrt{3}\right]$,
$X_{ijt,1}\sim\mathcal{U}\left[-1,1\right]$, $X_{ijt,2}=Z_{i}+\mathcal{N}\left(0,6\right)$,
and $X_{ijt,3}\sim\mathcal{N}\left(0,1\right)$ for the point identified
case. For the DGP without point identification, we let $Z_{i}\sim\mathcal{U}\left[-\sqrt{3},\sqrt{3}\right]$,
$P(X_{ijt,1}=1/2)=P(X_{ijt,1}=-1/2)=0.5$, $X_{ijt,2}=Z_{i}+\mathcal{U}\left[-\sqrt{6},\sqrt{6}\right]$,
and $P(X_{ijt,3}=1/2)=P(X_{ijt,3}=-1/2)=0.5$. The construction of
$A_{ij}$ is the same as in the DGP for the baseline results in Table
\ref{tab:Baseline-Estimation-Performance}. We deliberately control
the location and scale of each variable to be comparable across the
two configurations, with the only differences being the presence of
discreteness and boundedness of supports. When point identification
fails, we compute the set estimator $\hat{B}_{\hat{c}}$ of \eqref{eq:Theta_hat_CHT07}
with $\hat{c}>0$. Table \ref{tab:PerfIDFurther} contains simulation
results under the two configurations, with different choices of $\hat{c}$
when point identification fails.\footnote{Specifically, noting that $c_{N}\log N\leq N^{-1/4}\log N\approx0.92\leq1$
for $N=10,000$, we set $\hat{c}=0.01,$ $0.1$ and $1$, respectively.}

\begin{table}
\caption{Performance with and without Point ID: Further Examination \label{tab:PerfIDFurther}}

\bigskip{}

\noindent \centering{}%
\begin{tabular}{cccccccc}
\toprule 
\multirow{2}{*}{point ID ?} & \multirow{2}{*}{$\hat{c}$} & \multicolumn{3}{c}{$\phantom{\frac{\frac{1}{1}}{\frac{1}{1}}}$rMSE} & \multicolumn{3}{c}{MND}\tabularnewline
\cmidrule{3-8} \cmidrule{4-8} \cmidrule{5-8} \cmidrule{6-8} \cmidrule{7-8} \cmidrule{8-8} 
 &  & $\phantom{\frac{\frac{1}{1}}{\frac{1}{1}}}$$\hat{\b}^{m}$ & $\hat{\b}^{u}$ & $\hat{\b}^{l}$ & $\hat{\b}^{m}$ & $\hat{\b}^{u}$ & $\hat{\b}^{l}$\tabularnewline
\midrule
$\phantom{\frac{\frac{1}{1}}{\frac{1}{1}}}$ \textbf{(i)} yes & $\phantom{\frac{\frac{1}{1}}{\frac{1}{1}}}$- & 0.0574 & 0.0636 & 0.0665 & 0.0501 & 0.0577 & 0.0602\tabularnewline
\midrule
\multirow{3}{*}{$\phantom{\frac{\frac{1}{1}}{\frac{1}{1}}}$\textbf{(ii)} no} & $\phantom{\frac{\frac{1}{1}}{\frac{1}{1}}}$0.01 & 0.0624 & 0.0657 & 0.0711 & 0.0536 & 0.0590 & 0.0630\tabularnewline
 & $\phantom{\frac{\frac{1}{1}}{\frac{1}{1}}}$0.1 & 0.0521 & 0.0820 & 0.0966 & 0.0444 & 0.0782 & 0.0904\tabularnewline
 & $\phantom{\frac{\frac{1}{1}}{\frac{1}{1}}}$1 & 0.0476 & 0.2198 & 0.2631 & 0.0458 & 0.2190 & 0.2619\tabularnewline
\bottomrule
\end{tabular}
\end{table}

Across rows in (i) and (ii), we see that the lack of point identification
does negatively affect the performance of our estimates, but the impact
is limited to a moderate degree. Within rows in (ii), we observe that,
as expected, a more conservative choice of the constant $\hat{c}$
worsens performance of the upper and lower bounds by enlarging the
estimated sets; meanwhile, it appears that the size and the performance
of our estimator based on $\hat{\b}^{m}$ is not sensitive to the
choice of $\hat{c}$.

\subsection{Results Varying $D,J,T$}

In this section, we show how our estimator performs under different
$\left(D,J,T\right)$. We maintain $N=10,000$ as in the baseline
configuration. We draw $Z_{i}\sim\mathcal{N}\left(0,1\right)$ and
construct $A$ and $X$ according to the following specifications:
\begin{align*}
A_{ij}\sim\begin{cases}
0, & j=1,\\
\left[Z_{i}\right]_{+}, & j=2,\\
\mathcal{U}\left[-0.25,0.25\right], & j=3,\ldots,J,
\end{cases} & X_{ijt,d}\sim\begin{cases}
U\left[-1,1\right], & d=1,\\
Z_{i}+\cN\left(0,6\right), & d=2,\\
\cN\left(0,1\right), & d=3,\ldots,D,
\end{cases}
\end{align*}
which coincides with the baseline model at $D=3,\ J=3$. We emphasize
that in all configurations we allow for nonlinear dependence between
$A$ and $X$ via $Z$.

We report in Table \ref{tab:PerfVaryDJT} the performance of our estimators
for each of the corresponding configurations across all $M=1,000$
simulations.

\begin{table}[h]
\caption{Performance Varying $D,J,T$\label{tab:PerfVaryDJT}}

\bigskip{}

\centering{}%
\begin{tabular}{ccccc}
\toprule 
\multirow{2}{*}{rMSE} & \multicolumn{2}{c}{$\phantom{\frac{\frac{1}{1}}{\frac{1}{1}}}$ $J=3$} & \multicolumn{2}{c}{$\phantom{\frac{\frac{1}{1}}{\frac{1}{1}}}$ $J=4$}\tabularnewline
\cmidrule{2-5} \cmidrule{3-5} \cmidrule{4-5} \cmidrule{5-5} 
 & $\phantom{\frac{\frac{1}{1}}{\frac{1}{1}}}$ $T=2$ & $T=4$ & $\phantom{\frac{\frac{1}{1}}{\frac{1}{1}}}$ $T=2$ & $T=4$\tabularnewline
\midrule
$D=3$ & 0.0594 & 0.0411 & 0.1321 & 0.0982\tabularnewline
$D=4$ & 0.0725 & 0.0497 & 0.1384 & 0.1091\tabularnewline
\midrule
\multirow{2}{*}{MND} & \multicolumn{2}{c}{$\phantom{\frac{\frac{1}{1}}{\frac{1}{1}}}$ $J=3$} & \multicolumn{2}{c}{$\phantom{\frac{\frac{1}{1}}{\frac{1}{1}}}$ $J=4$}\tabularnewline
\cmidrule{2-5} \cmidrule{3-5} \cmidrule{4-5} \cmidrule{5-5} 
 & $\phantom{\frac{\frac{1}{1}}{\frac{1}{1}}}$ $T=2$ & $T=4$ & $\phantom{\frac{\frac{1}{1}}{\frac{1}{1}}}$ $T=2$ & $T=4$\tabularnewline
\midrule
$D=3$ & 0.0522 & 0.0363 & 0.1154 & 0.0860\tabularnewline
$D=4$ & 0.0660 & 0.0475 & 0.1264 & 0.1003\tabularnewline
\bottomrule
\end{tabular}
\end{table}

From Table \ref{tab:PerfVaryDJT} we find a larger $T$ improves the
performance of our estimator, which is arguably more practically relevant
given the increasing availability of long panel data. The improvement
in performance with larger $T$ is because our method can extract
more information from $T\times\left(T-1\right)$ ordered pairs of
time periods which effectively increase the total number of observations.
We also find that increase in $D$ or $J$ adversely affects the performance
of our estimator, which is expected because more information is required
to estimate more covariates or deal with more alternatives. For example,
when $J$ is 3 and $T$ is 4, an increase in the dimension of product
characteristics $D$ from 3 to 4 increases the rMSE from 0.0411 to
0.0497. The change in performance for increasing $J$ is more significant.
For instance, when $D=4$ and $T=4$, an increase in $J$ from 3 to
4 increases the MND from 0.0475 to 0.1003.

\subsection{Robustness Against the Blue-Bus/Red-Bus Problem\label{subsec:Blue-Bus/Red-Bus}}

We now demonstrate the robustness of our method against the Blue-Bus/Red-Bus
problem, or more precisely, a case where the product set includes
highly substitutable alternatives. 

Specifically, we modify the DGP in our baseline simulation by increasing
$J$ from 3 to 4, and introduce the fourth product as an indistinguishable
duplicate of the third one: they share the same observable and unobservable
characteristics, including the observable characteristics $X$, unobserved
fixed effect $A$, and the unobserved error term $\e$. We equally
split the market share between $j=3$ and $j=4$. 

Using the above DGP, we compare our method's performance to that of
the RCLM estimator. For the RCLM estimator, we simulate $M=1,000$
random coefficients $\b_{i}=\b+\eta_{i}$ where $\eta_{i}\sim N\left(0,I_{D}\right)$
for the consumers with heterogeneous preferences, where $I_{D}$ is
a $D\times D$ identity matrix. The predicted market shares are thus
computed as 
\begin{align}
\s_{jt}\left(X_{i\cd t};\b\right) & =\E\left[\rest{\ind\left\{ U_{ijt}>\max_{k\neq j}U_{ikt}\right\} }X_{i\cd t}\right]\nonumber \\
 & =\int\dfrac{\exp\left(X_{ijt}^{'}\b+X_{ijt}^{'}\eta_{i}\right)}{\sum_{k=1}^{J}\exp\left(X_{ikt}^{'}\b+X_{ikt}^{'}\eta_{i}\right)}dF_{\eta}\left(\eta_{i}\right)\nonumber \\
 & \simeq\dfrac{1}{M}\sum_{m=1}^{M}\dfrac{\exp\left(X_{ijt}^{'}\b+X_{ijt}^{'}\eta_{i}^{\left(m\right)}\right)}{\sum_{k=1}^{J}\exp\left(X_{ikt}^{'}\b+X_{ikt}^{'}\eta_{i}^{\left(m\right)}\right)}.\label{eq:rclm eq}
\end{align}
For each round of simulation, we compute predicted market shares by
numerically integrating over the random taste distribution over $M$
agents. We then choose the candidate $\b$ over a fine grid of $L=10,000$
points from $\S^{D-1}$ that achieves the smallest distance in the
Euclidean norm between the predicted and observed market shares as
$\hat{\b}$. We use $N=1,000$, $D=3$, $J=4$, $T=2$, and $B=$1,000
for this simulation exercise. 

Table \ref{tab:blue-bus-red-bus} summarizes the results. The sum
of absolute biases of $\hat{\b}$ under our method is 0.0452, which
is markedly lower than that of the RCLM estimator (0.8689). The average
size of the estimated sets is 0.0585 for our method. Since, by construction,
the RCLM estimator is a point estimator, its average size is zero
across simulations. Finally, the rMSE and MND are significantly lower
for our estimator than for the RCLM. The results show that, in the
presence of highly similar products (with highly correlated unobserved
heterogeneity), the RCLM estimator cannot recover the structural parameters
accurately. In contrast, our method is robust to arbitrary dependence
structures in the unobserved heterogeneity terms across products,
leading to more accurate estimates of the structural parameters.

\begin{table}
\caption{Illustration of the \textquotedblleft Blue-Bus/Red-Bus\textquotedblright{}
Problem\label{tab:blue-bus-red-bus}}

\bigskip{}

\noindent \centering{}%
\begin{tabular}{ccccc}
\toprule 
\multirow{2}{*}{} & \multirow{2}{*}{$\sum_{d}\left|\text{bias}_{d}\right|$} & \multirow{2}{*}{$\sum_{d}\text{mean(u-l)}_{d}$} & \multirow{2}{*}{\textcolor{black}{$\text{rMSE}$}} & \multirow{2}{*}{MND}\tabularnewline
 &  &  &  & \tabularnewline
\midrule
Current Paper & 0.0452 & 0.0585 & \textcolor{black}{0.1389} & 0.1210\tabularnewline
RCLM & 0.8689 & - & \textcolor{black}{0.6047} & 0.6042\tabularnewline
\bottomrule
\end{tabular}
\end{table}

\end{document}